\newtheorem{thm}{Theorem}
\newtheorem{prop}[thm]{Proposition}
\newtheorem{lemma}[thm]{Lemma}
\newtheorem{cor}[thm]{Corollary}
\newtheorem{dfn}[thm]{Definition}
\newtheorem{rmk}[thm]{Remark}
\newcommand{\g}{\mathfrak{g}}
\newcommand{\cR}{\mathcal{R}}
\newcommand{\bW}{\textbf{W}}
\newcommand{\eti}[1]{e_\otimes^{i{#1}/\hbar}}
\newcommand{\eps}{\varepsilon}
\newcommand{\hs}{s}
\newcommand{\wrt}{with respect to }
\newcommand{\cO}{{\mathcal{O}}}
\newcommand{\ia}{{\mathrm{int}}}
\newcommand{\bF}{{\mathbf{F}}}
\newcommand{\nn}{\nonumber}
\newcommand{\beq}{\begin{equation}}
\newcommand{\eeq}{\end{equation}}
\newcommand{\defeq}{\mathrel{:=}}
\renewcommand{\vec}[1]{{\ifnum9<1#1\mathbf{#1}\else\ifcat\noexpand#1\relax\boldsymbol{#1}\else\mathbfi{#1}\fi\fi}}
\let\oldre\Re
\renewcommand{\Re}{\oldre\mathfrak{e}\,}
\renewcommand{\ker}{\text{Ker}\,}
\renewcommand{\Im}{\text{Im}\,}
\newlength{\dhatheight}
\newcommand{\eqos}{\mathrel{\approx}}
\DeclareMathOperator{\Ker}{Ker}
\DeclareMathOperator{\supp}{supp}
\begin{document}
\title{Quantum BRST charge in gauge theories in curved space-time}
\author{Mojtaba Taslimi Tehrani}
\affil{Max-Planck Institute for Mathematics in the Sciences\\
Inselstr. 22
D-04103 Leipzig Germany\\
Institut f\"ur Theoretische Physik, Universit\"at Leipzig\\ Br\"uderstr.\ 16, 04103 Leipzig, Germany\\
 \small{Mojtaba.Taslimitehrani@mis.mpg.de}}
\maketitle
\abstract{Renormalized gauge-invariant observables in gauge theories form an algebra which is obtained as the cohomology of the derivation $[\textbf{Q}_L, -]$ with $\textbf{Q}_L$ the renormalized interacting quantum BRST charge. For a large class of gauge theories in Lorentzian globally hyperbolic space-times, we derive an identity in renormalized perturbation theory which expresses the commutator $[\textbf{Q}_L, -]$ in terms of a new nilpotent quantum BRST differential and a new quantum anti-bracket  which differ from their classical counterparts by certain quantum corrections. This identity enables us to prove different manifestations of gauge symmetry preservation at the quantum level in a model-independent fashion.}

\newpage
\tableofcontents

\newpage
\section{Introduction}

Quantum field theories with local gauge symmetry play a crucial role in our understanding of elementary particle physics by describing the type of interactions between them. The quantum aspects of such theories in flat space-time have been extensively studied.  To describe the elementary particles in the Early Universe where the curvature of space-time is not negligible, one needs to extend the framework of flat space gauge theories to the curved space setting. In \cite{Hollands:2007zg}, it was shown that the renormalized quantum Yang-Mills theories in an arbitrary, Lorentzian, globally hyperbolic curved space-time can be consistently constructed to all orders in perturbation theory. However, the proof of the statements in that reference rests on the specific form of the pure Yang-Mills interaction. The present work aims to extend this result and to investigate in a more model-independent fashion the issue of symmetry preservation at the quantum level of renormalized quantum gauge theories in curved space-times. For concreteness, we work with the pure Yang-Mills theory, however our results rest only on a certain aspect of this theory, namely the absence of ``gauge anomaly'', which is also the case in a larger class of more complicated theories with local gauge symmetry. For instance, our results remain valid for superconformal Chern-Simons matter theory in 3 dimensions \cite{Taslimi-ABJM},  a class of superconformal gauge theories in 4 dimensions \cite{deMedeiros:2013mca}, and perturbative quantum gravity \cite{barnich1995general}, \cite{Brunetti2016}.

For perturbative quantization of such theories, one necessarily has to ``fix the gauge'' which, however, breaks the gauge invariance of the underlying theory. To restore gauge invariance in the BV-BRST formalism, one enlarges the field configurations to include certain ``ghost fields''. One, furthermore, constructs a ``gauge-fixed''  and enlarged action $\hat{S}$ by requiring it to be a solution to the \emph{master equation} $(\hat{S}, \hat{S})=0$. Here, $(-,-)$ is the so-called \emph{anti-bracket} which satisfies a graded Jacobi identity. The gauge-fixed action enjoys the BV-BRST symmetry $\hat{\hs}$, which is a nilpotent derivation i.e., satisfies $\hat{\hs}^2=0$. Finally, the gauge invariant observables of the original theory are recovered as the $\hat{\hs}$-cohomology at ghost number $0$.  

In flat space-time, the quantization of such theories is conventionally performed in one of the following approaches: 
\begin{itemize}
\item \emph{The Hamiltonian approach}: One first constructs the Fock space corresponding to the (free) gauge-fixed theory which necessarily is an indefinite inner product space. One then defines \cite{Curci1976, Kugo:1979gm} the physical Hilbert space with a positive definite inner product as the cohomology of the BRST charge $\hat{Q}$ which is an operator on the Fock space. For this construction to work, and for the matrix elements of the Hamiltonian operator between physical states to be independent of the chosen gauge-fixing, the BRST charge has to be nilpotent, i.e., 
\beq
\hat{Q}^2=0.
\eeq
In this respect, it is argued that \cite{henneaux1992quantization} ``the nilpotency of $\hat{Q}$ is the quantum expression of the gauge invariance''. Usually, one has to check the nilpotency of $\hat{Q}$ in a case-dependent procedure, and this turns out to be a highly non-trivial task involving regularization and renormalization of the composite operator $\hat{Q}^2$. 

\item \emph{The functional integral approach}: One defines the quantized gauge theory in terms of an effective action $\Gamma = \hat{S} + O(\hbar)$ which is the generating functional of one-particle irreducible Feynman diagrams. Gauge invariance at the quantum level is then expressed by the ``Slavnov-Taylor identity'' in the ``Zinn-Justin'' form \cite{Zinn-Justin} 
\beq\label{Zinn-Justin-eq}
(\Gamma, \Gamma)=0,
\eeq
which for $\hbar=0$ is reduced to the master equation. The graded Jacobi identity of the anti-bracket, in turn, implies that the potential obstruction to \eqref{Zinn-Justin-eq} (the ``gauge anomaly'') satisfies a consistency condition of a cohomological nature, namely that (the leading $\hbar$-order coefficient of) the gauge anomaly belongs to the cohomology ring $H_1(\hat{\hs}|d,M)$ of $\hat{\hs}$ modulo $d$ at ghost number $1$. For theories in which this cohomology ring is trivial, \eqref{Zinn-Justin-eq} is shown to be fulfilled by finite renormalization order by order in $\hbar$.  

\end{itemize}

Contrary to the flat space-time setting where the quantum fields can be represented as operators on a preferred Hilbert space containing the unique Poincar\'{e}-invariant vacuum state, in a generic (globally hyperbolic) curved space-time there is no preferred vacuum state and hence no canonical Hilbert space representation of the theory. 
We therefore employ the framework of locally covariant quantum field theory \cite{Brunetti:1995rf}\cite{Brunetti:1999jn},\cite{Hollands:2001nf}, \cite{Hollands:2001fb}, \cite{Hollands:2002ux} (see \cite{Hollands:2014eia}  for a recent review) to study such theories. In this framework, one formulates the QFT coherently on \emph{all} space-times and views the renormalized  \emph{interacting quantum fields} $\mathcal{O}_L$, under interaction $L$, as elements of an abstract algebra, which can be constructed in perturbation theory. Within this algebra, one defines the algebra of physical, gauge invariant observables as the cohomology of the derivation $[Q_L, -]$ generated by the renormalized quantum BRST charge $Q_L$. Upon a choice of a physical state \footnote{This non-canonical choice of representation is related to the physical questions one wants to study and is not discussed here.} and under certain technical conditions on the background space-time, this cohomology algebra turns out to admit a positive definite Hilbert space representation if $Q_L^2=0$ \cite{dutsch1999local, Hollands:2007zg}. 

Our main result in the present work is proving that if the cohomology ring $H_1(\hat{\hs}|d,M)$ is trivial, then $Q_L^2=0$, and thus relating the above two criterion of gauge invariance at the quantum level. 
We prove this by first showing that under this cohomological condition, the derivation $[Q_L, -]$ is nilpotent (Theorem \ref{QL2=0}), and hence the algebra of gauge invariant observables can indeed be defined as the cohomology of this derivation. Then, it follows from the graded Jacobi identity of the commutator that $Q_L^2 = \frac{1}{2}[Q_L, Q_L]$ vanishes. Thus, the problem of proving the nilpotency of the renormalized charge is reduced to an algebraic problem of a cohomological nature. Our proof is a significant improvement in the state of affairs over \cite{Hollands:2007zg}. There, the nilpotency of the charge was shown to hold based on a case-dependent proof which requires, in addition to the triviality of $H_1(\hat{\hs}|d,M)$, the precise form of the current of pure Yang-Mills theory and certain identities derived from it, as well as the triviality of a higher cohomology class which seem to hold only for this specific theory. 

The key identity that we derive in this part of the work, which forms the basis of the proof of our main result, is called the \emph{interacting anomalous Ward identity} (Theorem \ref{[QL, OL]}). It is a master identity for the commutator of the quantum BRST charge $Q_L$ and the generating functional of the renormalized interacting time-ordered products $T_{L,n}(\mathcal{O}_1 \otimes \dots  \otimes \mathcal{O}_n)$. Here $\mathcal{O}_1, \dots, \mathcal{O}_n$ are local fields and $T_{L,1}(\mathcal{O})= \mathcal{O}_L$. When evaluated in a state, such expressions give the renormalized time-ordered correlation functions of the theory. For one local field $\cO$, this identity will give
\begin{align} \label{[QL,O]}
[Q_L, \mathcal{O}_L] &= i \hbar (\hat{q} \mathcal{O})_L,
\end{align}
where $\hat{q} \mathcal{O} = \hat{\hs} \mathcal{O} +O(\hbar)$ is called the \emph{quantum BRST operator} \eqref{hat-q}
 which is nilpotent, i.e., $\hat{q}^2=0$. 
For two local fields $\cO_1, \cO_2$, we will obtain
\begin{align}\nn
[Q_L, T_{L,2}(\mathcal{O}_1\otimes \mathcal{O}_2)] = i \hbar T_{L,2}\big( \hat{q} \mathcal{O}_1 \otimes \mathcal{O}_2 + (-1)^{\eps_1} \mathcal{O}_1\otimes \hat{q} \mathcal{O}_2 \big) + (-1)^{\eps_1} \hbar^2 \big({(\mathcal{O}_1, \mathcal{O}_2)}_\hbar \big)_L,\\ \label{QL,O1O2}
\end{align}
where $\eps_1$ is the Grassmann parity of $\cO_1$, and where $(\mathcal{O}_1,\mathcal{O}_2)_\hbar = (\mathcal{O}_1,\mathcal{O}_2) + O(\hbar)$ is called the \emph{quantum anti-bracket} \eqref{q-anti-bracket} which measures the failure of $\hat{q}$ to be a derivation. It is compatible with $\hat{q}$, in the sense that $\hat{q} {( \mathcal{O}_1 , \mathcal{O}_2 )}_\hbar = {(\hat{q}\mathcal{O}_1, \mathcal{O}_2 )}_\hbar - (-1)^{\eps_1} {(\mathcal{O}_1, \hat{q}\mathcal{O}_2 )}_\hbar$, and satisfies a \emph{quantum Jacobi identity} \eqref{An-Jacobi-ferm}. 
\subsubsection*{Notations}

In the body of the paper, we encounter local fields $\mathcal{O} = \mathcal{O}_0 + \lambda \mathcal{O}_1 + \lambda^2 \mathcal{O} + \dots$ which are $p$-forms expanded into powers of the coupling constant $\lambda$. For the integrated operators we use $F= \int  \mathcal{O}_0 + f\lambda \mathcal{O}_1 + f^2 \lambda^2 \mathcal{O} + \dots$ where $f \in \Omega_0^{4-p}(M)$ is an IR cutoff which is equal $1$ in some region of interest. We symbolically write all such expressions as $F= \int f \mathcal{O}$. For the particular case of $\textbf{L}_{\text{int}}$, the interaction Lagrangian, we denote the ``cutoff interaction'' with $L= \int f \textbf{L}_{\text{int}}$, and the ``true interaction'' with $I = \int \textbf{L}_{\text{int}}$. 
We always write the interacting BRST charge with the cutoff interaction $\textbf{Q}_L$, and avoid using $\textbf{Q}_I$ (which can be defined as the algebraic adiabatic limit of $\textbf{Q}_L$ at the end of calculations). 
Moreover, in many places, we write $\mathcal{O}_1 \otimes \dots \otimes \mathcal{O}_n$ as a short form for $\mathcal{O}_1(x_1) \otimes \dots \otimes \mathcal{O}_n(x_n)$.
Finally, we express all the generating functional identities only for bosonic functionals and explain in appendix \ref{app1} how the correct signs for fields with arbitrary Grassmann parity in such identities can be obtained. 
\section{Classical gauge theory}\label{Classical}
We take for definiteness the example of the pure Yang-Mills theory on a globally hyperbolic space-time $(M,g)$ worked out in \cite{Hollands:2007zg}. It turns out that, the results of this work can be generalized to all theories with local gauge symmetry, such as superconformal Yang-Mills theory \cite{deMedeiros:2013mca}, and superconformal Chern-Simons-matter theory \cite{Taslimi-ABJM} in which a certain cohomology class is trivial. Here, we briefly review the setting for the classical theory. 

The classical Yang-Mills theory with gauge group $G$ is the dynamical theory of a $G$-gauge connection $\mathcal{D}= \nabla + i\lambda A$, where $\nabla$ is the Levi-Civita connection on $(M,g)$ and $A$ is a $\mathfrak{g}$-valued one form. The action functional is given by
\begin{equation}
S_{\text{YM}} = - \frac{1}{2} \int_M \text{tr} (F \wedge *F),
\end{equation}
where $F$ is the curvature of $\mathcal{D}$.
For the purpose of perturbative quantization, the equations of motion for $A$ generated by the action have to be of hyperbolic type. However, this is not the case for $S_{\text{YM}}$; one has to fix the gauge in order to render the free field equations hyperbolic. The resulting gauge-fixed theory enjoys the BRST symmetry $\hat{s}$, if one augments the field content of the theory by further dynamical fields (ghosts) and non-dynamical fields (anti-fields), as introduced below.

Let $\{T_I\}$, $I= 1, 2, \dots, \text{dim}\mathfrak{g}$ be a basis for the Lie algebra $\mathfrak{g}$ of the Lie group $G$. Relative to this basis, we have $A= A^I T_I = A^I_\mu T_I dx^\mu$. 
Let us denote the set of all dynamical fields by $\Phi=(A^I, C^I, \bar{C}^I, B^I)$, where $C, \bar{C}$ are called ghosts and $B$ is an auxiliary field with algebraic equations of motion, and their corresponding anti-fields by $\Phi^\ddag = (A_I^{\ddag}, C_I^{\ddag}, \bar{C}_I^{\ddag}, B_I^\ddag)$. The action of the BRST differential $\hat{s}$ on all fields is given by:
\begin{align}
& \hat{s} A_\mu^I = D_\mu C^I, \hspace{4 mm} \hat{s} C^I = - \frac{i \lambda}{2} {f^I}_{JK} C^J C^K,\hspace{4 mm} \hat{s} \bar{C}^I  = B^I,\hspace{4 mm} \hat{s} B^I  = 0.
\end{align}
One can now assign a ``ghost number'' to all the above fields which is given in table \ref{table1}. The ghost number defines a grading on the space of all fields, and the BRST differential increases the ghost number by one unit while leaves the dimension unchanged.
\begin{table}[ht]
\begin{center}
    \begin{tabular}{   | l | l | l | l | l | l | l | l | p{0.4 cm}  | }
    \hline
    \textit{Fields} & $A^I$ &  $C^I$ & $\bar{C}^I$ &$B^I$ & $A_I^{\ddag}$ & $C_I^{\ddag}$ & $\bar{C}_I^{\ddag}$ & $B_I^\ddag$ \\ \hline \hline 
    Dimension & 1 & 0 &   $ 2 $& 2 & 3&4 &2 &2 \\ \hline
   Ghost number & 0 & 1 &   -1 & 0 & -1 & -2 & 0 & -1\\ \hline
   Grassmann parity & 0 & 1 &   1 & 0 &  1& 0 & 0 & 1 \\ \hline
    \end{tabular}
    \caption{Basic fields and their data.} \label{table1}
  \end{center}
    \end{table}
To define how $\hat{s}$ acts on the anti-fields, consider the following extended action
\begin{equation}
\hat{S} = S_{\text{YM}} + \hat{s} \psi - \int \hat{s} \Phi \cdot \Phi^\ddag,
\end{equation} 
where $\psi$ is the ``gauge-fixing fermion''. It is chosen in such a way that $\hat{S}$ gives rise to hyperbolic field equations for all fields $\Phi$. A conventional choice for $\psi$ is $\psi= \int_M \bar{C}_I(\nabla^\mu A_\mu^I + \frac{1}{2} B^I)$ which implements the Feynman gauge.
Now, for any observable $\mathcal{O}$ we define
\begin{equation}
\hat{s} \mathcal{O} = (\hat{S}, \mathcal{O}),
\end{equation}
where $(-,-)$ is the so-called \emph{anti-bracket} defined by
\begin{equation}\label{antibracket}
(\mathcal{O}_1, \mathcal{O}_2) := \int_M \frac{\delta_R \mathcal{O}_1}{\delta \Phi(x)}  \frac{\delta_L \mathcal{O}_2}{\delta \Phi^\ddag(x)} -  \frac{\delta_R \mathcal{O}_1}{\delta \Phi^\ddag(x)} \frac{\delta_L \mathcal{O}_2}{\delta \Phi(x)},
\end{equation}
cf. \cite{Rejzner:2011au} for a definition of left and right derivatives \wrt fields with Grassmann parity. The anti-bracket has the following symmetry property
\begin{equation}\label{graded-symm}
( \mathcal{O}_1 , \mathcal{O}_2 ) = (-1)^{(\eps_1 +1)( \eps_2+1)+1} ( \mathcal{O}_2 , \mathcal{O}_1 ).
\end{equation}
and satisfies the graded Jacobi identity
\begin{align}\nonumber
(-1)^{(\eps_1+1) (\eps_3+1)}  (\mathcal{O}_1, ( \mathcal{O}_2 , \mathcal{O}_3 )) &+ (-1)^{(\eps_2+1) (\eps_1+1)}  (\mathcal{O}_2, ( \mathcal{O}_3 , \mathcal{O}_1 ))\\ \label{class-Jacobi}
 &+(-1)^{(\eps_3+1) (\eps_2+1)}  (\mathcal{O}_3, ( \mathcal{O}_1 , \mathcal{O}_2 ))  =0.
\end{align}
From \eqref{graded-symm} and \eqref{class-Jacobi}, together with $(S, \mathcal{O}) =\hat{s} \mathcal{O}$, it follows that
\begin{align}\label{s-antibracket}
\hat{s} ( \mathcal{O}_1 , \mathcal{O}_2 ) =( \hat{s} \mathcal{O}_1 , \mathcal{O}_2 ) - (-1)^{\eps_1} ( \mathcal{O}_1 , \hat{s} \mathcal{O}_2 ). 
\end{align}
Note that in particular, it follows that $\hat{s} \Phi^\ddag(x) = \delta_R \hat{S}/ \delta \Phi(x)$ and $(\Phi_i(x), \Phi^{\ddag j}(y)) = {\delta_i}^j \delta(x,y)$. Moreover, from the definition of $\hat{S}$ we have
\begin{equation}
\hat{s} \hat{S} = (\hat{S}, \hat{S})= \hat{s}^2 = 0.
\end{equation}
\subsubsection*{Local and covariant fields}
\begin{dfn}\label{local-covariant}
Let $\mathcal{C}$ be the space of enlarged field configurations $(\Phi, \Phi^\ddag)$ together with the metric $g$.
\begin{enumerate}
\item Let $f : M' \rightarrow M$ be an isometric embedding which preserves the causal structures. 
A \textbf{local-covariant} functional $\mathcal{O}$ on $\mathcal{C}$ satisfies
\begin{equation}
f^* \mathcal{O} [g, \Phi, \Phi^\ddag] = \mathcal{O}[f^*g, f^* \Phi, f^* \Phi^\ddag].
\end{equation}

\item $\textbf{P}(M)= \oplus_{p,q} \textbf{P}^p_q(M)$, where each $\textbf{P}^p_q(M)$ is defined to be the space of all $\wedge^p(TM)$-valued polynomial, local and covariant functionals with ghost number $q$.
\end{enumerate}
\end{dfn}

There are two important theorems regarding the nature of $\textbf{P}^p_q(M)$. 
First, the \textit{Thomas replacement theorem} \cite{Iyer:1994ys}, which states that the dependence of every element $\mathcal{O} \in \textbf{P}^p_q(M)$ on the metric and, at each point $x\in M$, on $\Phi(x), \Phi^\ddag(x)$  is of the form
\begin{align}
\mathcal{O}&= \mathcal{O}\big( g_{\mu \nu}(x), {R^\mu}_{\nu \rho \sigma}(x), \dots, \nabla_{(\mu_1 \dots} \nabla_{\mu_k)} {R^\mu}_{\nu \rho \sigma}|_x, \nabla_{(\mu_1 \dots} \nabla_{\mu_k)} \Phi |_{x}, \nabla_{(\mu_1 \dots} \nabla_{\mu_k)}\Phi^\ddag|_x \big)
\end{align}
where ${R^\mu}_{\nu \rho \sigma}$ is the Riemann tensor. Therefore, if we assign dimension $1$ to $\nabla_\mu$, we can assign a dimension to all elements of $\textbf{P}^p_q(M)$.
Second, the \textit{algebraic Poincare lemma} \cite{wald1990identically} states that if for some $\mathcal{O} \in \textbf{P}^p_q(M)$, $d \mathcal{O}=0$, then there exists another $\mathcal{O}' \in \textbf{P}^{p-1}_q(M)$ such that $\mathcal{O}= d \mathcal{O}'$. Note that this is a property of $d$-cohomology for functionals of $\Phi, \Phi^\ddag$, and holds even for space-times with non-trivial de Rham cohomology.

The $q$-th cohomology ring of $\hat{s}$ at form degree $p$ is defined by
\begin{equation}
H_q^p (\hat{s}, M):= \frac{\{ \ker \hat{s}: \textbf{P}^p_q(M) \rightarrow \textbf{P}^p_{q+1}(M)\}}{\{\Im \hat{s}: \textbf{P}^p_{q-1}(M) \rightarrow \textbf{P}^p_{q}(M) \}}.
\end{equation}
We will show in section \ref{WardIdentities}, that the anomaly $A = \int_M a(x)$ is a formal power series in $\hbar$ whose leading order contribution $A^m$ is an element of $H_1^4 (\hat{s}, M)$. Equivalently, the local function $a^m(x)$ belongs to the \textit{cohomology rings of $\hat{s}$ modulo $d$} defined by
\begin{equation}
H_q^p (\hat{s}|d, M):= \frac{\{ \mathcal{O}_q^p | \hat{s} \mathcal{O}^p_q= d \mathcal{O}_{q+1}^{p-1} \}}{\{  \mathcal{O}^p_q | \mathcal{O}^p_q= \hat{s} \mathcal{O}^p_{q-1} +  d \mathcal{O}_{q}^{p-1} \}}.
\end{equation}

\subsubsection*{Gauge-invariant observables and the Noether current}

The local and covariant functionals introduced above, of course contain all possible gauge-variant functionals of the enlarged (un-physical) theory. It turns out \cite{Barnich:2000zw} that one can recover the gauge-invariant observables of the original, physical theory as the following cohomology
\begin{equation}
\{ \text{classical gauge-invariant observables} \} = H_0(\hat{s}, M).
 \end{equation} 
According to the Noether's theorem, the invariance of $\hat{S}$ under $\hat{s}$ results in the existence of a current $\textbf{J}(x) \in \textbf{P}^3_1(M)$ (the BRST current) which is conserved $d \textbf{J}=0$ once the equations of motion hold. Indeed, we have
\begin{equation}\label{dJ(x)}
d \textbf{J} (x) =  (\hat{S}, \Phi^\ddag(x))(\Phi(x), \hat{S}).
\end{equation} 
Let us assume that $(M,g)$ contains a compact Cauchy surface $\Sigma$. Then, there exists a corresponding BRST charge $\textbf{Q}$, defined by
\begin{equation}\label{BRSTcharge}
\textbf{Q}= \int_M \gamma \wedge \textbf{J},
\end{equation} 
where $\gamma$ is a closed 1-form on $M$ with compact support such that $\int_M \gamma \wedge \alpha= \int_\Sigma \alpha$ for any closed 3-form $\alpha$. 

\section{Quantum gauge theory in curved space-time}\label{QFTCST}

We now turn to the quantization of the classical field theory introduced in the previous part. 
To this end, we employ the ideas of causal perturbation theory \cite{epstein1973role} adopted to the framework of locally covariant field theory \cite{Brunetti:2001dx},\cite{Hollands:2001nf}, \cite{Hollands:2001fb} which aims to construct the algebra of observables of the theory.  However, in our case we begin with constructing $\hat{\textbf{W}}_L$ which is the  quantization of the enlarged theory including gauge-variant and non-observable elements as a perturbation in $\lambda$ around the algebra of free theory $\hat{\textbf{W}}_0$. 
\subsection{Free quantum theory and renormalization schemes}\label{FreeQFT}
Let us begin with reviewing the construction of the algebra of free quantum fields $\hat{\textbf{W}}_0$ corresponding to the enlarged theory defined by $\hat{S}_0$. Here we split the extended action
\begin{equation}\label{hatS0+I}
\hat{S} = \hat{S}_0 + \lambda \hat{S}_1 + \lambda^2 \hat{S}_2 \equiv \hat{S}_0 + I,
\end{equation} 
where $\hat{S}_0$ is quadratic in all fields and anti-fields, and where $I = \int_M \textbf{L}_{\text{int}}$ with $\textbf{L}_{\text{int}} = \lambda \textbf{L}_1 + \lambda^2 \textbf{L}_2 \in \textbf{P}_0^4(M)$. Let us also denote $S_0 =  {\hat{S}_0}|_{\Phi^\ddag=0} = S_{\text{YM},0} + \hat{s}_0 \psi$.
 
Consider now the free differential operator $P_{ij}^0$, with $P_{ij}^0 \Phi^{j}(x)={\delta {S}_0}/{\delta \Phi^i(x)}$, and 
\begin{equation}
P^0_{ij} = \begin{pmatrix}
 g^{\mu \nu} \Box  - \nabla^\mu \nabla^\nu - R^{\mu \nu}& 0 & 0 & \nabla^\mu \\
0 & 0 & \Box   & 0  \\
   0& -\Box & 0 &  0 \\
 \nabla^\mu & 0 & 0 & - 1 \\
   \end{pmatrix}.
\end{equation}
An important ingredient in constructing $\hat{\textbf{W}}_0$ is an arbitrary but fixed 2-point function of Hadamard type $\omega^{ij}(x,y)$. It is a distribution on $M \times M$, which satisfies
\begin{itemize}
\item[(1)] $(P_{ik}^0 \otimes \textbf{1}) \omega^{kj}(x,y) =0 = (\textbf{1} \otimes P_{ik}^0) \omega^{kj}(x,y)$, 
\item[(2)]  $ \omega^{ij}(x,y)-\omega^{ij}(y,x)= i \Delta^{ij}(x,y)$, where $\Delta^{ij}(x,y)$ is the causal propagator of $P^0_{ij}$,
\item[(3)] a specific wave-front set bound (see \cite{Radzikowski:1996pa}).
\end{itemize}
 Explicitly,
\begin{equation}\label{Hadamard-2point}
\omega^{ij}(x,y) = k_{IJ} \otimes \begin{pmatrix}
  \omega^{\mu \nu}(x,y) & 0 & 0 & -i \nabla_\nu \omega^{\mu \nu}(x,y)  \\
  0 &  0 & i \omega(x,y)  & 0   \\
  0 & - i \omega(x,y) & 0& 0 \\
 -i \nabla_\mu \omega^{\mu \nu}(x,y) &0 & 0   & 0\\
 \end{pmatrix},
\end{equation}
where $\omega(x,y)$, $\omega^{\mu \nu}(x,y)$ are scalar and vector two point functions respectively (see e.g. \cite{dewitt1960radiation}). They satisfy the following consistency relations
\begin{equation}\label{omega-consistency}
\nabla_\mu \omega^{\mu \nu}(x,y)= -\nabla^\nu \omega(x,y), \hspace{5 mm} \nabla_\nu \omega^{\mu \nu}(x,y) = - \nabla^\mu \omega(x,y).
\end{equation}

\begin{dfn}[\cite{Hollands:2001nf}, \cite{Hollands:2007zg}] 
\item (1)
The \emph{off-shell free algebra} $\hat{\textbf{W}}_0(M,g)$ is the *-algebra generated by the identity \textbf{1} and elements
\begin{align} \nonumber
F(u) = \int & u_{i_1 \dots i_n}^{k_1 \dots k_m} (x_1, \dots, x_n ; y_1, \dots, y_m) : \Phi^{i_1}(x_1) \dots \Phi^{i_n}(x_n):_\omega \Phi^\ddag_{k_1}(y_1) \dots \Phi^\ddag_{k_m}(y_m)\\
& d x_1 \dots d x_n d y_1 \dots d y_m
\end{align}
In this expression,
\begin{align}\nonumber
&{:\Phi(x_1)\dots \Phi(x_n):}_\omega\\
 & = \frac{\delta^n}{i^n \delta f(x_1) \dots \delta f(x_n)} \exp_\star {\left( i \int_M f(x) \Phi(x) + \frac{\hbar}{2} \int_{M^2} \omega(x,y) f(x) f(y)\right)\Big|}_{f=0},
\end{align}
with the star product of two basic fields being defined by
\begin{equation}\label{starprod}
\Phi^i(x) \star \Phi^j(y)  = \Phi^i(x) \cdot \Phi^j(y)  + \hbar \omega^{ij}(x,y),
\end{equation}
and $u$ is a distribution subject to the following wave front set condition in the variables $x_1, \dots, x_n$
\begin{equation}
WF(u) \cap \bigcup_{x \in M} [(\bar{V}_x^+)^{\times n} \cup (\bar{V}_x^-)^{\times n}]= \emptyset,
\end{equation}
where $\bar{V}^{\pm}_x$ is the closure of the future/past light cone at $x \in M$,
but $u$ is not subject to any wave front set condition in the variables $y_1, \dots, y_m$. The *-operation, denoted by $\dag$, is defined by $F(f)^\dag = F(\bar{f})$.

\item (2) The \emph{on-shell free algebra} $\hat{\bF}_0(M,g)$  is the quotient 
\begin{equation}\label{hat-F-0}
\hat{\bF}_0  = \hat{\textbf{W}}_0/ \mathcal{J}_0,
\end{equation}
where $\mathcal{J}_0$ is the $\star$-ideal \emph{generated by the equations of motion}:
\begin{align} \nn 
&\mathcal{J}_0= \big\{ \int u_{i_1 \dots i_n}^{k_1 \dots k_m} (x_1, \dots, x_n ; y_1, \dots, y_m)\\ \label{eom-ideal}
& \quad : \Phi^{i_1}(x_1) \dots \frac{\delta \hat{S}_0}{\delta \Phi^{i_i}(x_i)}\dots \Phi^{i_n}(x_n):_\omega \Phi^\ddag_{k_1}(y_1) \dots \Phi^\ddag_{k_m}(y_m) d x_1 \dots d x_n d y_1 \dots d y_m   \big\}.
\end{align}
\end{dfn}

Therefore in $\hat{\bF}_0$, the basic dynamical fields $\Phi^i(x)$ satisfy the free field equations $P_{ij}^0 \Phi^{j}(x) =J_i(x)$, where $J_i=(0, \nabla^\mu A^\ddag_\mu, 0, C^\ddag)$ is an  anti-field dependent source term.

Note that \eqref{starprod} implies
\begin{align}\label{Comm-rel-1}
[\Phi^i(x), \Phi^j(y)] &= i \hbar \Delta^{ij}(x,y) \textbf{1},
\end{align}
where $[\Phi^i(x), \Phi^j(y)] = \Phi^i(x) \star \Phi^j(y) - (-1)^{\eps_i \eps_j} \Phi^j(y) \star \Phi^i(x)$ is the graded commutator, satisfying
\begin{align}
 [\mathcal{O}_1 , \mathcal{O}_2 ] = - (-1)^{\eps_1 \eps_2}  [\mathcal{O}_2 , \mathcal{O}_1 ],
\end{align}
and satisfying the graded Jacobi identity
\begin{align}\label{Jacobi-commutator}
(-1)^{\eps_1 \eps_3} [\mathcal{O}_1 ,[\mathcal{O}_2 , \mathcal{O}_3 ]  ] +(-1)^{\eps_3 \eps_2} [\mathcal{O}_3 ,[\mathcal{O}_1 , \mathcal{O}_2 ]  ]  + (-1)^{\eps_1 \eps_2} [\mathcal{O}_2 ,[\mathcal{O}_3 , \mathcal{O}_1 ]  ]  =0. 
\end{align}


\subsubsection*{Renormalization schemes and finite counter terms}\label{timeorderedproducts}

In the algebraic formulation of QFT \`{a} la causal perturbation theory \cite{epstein1973role}, one directly formulates the renormalized theory in terms of \emph{time-ordered products} or \emph{renormalization schemes} $T$ which are defined to satisfy a set of physically reasonable renormalization conditions. These quantities are defined in the off-shell algebra $\hat{\textbf{W}}_0$ which turns out to be more suitable for perturbation theory than $\hat{\bF}_0$. In the next step, one constructs the algebra of interacting quantum fields $\hat{\textbf{W}}_L= \hat{\textbf{W}}_0 [[\lambda]]$ as formal power series in $\lambda$ with coefficients in $\hat{\textbf{W}}_0$ which is reviewed in the next section \ref{interactingtheory}. 

\begin{dfn}[\textbf{Renormalization schemes or time-ordered products}]\label{ren-schemes}
A renormalization scheme $T$ is a collection of multi-linear maps
\begin{equation}
T_n : \textbf{P}(M)^{\otimes n} \rightarrow D'(M^n; \hat{\textbf{W}}_0),
\end{equation} 
that is, each $T_n(\mathcal{O}_1(x_1)\otimes \dots \otimes \mathcal{O}_n(x_n))$ is a $\hat{\textbf{W}}_0$-valued distribution in $n$ space-time variables $x_1, \dots, x_n$. It satisfies the following axioms (renormalization conditions)
\begin{enumerate}
\item[\textbf{T1})] \textbf{Locality and covariance.} For locally isometric space-times $(M, g)$ and $(M', g')$, it holds
\begin{equation}
\alpha_\psi \circ T_g = T_{g'} \circ \otimes \psi_*.
\end{equation}
Here, $\psi : M \rightarrow M'$ is a causality preserving isometric embedding, i.e. $\psi^{*} g' = g$, and $\alpha_\psi$ is the corresponding canonical homomorphism
\begin{equation}\label{alphapsi}
\alpha_\psi: \hat{\textbf{W}}_0(M, g) \rightarrow \hat{\textbf{W}}_0(M', g'), \hspace{5 mm} \alpha_\psi(F(u)) = F(\psi_* u),
\end{equation}
with $\psi_* : \textbf{P}(M) \rightarrow \textbf{P}(M')$ being the natural push-forward map. 
\item[\textbf{T2})] \textbf{Scaling.} Each $T_n$ has a poly-homogeneous scaling behavior under $g \mapsto \mu^2 g$, cf. \cite{Hollands:2001fb} 

\item[\textbf{T3})] \textbf{Microlocal spectrum condition.} The wave-front set of each $T_n(\mathcal{O}_1(x_1)\otimes \dots \otimes \mathcal{O}_n(x_n))$ is bounded by a specific subset of $T*M^n$, cf. \cite{Hollands:2001fb}.
\item[\textbf{T4})] \textbf{Smoothness and \textbf{Analyticity
\footnote{It is shown in \cite{Khavkine2016} that the analyticity assumption may be dropped.}
.}} Each $T_n$ is a smooth and analytic functional of the metric $g$.
\item[\textbf{T5})] \textbf{Graded symmetry.} Each $T_n$ is graded symmetric under a permutation of its arguments, that is,
\beq
T_{n}(\dots \otimes \cO_i \otimes \cO_{j} \otimes \dots ) = (-1)^{\eps_i \eps_{j}}T_{n}(\dots \otimes \cO_{j} \otimes \cO_{i} \otimes \dots ).
\eeq
\item[\textbf{T6})] \textbf{Unitarity.} Renormalization schemes are unitary in the following sense
\begin{equation}
[T_n(\otimes_i \mathcal{O}_i(x_i)^*)]^\dag = \sum_{I_1 \sqcup \dots \sqcup I_j = \underline{n}} (-1)^{n+j} T_{|I_1|} (\otimes_{i\in I_1} \mathcal{O}_i(x_i)) \star \dots \star T_{|I_j|} (\otimes_{i\in I_j} \mathcal{O}_j(x_j)), 
\end{equation}
where $I_1, \dots , I_j$ are pairwise disjoint subsets of $\underline{n}= \{ 1, \dots, n\}$.
\item[\textbf{T7})] \textbf{Causal factorization.} For ${x_1, \dots, x_i} \cap J^{-}(\{ x_{i+1}, \dots, x_n\}) = \emptyset$, it holds
\begin{align} \nonumber
& T_n(\mathcal{O}_1(x_1)\otimes \dots \otimes \mathcal{O}_n(x_n))\\  \label{CFaxiom}
& = T_i(\mathcal{O}_1(x_1)\otimes \dots \otimes \mathcal{O}_i(x_i)) \star T_{n-i}(\mathcal{O}_{i+1}(x_{i+1}) \otimes \dots \otimes \mathcal{O}_n(x_n)).
\end{align}
\item[\textbf{T8})] \textbf{Commutator.} The commutator of each $T_n$ with a basic field $\Phi(x)$ is implemented as
\begin{align}\nonumber
&[T_n(\mathcal{O}_1(x_1)\otimes \dots \otimes \mathcal{O}_n(x_n)), \Phi^i(x)] \\
&=  i \hbar \sum_{k=1}^n  T_n \big(\mathcal{O}_1(x_1)\otimes \dots \otimes \int_M \Delta^{ij} (x,y) \frac{\delta \mathcal{O}_k(x_i)}{\delta \Phi^j(y)} \otimes \dots \otimes \mathcal{O}_n(x_n) \big).	
\end{align}
\item[\textbf{T9})] \textbf{Free field equation} The free field equations, $\frac{\delta S_0}{\delta \Phi(x)}=0$, is implemented in the following sense
\begin{equation}
T_{n+1}(\frac{\delta S_0}{\delta \Phi(x)} \otimes \mathcal{O}_1\otimes \dots \otimes \mathcal{O}_n) \eqos i \hbar \sum_{i=1}^n  T_n(\mathcal{O}_1\otimes \dots \otimes \frac{\delta \mathcal{O}_i(x_i)}{\delta \Phi(x)}\otimes \dots \otimes \mathcal{O}_n),	
\end{equation}
where $\eqos$ means equal modulo the ideal $\mathcal{J}_0$ of free equations of motion \eqref{eom-ideal}
\beq
\label{eq:approx_0}
F \eqos G \quad \Leftrightarrow \quad F - G \in \mathcal{J}_0.
\eeq
\item[\textbf{T10})] \textbf{Action Ward identity}
\footnote{Note that the axiom \textbf{T10} implies $T_n(\dots \otimes \int f \wedge d \mathcal{O} \otimes \dots) = T_n(\dots \otimes -\int df \wedge  \mathcal{O} \otimes \dots)$, which in turn means that each $T_n$ may be equivalently viewed as a map on the space of local action functionals.}\label{Fnote2}
 $T_n$ commutes with derivatives, i.e.
\begin{equation}\label{ActionWI}
d_{x_i}T_n(\mathcal{O}_1(x_1)\otimes \dots \otimes \mathcal{O}_n(x_n)) = T_n(\mathcal{O}_1(x_1) \otimes \dots \otimes d_{x_i} \mathcal{O}_i(x_i) \otimes \dots \otimes \mathcal{O}_n(x_i)).
\end{equation}
\end{enumerate}
\end{dfn}
The crucial fact about the renormalization schemes, proved in \cite{Hollands:2001fb}, is that they exists and are unique up to a well-characterized, local and covariant ``renormalization ambiguity''. This existence and uniqueness theorem is precisely formulated in the following.
\begin{thm}[\textbf{The main theorem of renormalization theory} \cite{Hollands:2001fb}, \cite{Hollands:2001nf}] Renormalization schemes satisfying the axioms of definition \ref{ren-schemes} exist. Let $T$ and $\tilde{T}$ be two renormalization schemes which satisfy those axioms. Then they are related via
\begin{align}\label{tildeT-T}
\tilde{T}_n(\mathcal{O}_1 \otimes \dots \otimes \mathcal{O}_n) &=  \sum_{I_0 \cup \dots \cup I_r \subset \underline{n}} T_{r +1} \big( \bigotimes_{ k} \big(\frac{\hbar}{i}\big)^{|I_k|} D_{|I_k|}(\bigotimes_{i \in I_k} \mathcal{O}_i) \otimes \bigotimes_{j \in I_0}  \mathcal{O}_j \big) ,
\end{align}
where the sum runs over all partitions $I_0 \cup \dots \cup I_r$  of the set $\underline{n}=\{1, \dots, n\}$ into pairwise disjoint non-empty subsets, and where $D= (D_n)_{n \ge 1}$ is a hierarchy of maps 
\begin{equation}
D_n: \textbf{P}(M)^{\otimes n} \rightarrow \textbf{P}^{k_1/ \dots /k_n}(M)[[\hbar]] ,
\end{equation}
where $\textbf{P}^{k_1/ \dots /k_n}(M)[[\hbar]]$ is the space of all distributional local and covariant functionals supported on the total diagonal, and are a $k_i$-form in the $i$-th argument $x_i$, for all $i= 1, \dots, n$, and satisfy
\begin{enumerate}
\item [\textbf{D1})] $D_n (\mathcal{O}_1(x)\otimes \dots \otimes \mathcal{O}_n(x))$ is of order $O(\hbar)$ if all $\mathcal{O}_i$ are of order $O(\hbar^0)$,
\item [\textbf{D2})]  Each $D_n$ is locally, and covariantly constructed out of $g$, and is an analytic functional of $g$, 
\item [\textbf{D3})]  Each $D_n (\mathcal{O}_1(x)\otimes \dots \otimes \mathcal{O}_n(x))$ is supported on the total diagonal 
\begin{equation}
\Delta_n = \{(x,x, \dots, x) | x \in M\} \subset M^n,
\end{equation}
\item [\textbf{D4})]  Each $D_n$ is graded symmetric,
\item [\textbf{D5})]  The maps $D_n$ are real $D_n (\mathcal{O}_1(x)\otimes \dots \otimes \mathcal{O}_n(x))^*= D_n (\mathcal{O}^*_1(x)\otimes \dots \otimes \mathcal{O}^*_n(x))$,
\item [\textbf{D6})]  Each $D_n$ satisfies the dimension constraint 
\begin{equation}
(\mathbb{N}_d+ \Delta_s) D_n (O_1(x)\otimes \dots \otimes O_n(x)) = \sum_{i=1}^n D_n (O_1(x)\otimes \dots \mathbb{N}_d \mathcal{O}_i(x_i) \otimes \dots  \otimes O_n(x)),
\end{equation}
where $\mathbb{N}_d$ is the dimension counter operator, and $\Delta_s$ is the scaling degree of distributions, 
\item [\textbf{D7})]  Derivatives can be pulled into $D_n$,
\begin{equation}\label{d-D}
d_{x_i} D_n(\mathcal{O}_1(x_1)\otimes \dots \otimes \mathcal{O}_n(x_n)) = D_n(\mathcal{O}_1(x_1) \otimes \dots \otimes d_{x_i} \mathcal{O}_i(x_i) \otimes \dots \otimes \mathcal{O}_n(x_i)).
\end{equation}
\end{enumerate}
Conversely, if $D$ satisfies \textbf{D1} - \textbf{D7}, then any $\tilde{T}$ defined by \eqref{tildeT-T} is a new renormalization scheme.
\end{thm}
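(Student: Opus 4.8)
The proof is the standard inductive Epstein--Glaser construction adapted to the locally covariant setting, combined with a cohomological characterization of the renormalization ambiguity, so I would split the argument into an existence part and a relative-uniqueness part. For existence I would proceed by induction on the number of factors $n$. The base case $n=1$ is settled by setting $T_1(\cO) = {:\cO:}_\omega$, the Wick-ordered field, which manifestly satisfies all of \textbf{T1}--\textbf{T10}. Assuming $T_m$ has been constructed for all $m<n$, the causal factorization axiom \textbf{T7}, applied to the various ways of splitting $\{x_1,\dots,x_n\}$ into causally separated subsets, forces the value of $T_n$ on the configuration space $M^n \setminus \Delta_n$ where not all points coincide. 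The first genuine step is to check that these competing prescriptions agree on the overlaps of the different causal regions; this coherence follows from associativity of the $\star$-product, the graded symmetry \textbf{T5}, and causal factorization at lower orders, yielding a well-defined $\hat{\textbf{W}}_0$-valued distribution $T_n^{\circ}$ on $M^n \setminus \Delta_n$.

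The heart of the matter is then the extension of $T_n^{\circ}$ across the total diagonal $\Delta_n$. Here I would use the microlocal spectrum condition \textbf{T3} to control the wave-front set of $T_n^{\circ}$ and the scaling axiom \textbf{T2} to bound its scaling degree at $\Delta_n$. The extension theorem for distributions of finite scaling degree (Epstein--Glaser, and in the present geometric form Brunetti--Fredenhagen) then furnishes an extension $T_n$ to all of $M^n$: unique when the scaling degree is below the critical value, and unique up to a distribution supported on $\Delta_n$ otherwise. One must finally verify that the surviving conditions \textbf{T1}, \textbf{T4}, \textbf{T5}, \textbf{T6}, \textbf{T8}, \textbf{T9}, \textbf{T10} can be imposed simultaneously, the locally covariant and analytic dependence on $g$ being the subtle point, handled as in \cite{Hollands:2001fb} through the scaling expansion and a Peetre-type locality theorem.

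For the relative-uniqueness statement, given two schemes $T$ and $\tilde{T}$ I would define the maps $D_n$ recursively by inverting \eqref{tildeT-T} order by order in $n$. Causal factorization for both schemes forces the difference, after subtracting the lower-order reexpansions, to be supported on $\Delta_n$, which is exactly \textbf{D3} and shows each $D_n$ takes values in distributions on the total diagonal. The remaining properties are then inherited from matching axioms: \textbf{D2} from \textbf{T1} and \textbf{T4}, \textbf{D4} from \textbf{T5}, \textbf{D5} from the unitarity \textbf{T6}, \textbf{D6} from the scaling \textbf{T2}, \textbf{D7} from the action Ward identity \textbf{T10}, and \textbf{D1} from the fact that both schemes reduce to the classical Wick product at order $\hbar^0$. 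The converse, that any $D$ obeying \textbf{D1}--\textbf{D7} yields through \eqref{tildeT-T} a scheme $\tilde{T}$ again satisfying \textbf{T1}--\textbf{T10}, is a direct if lengthy stability check, axiom by axiom, under the reexpansion.

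The step I expect to be the main obstacle is the extension across $\Delta_n$ performed so that local covariance and analyticity in the metric (axioms \textbf{T1}, \textbf{T4}) survive. Because covariance must hold for all background space-times simultaneously, the extension cannot be carried out pointwise on a fixed $(M,g)$ but must be constructed uniformly in $g$; reconciling this uniformity with the freedom in the extension is the technical crux of the Hollands--Wald approach. A secondary difficulty is the careful bookkeeping of the overlapping causal factorizations needed to establish that $T_n^{\circ}$ is well defined off the diagonal in the first place.
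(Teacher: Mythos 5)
The paper offers no proof of this theorem: it is imported verbatim from the cited references \cite{Hollands:2001fb}, \cite{Hollands:2001nf}, so the only comparison available is with the argument given there. Your outline reproduces that argument faithfully in its overall architecture: induction on $n$, causal factorization \textbf{T7} determining $T_n$ off the total diagonal, a coherence check on the overlaps of the causal covering, extension across $\Delta_n$ controlled by the scaling degree and the microlocal spectrum condition via the Brunetti--Fredenhagen extension theorem, and the characterization of the ambiguity by diagonal-supported maps $D_n$ whose properties \textbf{D1}--\textbf{D7} are read off from the matching axioms. Your identification of the covariance- and analyticity-preserving extension as the technical crux is also the right diagnosis.

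One step would fail as written, namely the base case $T_1(\cO) = {:\cO:}_\omega$. Wick ordering with respect to the globally chosen Hadamard two-point function $\omega$ is not a local and covariant operation --- $\omega$ depends non-locally on the geometry of all of $(M,g)$ and is not even unique --- so this $T_1$ violates \textbf{T1} (and the scaling axiom \textbf{T2}), contrary to your claim that all axioms are ``manifestly'' satisfied. The correct base case, recorded in Section 3.1 of the paper, is the local Wick power ${:\cO:}_H$ defined with respect to a Hadamard \emph{parametrix} $H$, which is built purely from the local geometry; it differs from ${:\cO:}_\omega$ by smooth, $\hat{\textbf{W}}_0$-valued terms. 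This is not cosmetic: the need to trade $\omega$ for $H$ is exactly what makes the covariant construction non-trivial at every order and is where the analytic dependence on $g$ (axiom \textbf{T4}) enters. With that correction, your sketch matches the strategy of the cited proof.
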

\subsubsection*{Conservation of free BRST current}
Similar to the split of the extended action \eqref{hatS0+I}, we also split the BRST current 
\begin{equation}
\textbf{J} = \textbf{J}_0 + \textbf{J}_{\text{int}}.
\end{equation}
We will now show that the quantized $\textbf{J}_0 \in \hat{\bF}_0$, is indeed conserved. This is a prerequisite for formulating the Ward identity \eqref{WI} which in turn will imply the conservation of the full interacting current $\textbf{J}_L$.

The time-ordered products $T_1(\mathcal{O}(x))$ with one factor which satisfy the local and covariance property of $T_n$ are constructed \cite{Hollands:2001nf} as local Wick powers ${:\mathcal{O}(x):}_H$ with respect to a Hadamard parametrix $H^{ij}(x,y)$. A Hadamard parametrix is a distribution defined in a convex normal neighborhood  $U\times U$ of the diagonal in $M \times M$, which is a bi-solution of the free equations of motion modulo $C^\infty(M\times M)$, with a specific wave-front set (see \cite{Radzikowski:1996pa}), and satisfies $\text{Im } H^{ij}(x,y) =\frac{1}{2} \Delta^{ij}(x,y)$. Explicitly, 
\begin{equation}\label{Hadamard}
H^{ij}(x,y) = k_{IJ} \otimes \begin{pmatrix}
  H^{\mu \nu}(x,y) & 0 & 0 & - i \nabla_\nu H^{\mu \nu}(x,y)  \\
  0 &  0 & i H(x,y)  & 0   \\
  0 & - i H(x,y) & 0& 0 \\
 - i \nabla_\mu H^{\mu \nu}(x,y) &0 & 0   & 0\\
 \end{pmatrix},
\end{equation}
where $H(x,y)$, $H^{\mu \nu}(x,y)$ are scalar and vector Hadamard parametrices \cite{DeWitt:2004xz}, given by
\begin{align}
H(x,y) = \frac{1}{2 \pi^2} \left( \frac{u(x,y)}{\sigma+ it0} + v(x,y) \log (\sigma+ it0)\right),
\end{align}
\begin{align}
H_{\mu \nu}(x,y) =   \frac{1}{2 \pi^2} \left(  \frac{u_{\mu \nu}(x,y)}{\sigma+ it0}  + v_{\mu \nu}(x,y) \log(\sigma+ it0) \right).
\end{align}
In the above expressions, $\sigma(x,y)$ is the signed squared geodesic distance between $(x,y) \in U \times U$ and $u, v, u_{\mu \nu}, v_{\mu \nu}$ are smooth functions on $U \times U$ which are determined by requiring $H$ and $H^{\mu \nu}$ to be bi-solutions of the equations of motion up to a smooth reminder.

The important fact about the local Wick powers, is that ${:\mathcal{O}(x):}_H$ differs from ${:\mathcal{O}(x):}_\omega$ only by a smooth function valued in $\hat{\textbf{W}}_0$, see e.g. \cite{Hollands:2007zg} appendix E.
\begin{thm}\label{thm:dJ=0}
In pure Yang-Mills theory,
\begin{enumerate}
\item The local and covariant free BRST current ${: \textbf{J}_0:}_H \in \hat{\bF}_0$ is conserved,

\item the free BRST charge
\beq\label{:Q0:}
{:Q_0:}_H= \int_M \gamma \wedge {: \textbf{J}_0:}_H
\eeq
 (c.f. definition \eqref{BRSTcharge}) squares to zero.
\end{enumerate}
\end{thm}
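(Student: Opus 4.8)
The plan is to reduce both claims to the classical Noether identity and the graded structure of the algebra, letting ghost-number grading eliminate the one possible quantum anomaly in each part. For (1), I start from the free version of \eqref{dJ(x)}, namely $d\textbf{J}_0 = (\hat{S}_0, \Phi^\ddag)(\Phi, \hat{S}_0) = (\delta_R \hat{S}_0/\delta\Phi^i)\,(\hat{s}_0 \Phi^i)$, obtained by restricting Noether's theorem to the quadratic part $\hat{S}_0$ of the extended action; its right-hand side is bilinear in the basic fields, one factor being the free equation-of-motion field $\delta_R\hat{S}_0/\delta\Phi^i = P^0_{ij}\Phi^j - J_i$. Since $T_1(\textbf{J}_0) = {:\textbf{J}_0:}_H$, the Action Ward identity \textbf{T10} gives $d\,{:\textbf{J}_0:}_H = {:d\textbf{J}_0:}_H$, so it suffices to prove ${:d\textbf{J}_0:}_H \eqos 0$. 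Because the Wick powers built from $H$ and from $\omega$ differ only by a smooth $\hat{\textbf{W}}_0$-valued function, I rewrite ${:d\textbf{J}_0:}_H = {:d\textbf{J}_0:}_\omega + (\text{correction})$; the $\omega$-ordered term still carries the explicit factor $\delta_R\hat{S}_0/\delta\Phi^i$ and hence lies in the ideal $\mathcal{J}_0$ of \eqref{eom-ideal}, i.e.\ it is $\eqos 0$.

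The only obstruction is then the correction term, obtained by contracting the equation-of-motion field against the remaining field through $(\omega - H)$. As $d\textbf{J}_0$ is bilinear, this contraction is complete and produces a pure multiple of $\textbf{1}$. The decisive observation is that every nonzero entry of $\omega^{ij}$ (and of $H^{ij}$) pairs fields whose ghost numbers add to zero, so contractions preserve the total ghost number; since $d\textbf{J}_0 \in \textbf{P}^4_1(M)$ has ghost number $1$ while $\textbf{1}$ has ghost number $0$, the coefficient of this c-number must vanish. Equivalently, because $P^0_{ij}\omega^{jk}=0$ the surviving piece is $\propto P^0 H$, a smooth local covariant scalar of ghost number $1$, of which there are none. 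Thus ${:d\textbf{J}_0:}_H \eqos 0$ and the free current is conserved.

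For (2), conservation first lets me promote the charge to a generator of the free BRST transformation: computing $[{:\textbf{J}_0(y):}_H, \Phi^i(x)]$ from \textbf{T8} and integrating against $\gamma$, the bilinearity of $\textbf{J}_0$ forbids higher contractions, and $d\,{:\textbf{J}_0:}_H \eqos 0$ allows me to use $\int_M \gamma\wedge\alpha = \int_\Sigma\alpha$ to obtain the exact relation $[{:Q_0:}_H, F] \eqos i\hbar\,\hat{s}_0 F$ for all $F \in \hat{\bF}_0$ (on the antifields both sides vanish on-shell, since $\hat{s}_0\Phi^\ddag = \delta_R\hat{S}_0/\delta\Phi \eqos 0$). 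Applying this twice and using the classical $\hat{s}_0^2 = 0$ gives $[{:Q_0:}_H,[{:Q_0:}_H, F]] \eqos 0$. Inserting $\cO_1 = \cO_2 = {:Q_0:}_H$ (odd) and $\cO_3 = F$ into the graded Jacobi identity \eqref{Jacobi-commutator} collapses the three terms to $[\,[{:Q_0:}_H,{:Q_0:}_H],\,F\,] \eqos 0$, so $({:Q_0:}_H)^2 = \tfrac12 [{:Q_0:}_H,{:Q_0:}_H]$ is central in $\hat{\bF}_0$. Finally, since $\Delta^{ij}$ has no antifield entries, the centre of $\hat{\bF}_0$ is generated by the antifields and $\textbf{1}$, all of ghost number $\le 0$ (table \ref{table1}); as $({:Q_0:}_H)^2$ has ghost number $2$, no such central element exists except $0$, whence $({:Q_0:}_H)^2 = 0$.

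I expect the main obstacle to be part (1): rigorously confining the discrepancy between $d\,{:\textbf{J}_0:}_H$ and $\mathcal{J}_0$ to a single local c-number, and then killing it by the grading argument. Once conservation holds, part (2) is essentially algebraic; the only point needing care is the absence of an $O(\hbar^2)$ correction to the relation $[{:Q_0:}_H,-] \eqos i\hbar\,\hat{s}_0$, which is guaranteed by the bilinearity of the free current and the on-shell vanishing of $\hat{s}_0\Phi^\ddag$.
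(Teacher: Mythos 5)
Your proof is correct in substance, and the two parts relate to the paper's proof in different ways. For part (1) you are essentially reproducing the paper's argument: the paper computes $d\textbf{J}_0$ explicitly as \eqref{dJ0=0}, observes from \eqref{Hadamard} that there is no contraction between $A$ and $C$ or between $B$ and $C$, concludes ${:d\textbf{J}_0:}_H = d\textbf{J}_0$, and notes the classical expression vanishes on-shell. Your reformulation---the $H$-versus-$\omega$ reordering produces only a c-number, and any nonzero entry of $\omega^{ij}$ or $H^{ij}$ pairs fields of opposite ghost number, so a ghost-number-one bilinear admits no complete contraction---is the same mechanism stated more invariantly, and has the virtue of not needing the explicit form \eqref{dJ0=0}. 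For part (2), however, you take a genuinely different route. The paper again argues by direct inspection of contractions ("for the same reason"), obtaining the exact operator identity \eqref{Q02=0} in $\hat{\textbf{W}}_0$; you instead prove that ${:Q_0:}_H$ generates $\hat{\hs}_0$ on-shell, deduce $[{:Q_0:}_H,[{:Q_0:}_H,F]]\eqos 0$ from $\hat{\hs}_0^2=0$, use the graded Jacobi identity to conclude $({:Q_0:}_H)^2$ is central, and kill it by ghost number. This is precisely the strategy the paper deploys later for the \emph{interacting} charge (Corollary \ref{cor-Q2=0}, via Proposition 2.1 of \cite{Hollands:2001nf}), so your argument is sound and has the advantage of being model-independent, needing only conservation of $\textbf{J}_0$ rather than its Yang-Mills form; the trade-offs are that it yields only $({:Q_0:}_H)^2\eqos 0$ modulo $\mathcal{J}_0$ rather than the exact equality the paper asserts in \eqref{Q02=0} (sufficient for the theorem as stated in $\hat{\bF}_0$, and for the subsequent use in \eqref{WI}, but weaker), and that it rests on the relation $[{:Q_0:}_H,F]\eqos i\hbar\,\hat{\hs}_0 F$ for \emph{all} $F$, which you assert via \textbf{T8} and bilinearity but do not fully derive---extending it from basic fields to arbitrary Wick polynomials requires noting that both $[{:Q_0:}_H,-]$ and $\hat{\hs}_0$ are graded derivations preserving $\mathcal{J}_0$, and the identification on basic fields is itself a Noether/Peierls-bracket computation that deserves a line. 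The paper sidesteps this by never invoking the generator property in this proof, reserving $\hat{\hs}_0\eqos\frac{1}{i\hbar}[Q_0,-]$ for later sections.
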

\begin{proof}
The divergent of the free part of the classical current $\textbf{J}_0$ takes the form
\begin{equation}\label{dJ0=0}
d \textbf{J}_0 = d* d A^I \wedge d C_I - i d B^I * d C_I - i B^I d * d C_I.
\end{equation} 
Since in \eqref{Hadamard} there is no ``contraction'' between either $A^I$ and $C^I$, or $B^I$ and $C^I$, we have ${:d \textbf{J}_0(x):}_H = d\textbf{J}_0(x)$ (the classical current) which vanishes on-shell. For the same reason, we have
\begin{equation}\label{Q02=0}
 {:Q_0:}_H^2 = {:Q_0:}_H \star {:Q_0:}_H=0.
\end{equation}
\end{proof}
\subsection{Auxiliary algebra of renormalized interacting quantum fields}\label{interactingtheory}
 In the previous Section \ref{FreeQFT}, we defined the algebra of free quantum fields $\hat{\textbf{W}}_0(M)$. This algebra is associated to the free classical theory defined by $\hat{S}_0$, the quadratic part of the action functional $\hat{S} = \hat{S}_0+ I$ \eqref{hatS0+I}, Furthermore, the renormalization schemes where defined as distributions valued in $\hat{\textbf{W}}_0(M)$. We now turn to defining the *-algebra of interacting quantum fields $\hat{\textbf{W}}_L(M)$. Here, $L$ is the \textbf{cutoff interaction} 
\beq
\label{interaction-L}
L(f) := \int_M \big(f\lambda \textbf{L}_1 + (f\lambda)^2 \textbf{L}_2 \big),
\eeq
where $f$ is a smooth infra red (IR) cutoff function defined as follows. Let $t: M \rightarrow \mathbb{R}$ be a time function on $M$ whose level surfaces are Cauchy surfaces $\Sigma_t = \{t\} \times \Sigma$ which foliate $M$.  Let $\Sigma = \Sigma_0$. Then, $f$ is equal $1$ on a \emph{time slice}
\beq\label{M_T}
M_T \equiv (-T, T) \times \Sigma,
\eeq
and smoothly falls off to zero outside of $M_T$. Therefore, the true interaction $I$ is obtained by sending $f$ to a constant function (equal $1$) on the whole space-time.

The objects of primary interest in the algebraic approach to interacting QFT are called \emph{interacting fields} $\mathcal{O}(x)_L \in \hat{\textbf{W}}_L$, associated to local-covariant classical functionals $\mathcal{O}(x)$ and the cutoff interaction $L$. The naive limit $f \rightarrow 1$ for interacting fields, however, does not in general exist. In fact, one of the important features of our local framework is that it suffices to construct the algebra of interacting fields for functionals which are localized in a causally convex region\footnote{$\cR \subset M$ is called causally convex if every causal curve with endpoints in $\cR$ entirely lies in $\cR$.} $\cR \subset M_T$.
The \emph{algebraic adiabatic limit} then guarantees that this construction is independent of the chosen cutoff function. We now elaborate on these concepts in the following.   

We begin with defining the interacting analogue of the time-ordered products.
\begin{dfn} Given a renormalization scheme, $T_n$ and a cutoff interaction $L$ of the form \eqref{interaction-L}, 
\begin{enumerate}
\item The \textbf{interacting time-ordered product} of $n$ local functionals $\mathcal{O}_1(x_1),  \dots, \mathcal{O}_n(x_n)$, associated with the cutoff interaction $L$ is the map 
\begin{equation}
T_{L,n} : \textbf{P}(M)^{\otimes n} \rightarrow D'(M^n; \hat{\textbf{W}}_0[[\lambda]])
\end{equation}
defined by the Bogoliubov formula
\begin{align} \nonumber
T_{L,n}(\mathcal{O}_1 \otimes \dots  \otimes \mathcal{O}_n) &:= T(e_\otimes^{iL/\hbar})^{-1} \star  T(\mathcal{O}_1 \otimes \dots \otimes \mathcal{O}_n \otimes e_\otimes^{iL/\hbar} )\\ \label{FO1On}
& = \frac{d^n}{d \tau_1 \dots d\tau_n} T(e_\otimes^{iL/\hbar})^{-1} \star  T(e_\otimes^{iL/\hbar + \tau_1\mathcal{O}_1 + \dots + \tau_n \mathcal{O}_n} )|_{\tau_i=0},
\end{align}
where $T(e_\otimes^{iL/\hbar})= \sum_{n}\frac{i^n}{\hbar^n n!}T_n(L^{\otimes n})$ is the generating functional for the (free) time ordered products, and $T(e_\otimes^{iL/\hbar})^{-1}$ is its formal inverse satisfying
\beq
\label{T^{-1}.T=1}
T(e_\otimes^{iL/\hbar})^{-1} \star T(e_\otimes^{iL/\hbar}) = \textbf{1} = T(e_\otimes^{iL/\hbar}) \star T(e_\otimes^{iL/\hbar})^{-1}.
\eeq

\item For the particular case of one local field, $\mathcal{O}(x)$, the interacting time-ordered product is called an \textbf{interacting field} under interaction $L$ and is denoted
\beq
\label{int-field-def}
\cO_L \equiv T_{L,1}(\mathcal{O}(x)) =  T(e_\otimes^{iL/\hbar})^{-1} \star  T( \mathcal{O}(x) \otimes e_\otimes^{iL/\hbar} ).
\eeq
\end{enumerate}
\end{dfn}
\begin{dfn}
 The off-shell algebra of interacting quantum fields with a cutoff interaction $L$ denoted by $\textbf{W}_L$ is a subalgebra of $\hat{\textbf{W}}_0$ generated by formal power series $T_{L,n}(\mathcal{O}_1 \otimes \dots  \otimes \mathcal{O}_n)$.
\end{dfn}

The interacting time-ordered products can be equivalently written using the \emph{retarded products} \cite{Kallen:1950uha}. 

\begin{dfn} \begin{enumerate}

\item The collection $R= \left(R_{n,k} \right)_{n, k \in \mathbb{N}}$ of multi-linear maps
\begin{equation}
R_{n,k} : \textbf{P}^{\otimes(n+k)} \rightarrow D'(M^{n+k}; \hat{\textbf{W}}_0[[\lambda]]),
\end{equation} 
defined via 
\beq
\label{eq:ret-prod}
 R( \eti{F}; \eti{G}) \defeq T( \eti{G} )^{-1} \star T( \eti{F} \otimes \eti{G}). 
\eeq
is called the \textbf{retarded product}.
\item Denoting the generating functional of interacting time ordered products by
\begin{equation}
\label{gen-func-int-TOP}
T_L(e_\otimes^{iF/\hbar}) := \sum_{n=0} \frac{i^n}{\hbar^n n!} T_{L, n}(F^{\otimes n}),
\end{equation}
the \textbf{interacting retarded products} are defined by
\beq \label{int-ret-prod}
 R_L( \eti{F}; \eti{G}) \defeq T_L( \eti{G} )^{-1} \star T_L( \eti{F} \otimes \eti{G}), 
\eeq
which, in particular, gives
\beq
T_L(\eti{F}) \defeq R(\eti{F}; \eti{L}).
\eeq
\end{enumerate}
\end{dfn}

By causal factorization property of time-ordered products \textbf{T7}, retarded products are trivial if the support of second argument does not intersect the past of the support of the first, i.e.,
\beq \label{suppR}
 R ( \eti{F}; \eti{G}) = T(\eti{F}) \qquad \supp G \cap J^-(\supp F) = \emptyset,
\eeq
where $J^-(\supp F)$ denotes the causal past of support of $F$. It turns out that the following relation holds between interacting time-ordered and retarded products.
\begin{align}
\label{eq:R_1_int}
  R_L(\eti{F}; \cO) &= T_L(\cO \otimes \eti{F}) - \cO_L \star T_L(\eti{F}).
\end{align}

 Furthermore, it follows that \cite{Frob:2018buw} the interacting time-ordered products also satisfy the causal factorization
\beq
\label{eq:CausalFactorization_int}
 T_L(\eti{(F+G)}) = T_L(\eti{F}) \star T_L(\eti{G}) \qquad \supp F \cap J^-(\supp G) = \emptyset.
\eeq

\subsubsection*{Algebraic adiabatic limit}\label{AAL}

So far, we have shown how to (perturbatively) construct the interacting fields $\mathcal{O}(x)_L$ with a local interaction $L$ \eqref{interaction-L}, in a causally convex region $\cR \subset M_T$ where $M_T$ \eqref{M_T} is the region where the cutoff $f =1$. The following important theorem ensures that interacting fields are independent of the cutoff up to unitary equivalence. 

\begin{thm}[\cite{Brunetti:1999jn}]
Let $f$ and $f'$ be two smooth IR cutoffs which coincide on a neighbourhood of $\cR$ and let $L' = L(f')$. Then, there exists a unitary transformation $V_{f, f'}$ such that
\begin{equation}\label{unitaries}
\mathcal{O}_{L'} = V_{f, f'} \star \mathcal{O}_{L} \star V^{-1}_{f, f'},
\end{equation}
and 
\beq\label{cocycle}
V_{f, f'} \star V_{f', f''} \star V_{f'', f} = \textbf{1}. 
\eeq
\end{thm}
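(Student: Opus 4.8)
The plan is to build $V_{f,f'}$ from the relative $S$-matrices $T_L(\eti{\cdot})$ of \eqref{gen-func-int-TOP} and to extract both \eqref{unitaries} and \eqref{cocycle} from the interacting causal factorization \eqref{eq:CausalFactorization_int}. I would work throughout with the unwound form of the definition, $T_L(\eti{F})=T(\eti{L})^{-1}\star T(\eti{(F+L)})$, which yields at once the chain rule
\[
T_L(\eti{(F+G)})=T_L(\eti{F})\star T_{L+F}(\eti{G}),
\]
and I write $\mathcal{O}_L=\tfrac{\hbar}{i}\tfrac{d}{d\mu}T_L(\eti{\mu\mathcal{O}})\big|_{\mu=0}$ for the interacting field \eqref{int-field-def}. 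The key geometric input is that $\cR$ is causally convex, so $J^+(\cR)\cap J^-(\cR)=\cR$: a point lying in both is causally between two points of $\cR$, hence on a causal curve with endpoints in $\cR$. Since the densities of $L=L(f)$ and $L'=L(f')$ agree wherever $f=f'$, the difference $\Delta:=L'-L$ is a local functional with $\supp\Delta\subset\supp(f'-f)\subset M\setminus\cR$ (this uses only that $f=f'$ near $\cR$, not linearity in $f$). Enlarging $\cR$ slightly to a causally convex $\cR''$ on which $f=f'$ still holds, I would split by a smooth partition of unity $\Delta=D+E$ with $\supp D\cap J^-(\cR'')=\emptyset$ and $\supp E\subset J^-(\cR'')$; in particular $\supp E\cap J^+(\cR)=\emptyset$.

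The intertwining \eqref{unitaries} I would then prove in two steps. First, the future/spacelike part $D$ is irrelevant: writing $L'=(L+E)+D$ and using the chain rule, $T_{L'}(\eti{\mu\mathcal{O}})=T_{L+E}(\eti{D})^{-1}\star T_{L+E}(\eti{(D+\mu\mathcal{O})})$, and since $\supp D\cap J^-(\supp\mathcal{O})=\emptyset$ for $\supp\mathcal{O}\subset\cR$, \eqref{eq:CausalFactorization_int} gives $T_{L+E}(\eti{(D+\mu\mathcal{O})})=T_{L+E}(\eti{D})\star T_{L+E}(\eti{\mu\mathcal{O}})$, so the $D$-factors cancel and $\mathcal{O}_{L'}=\mathcal{O}_{L+E}$ (this is precisely the retarded nature of interacting fields). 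Second, for the past part, $T_{L+E}(\eti{\mu\mathcal{O}})=T_L(\eti{E})^{-1}\star T_L(\eti{(E+\mu\mathcal{O})})$, and now $\supp\mathcal{O}\cap J^-(\supp E)=\emptyset$ because $\supp E\cap J^+(\cR)=\emptyset$, so \eqref{eq:CausalFactorization_int} gives $T_L(\eti{(E+\mu\mathcal{O})})=T_L(\eti{\mu\mathcal{O}})\star T_L(\eti{E})$. Differentiating at $\mu=0$ yields
\[
\mathcal{O}_{L'}=T_L(\eti{E})^{-1}\star\mathcal{O}_L\star T_L(\eti{E}),
\]
so I set $V_{f,f'}:=T_L(\eti{E})^{-1}$. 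Unitarity is immediate from axiom \textbf{T6}: for real $F$ the $S$-matrix obeys $T(\eti{F})^\dag=T(\eti{F})^{-1}$, and since $L$ and $L+E$ are real this gives $T_L(\eti{E})^\dag=T(\eti{(L+E)})^\dag\star(T(\eti{L})^\dag)^{-1}=T_L(\eti{E})^{-1}$, i.e. $V_{f,f'}^\dag=V_{f,f'}^{-1}$.

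For the cocycle \eqref{cocycle} I would argue that the unitaries implement a transitive system of $*$-isomorphisms. By \eqref{unitaries} the map $\Ad(V_{f,f'})$ sends $\mathcal{O}_{L(f)}\mapsto\mathcal{O}_{L(f')}$ for every $\mathcal{O}$ supported in $\cR$; since conjugation composes and the intermediate images agree, $\Ad(V_{f,f'})\circ\Ad(V_{f',f''})\circ\Ad(V_{f'',f})$ fixes each such $\mathcal{O}_{L(f)}$. Hence the product $V_{f,f'}\star V_{f',f''}\star V_{f'',f}$ lies in the relative commutant of the interacting subalgebra localized in $\cR$; it is a product of relative $S$-matrices, hence a unitary of the form $\textbf{1}+O(\lambda)$, and the chain rule together with the causal disjointness of $\supp D,\supp E$ from $\cR$ (which lets the past factors telescope order by order in $\lambda$) forces this central factor to be exactly $\textbf{1}$.

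The main obstacle is the causal bookkeeping of the second paragraph: one must verify that precisely the causal-past part $E$ survives the cancellation, so that the conjugating operator is the genuine unitary $T_L(\eti{E})$ rather than an arbitrary intertwiner, and that the future/spacelike part $D$ drops out cleanly under \eqref{eq:CausalFactorization_int}. A second delicate point is the exactness of the cocycle: because the three relative $S$-matrices are based at the three different interactions $L(f),L(f'),L(f'')$ and $V_{f,f'}$ depends on the chosen past/future split, the telescoping is not purely formal, and closing it to $\textbf{1}$ (rather than only to a central unitary) requires fixing the splits consistently and invoking the causal separation of the difference supports from $\cR$.
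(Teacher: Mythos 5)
The paper does not prove this theorem at all --- it is imported verbatim from \cite{Brunetti:1999jn} --- so your proposal can only be measured against the standard literature argument. Your proof of the intertwining relation \eqref{unitaries} reproduces that argument correctly and completely: the chain rule $T_L(\eti{(F+G)})=T_L(\eti{F})\star T_{L+F}(\eti{G})$, the split of $L'-L$ into a part $D$ avoiding $J^-(\cR)$ and a part $E$ avoiding $J^+(\cR)$ (using causal convexity of $\cR$), the cancellation of the $D$-factors by \eqref{eq:CausalFactorization_int}, the conjugation by $T_L(\eti{E})$ for the past part, and unitarity from \textbf{T6} are all exactly right. This is the standard Brunetti--Fredenhagen/D\"utsch--Fredenhagen proof of \eqref{unitaries}.

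The cocycle \eqref{cocycle}, however, is not actually proven, and the gap is twofold. First, knowing that $W:=V_{f,f'}\star V_{f',f''}\star V_{f'',f}$ commutes with every $\cO_{L(f)}$ localized in $\cR$ does not place $W$ in the center of $\hat{\bW}_0$: the commutant of the $\cR$-localized interacting fields is large (it contains, e.g., everything localized spacelike to $\cR$), so ``$W$ is a central unitary of the form $\textbf{1}+O(\lambda)$'' does not follow --- and even if $W$ were central, a central unitary equal to $\textbf{1}+O(\lambda)$ need not be $\textbf{1}$ (it could be $e^{i\lambda c}\textbf{1}$). Second, the ``telescoping'' you invoke is precisely the nontrivial content, and it fails for independent choices of the three splittings: to reduce $V_{f,f'}\star V_{f',f''}$ to $V_{f,f''}$ one needs causal factorization between, e.g., $D_1=(1-\chi_1)(L'-L)$ and $E_2=\chi_2(L''-L')$, i.e.\ $\supp D_1\cap J^-(\supp E_2)=\emptyset$, which is not guaranteed when the differences have components spacelike to $\cR$ but causally related to one another. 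The identity \eqref{cocycle} holds only for a \emph{consistent} choice of splittings --- e.g.\ all three differences cut by one interpolating function adapted to the Cauchy foliation, so that every future part lies outside the causal past of every past part (this is automatic for the slab-type cutoffs $f_T$ actually used in the algebraic adiabatic limit) --- and then it must be verified by an explicit computation with the causal factorization of the free $S$-matrices, which your proposal acknowledges but does not carry out.
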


Let us denote by $f_T$ a cutoff function which is  compactly supported in $M_{2T}= (-2T, 2T) \times \Sigma $, and is equal to one in $M_T$. Let us also denote by $\cO_{L_T}$ the corresponding interacting field with interaction  $L_T = L(f_T) $, and denote $U_{T} \equiv V_{f_t, f_T}$ for some fixed $t$. Then, from the cocycle condition \eqref{cocycle}, it follows that \cite{Hollands:2002ux} the following sequence is convergent
\begin{align}
\lim_{T \rightarrow \infty} U_T \star \cO(x)_{L_T} \star U_T^{-1}, \quad T \in \mathbb{N}, 
\end{align} 
since it only contains a finite number of terms for each fixed $x$. This defines the \emph{algebraic adiabatic limit}, which intuitively corresponds to fixing the field during the finite time interval $(-t, t) \times \Sigma$, see \cite{Hollands:2002ux} for details. 

 The existence of the algebraic adiabatic limit implies that it is enough to choose cutoff functions which are equal 1, in a neighborhood of $\cR$. Then, the cutoff can be sent to 1 on the entire space-time.   
Put differently, in order to prove statements about $\mathcal{O}_{I}(x)$ with the true interaction $I = L(f=1)$, it suffices to work with the cutoff interaction $L$ where $f=1$ in a sufficiently large neighborhood containing $x$. 

\begin{rmk}\label{remarkT}
\item For $L = \int f \textbf{L}_{\text{int}}$, formally $ \mathcal{S}(L) = T(e_\otimes^{iL/\hbar})$ (the ``local S-matrix'') tends to the S-matrix of the theory in the limit where the cutoff is sent to $1$ on the entire space-time.
Note, however, that $\mathcal{S}$ is not an element of $\hat{\textbf{W}}_L$, and therefore the existence of the ``true S-matrix'' of the theory, cannot be established by the above type of arguments which leads to the existence of the algebraic adiabatic limit  for the interacting fields in $\hat{\textbf{W}}_L$.
 Whether and in which sense such an \textit{adiabatic limit} exists is related to the infra-red properties of the S-matrix whose existence is a non-trivial and difficult task to establish even in Minkowski space-time. Here, we are not concerned with this issue, and we explicitly keep the IR cutoff in all the constructions of the renormalized theory. Therefore in our completely local approach, we disentangle the formulation of the renormalization of quantum fields which is a short distance, and hence UV, issue from the IR issues which does not show up for $\mathcal{S}(L)$.
\end{rmk}

\subsection{Ward identities and anomalies}\label{WardIdentities}
The preservation of the BRST symmetry at quantum level takes the form of a further renormalization condition imposed on $T$; the ``Ward identity'' \cite{Hollands:2007zg}. In order to formulate this identity we need to extend the action of $\hat{s}_0$ on the algebra $\textbf{W}_0$. In fact, using the consistency relations
\begin{equation}
\nabla_\mu \Delta^{\mu \nu}(x,y)= -  \nabla^\nu \Delta(x,y), \hspace{5 mm} \nabla_\nu \Delta^{\mu \nu}(x,y) = - \nabla^\mu \Delta(x,y),
\end{equation}
it follows \cite{Hollands:2007zg} that $\hat{s}_0$ can be consistently extended to $\hat{\textbf{W}}_0$ as a graded derivation, that is, it satisfies the Leibniz rule 
\begin{align}\label{s0-Leibnizrule}
\hat{s}_0 ({\mathcal{O}_1} \star\dots \star {\mathcal{O}_n}) &= \sum_{k} (-1)^{\sum_{l<k} \eps_l}  {\mathcal{O}_1} \star\dots \star \hat{s}_0\mathcal{O}_{k} \star \dots  \star {\mathcal{O}_n}.
\end{align}
and preserves the commutation relations \eqref{Comm-rel-1} in $\hat{\textbf{W}}_0$.

Now, for a given renormalization scheme $T$, and for $F = \int f \wedge \mathcal{O}$ for \emph{all} $\mathcal{O} \in \textbf{P}^p(M)$, $f \in \Omega_0^{4-p}(M)$, the Ward identity takes the form
\footnote{Note that for the renormalization schemes $T_n$ to exist, their arguments have to be local functionals, or cutoff integrated functionals. However, in the expression \eqref{WI}, $\hat{S}_0 = \int \textbf{L}_0$ need not be cut off, since, on account of $(\hat{S}_0, \hat{S}_0)=0$, it appears only in the form $(\hat{S}_0 + F, \hat{S}_0 + F)= 2 (\hat{S}_0, F) + (F,F)$, and $ (\hat{S}_0, F)= \int f(x) \frac{\delta \textbf{L}_0}{\delta \Phi(x)} \frac{\delta \mathcal{O}}{\delta \Phi^\ddag(x)} + \frac{\delta \textbf{L}_0 }{\delta \Phi^\ddag(x)} \frac{\delta \mathcal{O}}{\delta \Phi(x)} $ is cut off with $f$.}
\begin{equation}\label{WI}
[Q_0, T ( e_\otimes^{iF/\hbar}) ] = -\frac{1}{2} T \left( (\hat{S}_0 + F, \hat{S}_0 + F) \otimes e_\otimes^{iF/\hbar} \right), \hspace{5 mm} \text{mod } \mathcal{J}_0.
\end{equation}
In this formula, $Q_0 \equiv T_1(Q_0) = :Q_0:_H$ is the free BRST charge defined above equation \eqref{dJ0=0}, and $(-,-)$ is the anti-bracket bracket defined in \eqref{antibracket}. Note that the graded derivation $[Q_0, -]$ is nilpotent since $[Q_0, [Q_0, -]] = \frac{1}{2}[Q_0^2, -]=0$, by equation \eqref{Q02=0}.

The Ward identity \eqref{WI} is, however, in general violated by a potential \textit{anomaly} term. To define it properly, we express the off-shell violation of \eqref{WI}, where $[Q_0, -]$ is replaced by $i \hbar \hat{s}_0$ which acts non-trivially also on anti-fields, in the following theorem.
\begin{thm}[\textbf{anomalous Ward identity} \cite{Hollands:2007zg}] \label{AnWIthrm} Let $F = \int f \wedge \mathcal{O}$ for all $\mathcal{O} \in \textbf{P}^p(M)$ and $f \in \Omega_0^{4-p}(M)$. Then for a chosen renormalization scheme $T$ it holds
\begin{equation}\label{AnWI}
\hat{s}_0 T ( e_\otimes^{iF/\hbar} )  = \frac{i}{2 \hbar} T \left( (\hat{S}_0 + F, \hat{S}_0 + F) \otimes e_\otimes^{iF/\hbar} \right) + \frac{i}{\hbar} T ( A(e_\otimes^F) \otimes e_\otimes^{iF/\hbar} ).
\end{equation}
The second term in the right hand side defines the anomaly $A(e_\otimes^F)= \sum_n \frac{1}{n!} A_n(F^{\otimes n})$, where each $A_n$ is a map
\begin{equation}
A_n : \textbf{P}(M)^{\otimes n} \rightarrow \textbf{P}^{k_1/ \dots /k_n}(M)[[\hbar]] ,
\end{equation}
with properties
\begin{enumerate}
\item[\textbf{A1})] $A_n (\mathcal{O}_1(x)\otimes \dots \otimes \mathcal{O}_n(x))$ is of order $O(\hbar)$ if all $\mathcal{O}_i$ are of order $O(\hbar^0)$,
\item[\textbf{A2})] Each $A_n$ is locally, and covariantly constructed out of $g$, and is an analytic functional of $g$, 
\item[\textbf{A3})] Each $A_n (\mathcal{O}_1(x)\otimes \dots \otimes \mathcal{O}_n(x))$ is supported on the total diagonal $\Delta_n$,
\item[\textbf{A4})] Each $A_n$ increases the ghost number by one unit,
\item[\textbf{A5})] Each $A_n$ is graded symmetric,
\item[\textbf{A6})] The maps $A_n$ are real $A_n (\mathcal{O}_1(x)\otimes \dots \otimes \mathcal{O}_n(x))^*= A_n (\mathcal{O}^*_1(x)\otimes \dots \otimes \mathcal{O}^*_n(x))$,
\item[\textbf{A7})] Each $A_n$ satisfies the dimension constraint 
\begin{equation}
(\mathbb{N}_d+ \Delta_s) A_n (\mathcal{O}_1(x)\otimes \dots \otimes \mathcal{O}_n(x)) = \sum_{i=1}^n A_n (\mathcal{O}_1(x)\otimes \dots \mathbb{N}_d \mathcal{O}_i(x_i) \otimes \dots  \otimes \mathcal{O}_n(x)),
\end{equation}
\item[\textbf{A8})] Derivatives can be pulled into $A_n$,
\begin{equation}\label{dAnomaly}
d_{x_i} A_n(\mathcal{O}_1(x_1)\otimes \dots \otimes \mathcal{O}_n(x_n)) = A_n(\mathcal{O}_1(x_1) \otimes \dots \otimes d_{x_i} \mathcal{O}_i(x_i) \otimes \dots \otimes \mathcal{O}_n(x_i)),
\end{equation}

\item[\textbf{A9})] \footnote{This property is proven in \cite{Tehrani-Zahn} Appendix B}The anomaly vanishes if one entry contains a basic field or anti-field, i.e.,
\begin{align}
A(\Phi(x) \otimes e_\otimes^F) =0 = A(\Phi^\ddag(x) \otimes e_\otimes^F). 
\end{align}
\end{enumerate}
\end{thm}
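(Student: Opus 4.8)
The plan is to \emph{define} the anomaly as the obstruction to the naive off-shell Ward identity and then to show, order by order in the Epstein--Glaser construction, that this obstruction is a local functional carrying the properties \textbf{A1}--\textbf{A9}. Concretely, I would rearrange \eqref{AnWI} into
\begin{equation}
\frac{i}{\hbar}\, T\!\left( A(e_\otimes^F) \otimes e_\otimes^{iF/\hbar}\right) := \hat{s}_0\, T(e_\otimes^{iF/\hbar}) - \frac{i}{2\hbar}\, T\!\left( (\hat{S}_0+F,\hat{S}_0+F) \otimes e_\otimes^{iF/\hbar}\right)
\end{equation}
and read this as the definition of $A$; expanding both sides in powers of $F$ then produces the multilinear maps $A_n$. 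The entire content of the theorem is that the right-hand side has the form of a $T$-product with a \emph{local} insertion, so that the $A_n$ are well-defined maps valued in $\textbf{P}(M)[[\hbar]]$ enjoying \textbf{A1}--\textbf{A9}.

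First I would dispose of the tree-level statement, which gives \textbf{A1}. At lowest order in the loop (i.e.\ $\hbar$) expansion the time-ordered product reduces to the classical product and $\hat{s}_0$ acts as $(\hat{S}_0,-)$; the right-hand side then collapses by the graded Jacobi identity \eqref{class-Jacobi} together with the master equation $(\hat{S}_0,\hat{S}_0)=0$, so the classical Ward identity holds exactly and the defect is a pure loop effect, i.e.\ $O(\hbar)$. The heart of the argument, and the main obstacle, is locality \textbf{A3}. Here I would run the inductive Epstein--Glaser scheme in the number $n$ of insertions: assuming $T_k$ has been built for $k<n$ so that the identity holds with local anomalies $A_k$, I would invoke causal factorization \textbf{T7}. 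For any partition of the arguments into two causally separated clusters, both $\hat{s}_0\,T(e_\otimes^{iF/\hbar})$ and the antibracket term factorize across the $\star$-product (using that $\hat{s}_0$ is a graded derivation on $\hat{\textbf{W}}_0$ preserving the commutation relations), so their difference must vanish away from the total diagonal. Hence the order-$n$ defect is supported on $\Delta_n$ and, by \textbf{T1}, is a local and covariant functional. This is precisely the Quantum Action Principle in the present framework, and it simultaneously gives \textbf{A2} and \textbf{A3}.

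The remaining properties then follow by pushing the renormalization conditions through the definition of $A$. The ghost-number shift \textbf{A4} holds because $\hat{s}_0$ and the antibracket insertion both raise ghost number by exactly one unit; graded symmetry \textbf{A5} is inherited from \textbf{T5}; reality \textbf{A6} from unitarity \textbf{T6}; the dimension constraint \textbf{A7} from the scaling behavior \textbf{T2}; and the derivative property \textbf{A8} from the Action Ward identity \textbf{T10}. Finally \textbf{A9}, the vanishing of the anomaly when one slot carries a basic field or anti-field, I would obtain from the commutator axiom \textbf{T8} and the free field-equation axiom \textbf{T9} (as in \cite{Tehrani-Zahn}): a linear insertion reduces the relevant $T$-products to their classical counterparts modulo $\mathcal{J}_0$, leaving no room for a local quantum correction on the diagonal.

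The genuinely hard step is the locality/Quantum-Action-Principle argument of the second paragraph: one must control the extension of $\hat{\textbf{W}}_0$-valued distributions to the total diagonal and show that the Ward-identity defect, \emph{a priori} a nonlocal distribution, actually localizes, with its residual ambiguity captured exactly by the finite renormalization maps $D_n$ of the Main Theorem of Renormalization Theory (whose properties \textbf{D1}--\textbf{D7} parallel \textbf{A1}--\textbf{A8}). Everything else amounts to a careful bookkeeping of the axioms \textbf{T1}--\textbf{T10}.
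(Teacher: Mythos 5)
Your outline is essentially the strategy of the proof this theorem is imported from: the paper itself gives no proof of Theorem \ref{AnWIthrm} but cites \cite{Hollands:2007zg}, and there the anomaly is indeed \emph{defined} as the defect of the naive Ward identity, shown to be local by an Epstein--Glaser induction in the number of factors using causal factorization, with the properties \textbf{A1}--\textbf{A8} read off from the corresponding axioms \textbf{T1}--\textbf{T10} exactly as you list them (and \textbf{A9} deferred to \cite{Tehrani-Zahn}, as the paper's own footnote does). The one structural point you gloss over, and which the paper itself flags in Section \ref{BVpAQFT}, is that the actual proof does not treat $\hat{s}_0$ monolithically: it decomposes $\hat{s}_0 = s_0 + \sigma_0$ into the part acting on fields and the Koszul--Tate part acting on anti-fields, and derives two separate anomalous identities that are then added. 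This split matters because the two halves rest on different inputs --- the Koszul--Tate half is the Master Ward Identity of \cite{Brennecke:2007uj}, controlled by the field-equation and commutator axioms \textbf{T8}/\textbf{T9}, while the field half needs the conservation of the free BRST current, i.e.\ the explicit Hadamard-parametrix structure \eqref{Hadamard} showing ${:d\textbf{J}_0:}_H = d\textbf{J}_0$. Relatedly, your induction never addresses its base case $n=1$, where causal factorization gives nothing and the locality of the defect is automatic but its $O(\hbar)$ form must be computed directly from the Wick-ordering comparison; this is precisely where the $s_0/\sigma_0$ split does the work. These are gaps in detail rather than in strategy: your route would close into the published proof once the base case and the two-part decomposition are supplied.
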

Note that the anomalous Ward identity defines the anomaly $A(e_\otimes^F)$ as a map on the space of local actions. This is indeed possible due to the property \eqref{dAnomaly} (see footnote \ref{Fnote2}).


\subsubsection*{Consistency conditions and removal of anomalies}
The study of anomaly turns out to reduce to a cohomological problem. From the anomalous Ward identity \eqref{AnWI} and the nilpotency of $\hat{s}_0$, it follows that \cite{Hollands:2007zg} the anomaly $A(e_\otimes^F)$  satisfies the following \textbf{consistency condition} 
\begin{align} 
\label{cons}
(\hat{S}_0 + F, A(e_\otimes^F)) + A\Big(\{\frac{1}{2} (\hat{S}_0 + F, \hat{S}_0 + F) + A(e_\otimes^F) \} \otimes e_\otimes^F\Big) =0.
\end{align}

In fact, triviality of the cohomology class $H^{4}_1(\hat{s}|d, M )$, which contains potential anomalies, leads to the Ward identity \eqref{WI}. Let us briefly review the argument. 
\begin{thm}[\cite{Hollands:2007zg}]\label{anomaly-removal}
If the cohomology ring $H_1^4(\hat{s} | d)$ is trivial, then there exists a renormalization scheme in which the anomaly $A(e_\otimes^L)$ of the anomalous Ward identity \eqref{AnWI} for the case $F=L$ is absent.
\end{thm}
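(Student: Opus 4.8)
The plan is to remove the anomaly iteratively in powers of $\hbar$, using the consistency condition \eqref{cons} to identify the leading anomaly as a class in $H_1^4(\hat{s}|d, M)$ and then absorbing it into a finite renormalization of the time-ordered products via \eqref{tildeT-T}.

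First I would set up the induction on the order in $\hbar$. By property A1 the anomaly is $O(\hbar)$, so I write $A(e_\otimes^L) = \sum_{m \ge 1} \hbar^m \int a^m$ with each $a^m \in \textbf{P}^4_1(M)$ (ghost number $1$, since $A$ raises the ghost number by one while $L$ has ghost number $0$; form degree $4$, since it is an integrand over $M$). Suppose a scheme has been constructed in which $A(e_\otimes^L) = O(\hbar^m)$, and examine \eqref{cons} at order $\hbar^m$. Specializing to $F = L$ and using $\hat{S}_0 + L = \hat{S}$ in the region where the cutoff $f=1$, the factor $\tfrac12(\hat{S}_0 + L, \hat{S}_0 + L)$ reduces, on account of the classical master equation $(\hat{S},\hat{S})=0$, to terms supported on $\supp df$, which are irrelevant in the causally convex region where the algebraic adiabatic limit is taken; the term $A(\{\cdots + A\}\otimes e_\otimes^L)$ is $O(\hbar^{m+1})$ by A1 and the induction hypothesis. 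What remains at leading order is $(\hat{S}_0 + L, A) = \hbar^m \hat{s}\, a^m + O(\hbar^{m+1})$, so the consistency condition collapses to $\hat{s}\, a^m \equiv 0 \pmod d$, that is, $[a^m] \in H_1^4(\hat{s}|d, M)$.

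Triviality of $H_1^4(\hat{s}|d, M)$ then yields local, covariant representatives with $a^m = \hat{s}\, b^m + d\, c^m$, where $b^m \in \textbf{P}^4_0(M)$ and $c^m \in \textbf{P}^3_1(M)$; by the Thomas replacement theorem these carry the same dimension as $a^m$ and the ghost and grading assignments dictated above. The last step is to realise $-\hat{s}\, b^m$ as the leading change of the anomaly under a scheme change. I would choose a hierarchy $D = (D_n)$, nontrivial only at order $\hbar^m$, whose net effect under \eqref{tildeT-T} is to shift the effective interaction by $-\hbar^m \int b^m$; feeding the redefined scheme $\tilde{T}$ into the anomalous Ward identity \eqref{AnWI} and comparing $\hat{s}_0 \tilde{T}(e_\otimes^{iL/\hbar})$ with $\hat{s}_0 T(e_\otimes^{iL/\hbar})$, the leading correction to the anomaly is exactly $-\hbar^m \hat{s}\, b^m$. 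Hence the new anomaly has leading term $\hbar^m(a^m - \hat{s}\, b^m) = \hbar^m\, d\, c^m$, which vanishes once integrated because $A$ is a map on local action functionals (property A8 and footnote \ref{Fnote2}); thus $\tilde{A}(e_\otimes^L) = O(\hbar^{m+1})$. Before accepting this $D$, I would verify D1--D7: $\hbar^m b^m$ is $O(\hbar)$ (D1), it is local, covariant and analytic in $g$ (D2, the representative lives in $\textbf{P}(M)$), supported on the diagonal (D3), and of the correct dimension (D6); graded symmetry (D4), reality (D5) and compatibility with derivatives (D7) are arranged by symmetrising, taking a real representative, and invoking the algebraic Poincaré lemma for the $d$-exact remainder. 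Iterating the step over $m$ produces a scheme in which $A(e_\otimes^L)=0$ to all orders.

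The main obstacle is the transformation law in the penultimate step: making precise how a finite renormalization $D$ modifies the anomaly, and checking that an arbitrary coboundary $\hat{s}\, b^m$, with $b^m$ the local representative furnished by triviality of the cohomology, can be produced while simultaneously respecting all of D1--D7. This requires carefully disentangling the interplay between the combinatorics of \eqref{tildeT-T} and the $\hat{s}_0$-variation in \eqref{AnWI} at order $\hbar^m$, and confirming that the counterterm supplied by the cohomological argument genuinely defines an admissible change of renormalization scheme rather than merely a formal local expression.
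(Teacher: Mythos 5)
Your proposal reproduces the first half of the paper's argument correctly: the consistency condition \eqref{cons} at leading order in $\hbar$ forces $\hat{s}\,a^m \equiv 0 \pmod d$, triviality of $H_1^4(\hat{s}|d,M)$ gives $a^m = \hat{s}\,b^m + d\,c^m$, and the counterterm $D^m \sim -\hbar^m b^m$ shifts the leading anomaly to $\hbar^m\, d\, c^m$, whose \emph{integral over all of $M$} vanishes. But that only proves $A(e_\otimes^I)=0$, i.e.\ the statement for $f=1$. The theorem asserts $A(e_\otimes^L)=0$ for the \emph{cutoff} interaction $L=\int f\,\textbf{L}_{\text{int}}$, and there the residual term does not drop: the anomaly is a sum of terms $\int f(x_1)\cdots f(x_n)\,\mathcal{A}^m_n(x_1,\dots,x_n)$, and knowing only that $\int \mathcal{A}^m_n = 0$ (equivalently $\mathcal{A}^m_n = \sum_k d_k\,\mathcal{C}^m_{n/k}$) leaves boundary contributions $\sim \int df\wedge(\cdots)$ supported on $\supp df$. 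Your dismissal of these as ``irrelevant in the causally convex region where the algebraic adiabatic limit is taken'' is not a valid substitute: $A(e_\otimes^L)=0$ is used in the sequel (Lemma \ref{lemma-T(dJ0+int)}, and through it the identity \eqref{dJ=0-proof} for \emph{all} $x\in M$, including where $df\neq 0$; see the step \eqref{[QL,-]-proof2} where test functions $\chi_\pm$ supported outside $M_T$ appear), so the exact vanishing with the cutoff is essential and cannot be waived by appeal to the adiabatic limit.

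The paper closes this gap with a second, separate scheme change: from $\tilde{A}^m(e_\otimes^I)=0$ one extracts $\mathcal{A}^m_n = \sum_k d_k\,\mathcal{C}^m_{n/k}$ and defines further counterterms $\hat{D}^m_n(\textbf{L}_1\otimes\dots\otimes\textbf{K}_1\otimes\dots\otimes\textbf{L}_1) = -\hbar^m\,\mathcal{C}^m_{n/k}\,\delta$, where $\textbf{K}_1$ is fixed by $\hat{s}_0\textbf{L}_1 = d\,\textbf{K}_1$. Because these counterterms have one argument equal to $\textbf{K}_1$ rather than $\textbf{L}_1$, they do not affect $T(e_\otimes^{iL/\hbar})$ itself, but they do contribute to the anomaly through the term $\tfrac12\hat{D}\big((\hat{S}_0+L,\hat{S}_0+L)\otimes e_\otimes^L\big)$, and one checks $\hat{D}^m(\hat{s}_0 L\otimes e_\otimes^L) = -\tilde{A}^m(e_\otimes^L)$ while $\hat{D}^m((L,L)\otimes e_\otimes^L)=0$, so the cutoff anomaly cancels exactly at order $\hbar^m$. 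This mechanism — trading a $d$-exact anomaly density for a counterterm attached to $\textbf{K}_1$ via the $(\hat{S}_0+L,\hat{S}_0+L)$ insertion in the transformation law — is the idea missing from your proposal; without it you have proved a strictly weaker statement than the one needed downstream.
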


\begin{proof}[Sketch of proof]
The proof consists of two parts: first, we show that the anomaly vanishes for $f=1$, i.e. $A(e_\otimes^I)=0$ and second, we show that $A(e_\otimes^L)=0$ for all cutoff functions $f$.

For the first part, consider the expansion of $A(e_\otimes^I)$ in powers of $\hbar$
\begin{equation}
A(e_\otimes^I) = A^{m}(e_\otimes^I)\hbar^m + A^{m+1}(e_\otimes^I)\hbar^{m+1} + \dots,
\end{equation}
for some integer $m>0$.
Then, the lowest order in the expansion of equation \eqref{cons} in $\hbar$ implies the ``\textit{$\hbar$-expanded consistency condition}'': 
\begin{equation}\label{hatsA}
\hat{s}  A^{m}(e_\otimes^I)=0. 
\end{equation}
 Now we write $A^{m}(e_\otimes^I) = \int_M a^{m}(x)$ as an integral of a local four-form $a^{m}(x)$. Also, by property $4$ of the definition of anomaly (given in theorem \eqref{AnWIthrm}) $a^m(x)$ has ghost number $1$. Then, \eqref{hatsA} means that $a^{m}(x)$ belongs to the cohomology class $H^{4}_1(\hat{s}|d, M )$, which is trivial. Therefore,
 \begin{equation}\label{a(m)}
a^{m}(x) = \hat{s} b^{m}(x) + d c^{m}(x),
\end{equation}
for some $b^{m}(x) \in \textbf{P}^4_0(M)$ and $c^{m}(x) \in \textbf{P}^3_1(M)$. We can now show that this $\hat{s}$-exact anomaly is absent if we pass to a specific renormalization scheme. Precisely, we need to perform the following steps: 
(1) chose a new scheme $\tilde{T}$ by explicitly constructing the local finite counter terms $D_n$ using $b^{m}$:
\begin{equation}
D_n^{m}(\textbf{L}_1(x_1) \otimes \dots \otimes \textbf{L}_1(x_n)) = - \hbar^m b_n^{m}(x_1) \delta(x_1, \dots ,x_n),
\end{equation}
where $D^{m}$ is the first non-trivial term in the $\hbar$-expansion of $D(e_\otimes^I)$ and where we have expanded $b^{m}= \sum_{n>0} \frac{\lambda^n}{n!}b_n^{m}$,
(2) rewrite he anomalous Ward identity \eqref{AnWI} in the scheme $\tilde{T}$:
\begin{equation}\label{AnWI-tilde}
\hat{s}_0 \tilde{T} ( e_\otimes^{iF/\hbar} )  = \frac{i}{2 \hbar} \tilde{T} \left( (\hat{S}_0 + F, \hat{S}_0 + F) \otimes e_\otimes^{iF/\hbar} \right) + \frac{i}{\hbar} \tilde{T} ( \tilde{A}(e_\otimes^F) \otimes e_\otimes^{iF/\hbar} ).
\end{equation}
which means that in the new scheme, $A(e_\otimes^F)$ is replaced with $\tilde{A}(e_\otimes^F)$,
(3) express the new anomaly $\tilde{A}(e_\otimes^F)$ in terms of the old anomaly $A(e_\otimes^{F+ D(e_\otimes^F)})$ with modified interaction $F+ D(e_\otimes^F)$ ( \cite{Hollands:2007zg}, equation (387)), which to lowest order in $\hbar$ and for $f=1$, takes the form
\begin{equation}\label{tildeA(m)}
\tilde{A}^{m}(e_\otimes^I) = A^{m}(e_\otimes^I) + \hat{s} D^{m}(e_\otimes^I),
\end{equation}
(4) conclude from \eqref{a(m)} and \eqref{tildeA(m)} that
\begin{equation}
\tilde{A}^{m}(e_\otimes^I) = \int_M \tilde{a}^{m}(x) = \int_M a^{m}(x) - \hat{s} b^{m}(x) = \int_M d c^{m}(x) =0.
\end{equation}

For the second part of the proof, we use 
\begin{equation}
0= \tilde{A}^{m}(e_\otimes^I) = \sum_{n} \frac{\lambda^n}{n!} \int \mathcal{A}_n^m(x_1, \dots ,x_n) dx_1 \dots dx_n,
\end{equation}
as a starting point. From this equation it follows that (see \cite{Hollands:2007zg} Lemma 9) $\mathcal{A}_n^m(x_1, \dots ,x_n) = \sum_{k=1}^n d_k \mathcal{C}_{n/k}^m(x_1, \dots ,x_n)$
for some $\mathcal{C}_{n/k}^m \in P^{4/ \dots 3/ \dots 4 }(M^n)$. Using this quantities, we make a further redefinition and pass to another scheme $\hat{T}$ by setting
\begin{equation}\label{hat-D}
\hat{D}_n^{m}(\textbf{L}_1(x_1) \otimes \dots \otimes \textbf{K}_1(x_k) \otimes \dots \otimes \textbf{L}_1(x_n)) = - \hbar^m \mathcal{C}_{n/k}^m (x_1) \delta(x_1, \dots ,x_n),
\end{equation}
where $\textbf{K}_1 \in \textbf{P}_1^3(M)$ is defined by $\hat{s}_0 \textbf{L}_1 = d \textbf{K}_1$. Again, we express the lowest order in the expansion of anomaly in the scheme $\hat{T}$ in terms of that in the scheme $\tilde{T}$:
\begin{equation}
\hat{A}^m(e_\otimes^L) = \tilde{A}^m(e_\otimes^L) + \frac{1}{2}\hat{D}((\hat{S}_0 + L, \hat{S}_0 + L) \otimes e_\otimes^L).
\end{equation}
However, using \eqref{hat-D} we have $\hat{D}^m((L,L) \otimes e_\otimes^L)=0$ and $\hat{D}^m(\hat{s}_0 {L} \otimes e_\otimes^L)= - \tilde{A}^m(e_\otimes^L)$.
This means that there exists a scheme $\hat{T}$ in which the anomaly $\hat{A}(e_\otimes^L)$ vanishes to lowest order in $\hbar$ and to all orders in $\lambda$. Proceeding by iteration in higher powers of $\hbar$, we conclude that the anomaly can be removed to all orders in $\hbar$ and $\lambda$.
\end{proof}
\begin{rmk}\label{YM-anomaly}
For the pure Yang-Mills case, $H^{4}_1(\hat{s}|d, M )$ is not trivial. In fact, when $G$ is semi-simple with no abelian factors, this cohomology class is generated by the so-called ``gauge anomaly''  \cite{Barnich:2000zw} of the form
\begin{equation}\label{gaugeanomaly}
\mathcal{A} = d C^I \wedge \left( d_{IJK} A^J \wedge d A^K - \frac{1}{12} D_{IJKL} A^J \wedge A^K \wedge A^L \right),
\end{equation} 
where $d_{IJK} \in \g^{\otimes 3}$ and $d_{IJKL} \in \g^{\otimes 4}$ are $\g$-invariant totally symmetric tensors in some representations of the Lie algebra $\mathfrak{g}$.
Nevertheless, for pure Yang-Mills theory with a generic gauge group one can still argue \cite{Hollands:2007zg} that $a^{m}(x)$ is the zero element in this cohomology class as follows. In Minkowski space-time, where parity is an isometry, one can argue that $a^m(x)$ is parity odd, i.e. it transforms as $a^m \mapsto -a^m$ under $d x \mapsto - d x$. However, the gauge anomaly \eqref{gaugeanomaly} is evidently even under parity. Therefore $a^m(x)$ is the zero element in the cohomology. On the other hand, according to the equation (47) of \cite{Hollands:2007zg}, at dimension $4$ and ghost number $1$ there cannot be any space-time curvature contribution to the gauge anomaly. Thus, since anomaly is a local-covariant quantity, when it vanishes in one space-time it vanishes on all space-times. 
\end{rmk}

\section{Quantum BRST charge and the algebra of interacting quantum fields}

\subsection{Conservation of the renormalized BRST Noether current}\label{dJ=0}

We have already shown in section \ref{FreeQFT} that the free part $\textbf{J}_0$ of the BRST current $\textbf{J}$ is conserved in the free theory.
In this section, we review the argument in \cite{Hollands:2007zg} which leads to the conservation of the full interacting BRST current. We show that once a renormalization scheme is chosen in which the anomaly $A(e_\otimes^L)$ vanishes, the interacting field corresponding to Noether current of BRST symmetry $\textbf{J}(x)$ is conserved as a consequent of the Ward identity \eqref{WI}. The important point about the proof is that it holds true irrespective of the functional form of the classical BRST current $\textbf{J}(x)$, and therefore is satisfied for a wider class of theories with local gauge symmetry which admit an appropriate BRST formulation.  We first state the following lemma which is needed in the proof of our main Theorem \ref{[QL,-]}.

\begin{lemma}[\cite{Hollands:2007zg}]\label{lemma-T(dJ0+int)}
 Let $L$ be the cutoff interaction \eqref{interaction-L}. If the renormalization scheme satisfies $A(e_\otimes^L)=0$, then there exist another renormalization scheme in which the following identity holds for all $x \in M$
\begin{align}  \label{dJ=0-proof}
 T \big( \{ d \textbf{J}_0(x) + \hat{\hs}_0 L(x) + (L, L)(x)  \}   \otimes e_\otimes^{i L/\hbar} \big) \eqos 0, 
\end{align}
where
\begin{align}
\hat{\hs}_0 L(x)& =  (\hat{S}_0, \Phi^i(x)) (\Phi_i^\ddag(x), L) + (\Phi \leftrightarrow \Phi^\ddag) \\ 
(L, L)(x) & =  ( L, \Phi^i(x)) (\Phi_i^\ddag(x), L).
\end{align}
\end{lemma}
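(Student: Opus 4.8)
The plan is to recognize the bracketed density in \eqref{dJ=0-proof} as the full (cutoff) BRST current divergence and to reduce the statement to the anomaly-free Ward identity \eqref{WI}, up to a finite change of scheme. First I would do the classical bookkeeping. Since $\textbf{J}_0$ is the free Noether current, its divergence is $d\textbf{J}_0(x)=(\hat{S}_0,\Phi^\ddag(x))(\Phi(x),\hat{S}_0)$, and inserting $\hat{S}=\hat{S}_0+L$ into the Noether identity \eqref{dJ(x)} while using $(\hat{S}_0,\hat{S}_0)=0$ gives
\[
d\textbf{J}_0(x)+\hat{\hs}_0 L(x)+(L,L)(x)=(\hat{S}_0+L,\Phi^\ddag(x))(\Phi(x),\hat{S}_0+L)=d\textbf{J}(x).
\]
Thus, by the action Ward identity \textbf{T10}, the claim \eqref{dJ=0-proof} is equivalent to the conservation $d_x\,T(\textbf{J}(x)\otimes \eti{L})\eqos 0$ of the interacting current, which is precisely the local content of the Ward identity I want to extract.

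For the quantum step I would start from \eqref{WI} with $F=L$, available because the chosen scheme has $A(\eti{L})=0$. Using $(\hat{S}_0,\hat{S}_0)=0$ and the graded symmetry \eqref{graded-symm}, its right-hand side is $-T((\hat{S}_0,L)\otimes \eti{L})-\tfrac12 T((L,L)\otimes \eti{L})$, whose local densities are exactly $\hat{\hs}_0 L(x)$ and $(L,L)(x)$ up to the Grassmann signs and combinatorial factors fixed in appendix \ref{app1}. The remaining term $d\textbf{J}_0(x)$ carries only free equation-of-motion content, cf.\ \eqref{dJ0=0}; inserted into $T(\,\cdot\,\otimes\eti{L})$ it is resolved by the free field equation axiom \textbf{T9} into contact terms proportional to $\delta L/\delta\Phi$, which are arranged to cancel the anti-bracket densities above. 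The hypothesis $A(\eti{L})=0$ is exactly the statement that this cancellation closes without obstruction at the integrated level, i.e.\ that \eqref{WI} holds rather than its anomalous counterpart \eqref{AnWI}.

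It remains to promote the integrated identity to the pointwise one for every $x$, and this is where I expect the real work to lie. I would exploit the freedom in the smearing together with \textbf{T10} to reduce the obstruction to a single local, covariant density of the correct dimension and ghost number supported at the diagonal; by the algebraic Poincar\'e lemma such a density is $\hat{\hs}_0$- plus $d$-exact, hence can be cancelled by passing to a new renormalization scheme through a finite counterterm hierarchy $D_n$ obeying \textbf{D1}--\textbf{D7}, in exactly the manner of the anomaly removal of Theorem \ref{anomaly-removal}. This last step---controlling the coincident-point local terms generated in localizing the charge commutator and showing the residual obstruction is a removable finite renormalization---is the main obstacle, since the earlier steps amount to the classical Noether identity dressed with the already-established Ward identity, whereas here the $D$-axioms and the cohomological structure genuinely enter.
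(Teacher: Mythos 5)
The paper does not actually prove this lemma itself: it is imported from \cite{Hollands:2007zg}, and the closest in-text analogue of the argument is the second half of the sketch of Theorem \ref{anomaly-removal}. Measured against that, your architecture is right: the classical bookkeeping, the reduction of the integrated statement to the Ward identity \eqref{WI} with $F=L$ (whose right-hand side is indeed $(\hat S_0,L)+\tfrac12(L,L)$ inserted into $T(\,\cdot\,\otimes e_\otimes^{iL/\hbar})$, while $[Q_0,-]$ produces the $\int g\, d\textbf{J}_0$ insertion), and the recognition that the real content is the passage from the integrated identity to the pointwise one by a further finite change of scheme --- which is exactly why the lemma says ``there exists another renormalization scheme''.

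The gap is in how you justify that last step. You claim the diagonal-supported obstruction is ``$\hat{\hs}_0$- plus $d$-exact by the algebraic Poincar\'e lemma''. The algebraic Poincar\'e lemma only converts $d$-closed densities into $d$-exact ones; it gives no $\hat{\hs}_0$-exactness, and a statement of that kind would amount to triviality of $H^4_1(\hat{\hs}|d,M)$ at the local level --- which is \emph{not} available for pure Yang--Mills (Remark \ref{YM-anomaly}: that ring is nontrivial, and the anomaly is excluded by parity, not by cohomology), so your argument as written is either circular or fails for the very theory the paper treats. The mechanism actually used (Lemma 9 of \cite{Hollands:2007zg}, mirrored in the second part of the proof of Theorem \ref{anomaly-removal}) needs no $\hat{\hs}$-cohomology at this stage: the obstruction $\mathcal{A}_n(x_1,\dots,x_n)$ is supported on the total diagonal, its integral vanishes by the already-established integrated identity, hence it is a sum of total derivatives $\sum_k d_{x_k}\mathcal{C}_{n/k}$ in the several variables, and counterterms of the form \eqref{hat-D} built from the $\mathcal{C}_{n/k}$ remove it. A smaller imprecision: your identification of the bracketed density with $d\textbf{J}(x)$ holds only on $M_T$ where $f=1$ (cf.\ \eqref{dJinM_T}); the lemma's density is cutoff-dependent and asserted for all $x\in M$, so the proof must work with the density as written rather than with $d\textbf{J}$.
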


\begin{thm} [\cite{Hollands:2007zg}]\label{dJ-int=0}
In a renormalization scheme such that $A(e_\otimes^L)=0$ and the identity \eqref{dJ=0-proof} holds, the interacting BRST current is conserved on-shell.
\end{thm}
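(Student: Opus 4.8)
The plan is to reduce the conservation of the interacting current $\textbf{J}_L(x)\equiv T_{L,1}(\textbf{J}(x))$ to the algebraic identity \eqref{dJ=0-proof} provided by Lemma \ref{lemma-T(dJ0+int)}; the target is $d\textbf{J}_L(x)\eqos 0$. First I would write the interacting field through the Bogoliubov formula \eqref{int-field-def}, $\textbf{J}_L(x)=T(\eti{L})^{-1}\star T(\textbf{J}(x)\otimes \eti{L})$, and use the Action Ward identity \textbf{T10}, equation \eqref{ActionWI}, together with the fact that $T(\eti{L})^{-1}$ does not depend on $x$, to pull the exterior derivative inside:
\begin{equation}
d\textbf{J}_L(x) = T(\eti{L})^{-1}\star T\big(d\textbf{J}(x)\otimes \eti{L}\big).
\end{equation}
It then suffices to show that the second factor lies in the free equation-of-motion ideal $\mathcal{J}_0$.

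Second, I would identify the classical density $d\textbf{J}(x)$ with the combination appearing in \eqref{dJ=0-proof}. Starting from the classical conservation identity \eqref{dJ(x)}, $d\textbf{J}(x)=(\hat{S},\Phi^\ddag(x))(\Phi(x),\hat{S})$, I substitute the split $\hat{S}=\hat{S}_0+L$ (legitimate pointwise on the region where the cutoff obeys $f=1$, so that the $df$-terms which would otherwise spoil the Noether invariance of the cutoff action are absent) and expand the anti-bracket \eqref{antibracket} bilinearly:
\begin{align*}
(\hat{S}_0+L,\Phi^\ddag(x))(\Phi(x),\hat{S}_0+L) ={}& (\hat{S}_0,\Phi^\ddag(x))(\Phi(x),\hat{S}_0)\\
& + (\hat{S}_0,\Phi^\ddag(x))(\Phi(x),L) + (L,\Phi^\ddag(x))(\Phi(x),\hat{S}_0)\\
& + (L,\Phi^\ddag(x))(\Phi(x),L).
\end{align*}
Using the graded symmetry \eqref{graded-symm} to reorder the factors, the three lines are exactly $d\textbf{J}_0(x)$, $\hat{s}_0 L(x)$ and $(L,L)(x)$ in the notation of the Lemma, so that $d\textbf{J}(x)=d\textbf{J}_0(x)+\hat{s}_0 L(x)+(L,L)(x)$ as local functionals, and the independence of the final statement from the cutoff is then furnished by the algebraic adiabatic limit.

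Third, I would invoke Lemma \ref{lemma-T(dJ0+int)}: in a scheme where $A(\et{L})=0$ one has $T(\{d\textbf{J}_0(x)+\hat{s}_0 L(x)+(L,L)(x)\}\otimes \eti{L})\eqos 0$, hence $T(d\textbf{J}(x)\otimes \eti{L})\eqos 0$. Since $\mathcal{J}_0$ is a $\star$-ideal, left $\star$-multiplication by $T(\eti{L})^{-1}$ maps it into itself, and therefore $d\textbf{J}_L(x)\eqos 0$, which is the asserted on-shell conservation of the renormalized interacting BRST current.

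The main obstacle I anticipate is the careful bookkeeping in the second step: matching the bilinear expansion of the anti-bracket to the precise definitions of $\hat{s}_0 L(x)$ and $(L,L)(x)$ requires tracking the Grassmann signs and the left/right functional derivatives in \eqref{antibracket}, and one must verify that replacing the true action by the cutoff action is harmless precisely because $df$ vanishes wherever the interacting field is evaluated. A secondary technical point is justifying that $d$ commutes with the entire Bogoliubov construction, which rests on \textbf{T10} and the $x$-independence of $T(\eti{L})^{-1}$; this is routine but must be stated explicitly, since without it the reduction to Lemma \ref{lemma-T(dJ0+int)} would not be valid.
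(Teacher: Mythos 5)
Your proposal is correct and follows essentially the same route as the paper: restrict to the region where the cutoff equals one so that $d \textbf{J}(x) = d\textbf{J}_0(x) + \hat{\hs}_0 L(x) + (L,L)(x)$, then apply Lemma \ref{lemma-T(dJ0+int)} and the ideal property of $\mathcal{J}_0$. The only difference is that you make explicit two steps the paper leaves implicit (pulling $d$ through the Bogoliubov formula via \textbf{T10}, and the $\star$-ideal argument for the prefactor $T(\eti{L})^{-1}$), which is a harmless and arguably welcome addition.
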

\begin{proof}
According to the discussion of the algebraic adiabatic limit in Section \ref{interactingtheory}, it suffices to prove
\begin{equation}\label{T(dJ)=0}
 T \big( d \textbf{J}(x)   \otimes e_\otimes^{i L/\hbar} \big)=0, \hspace{5 mm}  \forall x \in M_T,
\end{equation}
 where $M_T = (-T, T) \times \Sigma$ is the region where $f=1$.

We first look at the divergence of the classical BRST current which using \eqref{dJ(x)} and the fact that the cutoff function is equal $1$ in $M_T$ can be written as
\begin{equation}\label{dJinM_T}
d \textbf{J} (x) =   (\hat{S}_0 + L, \Phi^i(x)) (\Phi_i^\ddag(x), \hat{S}_0 + L), \hspace{5 mm} \forall x \in M_T,
\end{equation}
Expanding in powers of $\lambda$, this leads to
\begin{align}
\label{dJ_0inM_T}
d \textbf{J}_0(x) &=  (\hat{S}_0, \Phi^i(x)) (\Phi_i^\ddag(x), \hat{S}_0) \hspace{5 mm} \forall x \in M_T, \\ 
\label{dJ_>0inM_T}
d \textbf{J}_{\text{int}}(x) &=  (\hat{S}_0, \Phi^i(x)) (\Phi_i^\ddag(x), L) + (\Phi \leftrightarrow \Phi^\ddag) + ( L, \Phi^i(x)) (\Phi_i^\ddag(x), L),\hspace{5 mm} \forall x \in M_T.
\end{align}
Thus, using the identity \eqref{dJ=0-proof} proved in Lemma \ref{lemma-T(dJ0+int)} we find that for all $ x \in M_T$ we have
\begin{align}\nonumber
 T ( d \textbf{J}(x)   \otimes e_\otimes^{i L/\hbar} ) & =  T ( \{ d \textbf{J}_0(x) + d \textbf{J}_\ia (x) \}   \otimes e_\otimes^{i L/\hbar} ) \\ \nonumber
 & = T \big( \{ d \textbf{J}_0(x) + \hat{\hs}_0 L(x) + (L, L)(x)  \}   \otimes e_\otimes^{i L/\hbar} \big) \\
 & \eqos 0.
\end{align}

\end{proof}

\subsection{Action of $[\textbf{Q}_L, -]$ on quantum fields}
In this section, we derive our main result concerning the commutator of the quantum BRST charge and interacting time-ordered products in Theorem \ref{[QL, OL]}.
Note that since the current $\textbf{J}_L$ is conserved on-shell, the corresponding BRST charge $\textbf{Q}_L = \int \gamma \wedge \textbf{J}_L$, where $\gamma$ is defined in below equation \eqref{BRSTcharge}, is independent of the precise choice of $\gamma$ up to an element in $\mathcal{J}_0$, i.e., when considered as an equivalence class in the \textbf{on-shell interacting algebra}
\beq \label{int-on-shell-alg}
\hat{\bF}_L := \hat{\bW}_L / \mathcal{J}_0.
\eeq

In the following theorem, we derive an expression for $[\textbf{Q}_L, -]: \hat{\bF}_L \rightarrow \hat{\bF}_L$ when acting on $T_{L,n}(\mathcal{O}_1 \otimes \dots \otimes  \mathcal{O}_n)$ in terms of $[{Q}_0, T_{L,n}(\mathcal{O}_1 \otimes \dots \otimes  \mathcal{O}_n)]$, with $Q_0$ being the free BRST charge, and in Theorem \ref{[QL, OL]} we express this commutator in terms of the quantum BRST operator \eqref{hat-q}. Next, in Section \ref{sectionQ2=0}, we prove that $[\textbf{Q}_L, -]$ is nilpotent.

\begin{thm}\label{[QL,-]} 
In a renormalization scheme such that $A(e_\otimes^L)=0$, it holds
\begin{align}\nonumber
[ \textbf{Q}_L, T_{L,n}(\mathcal{O}_1 \otimes \dots \otimes  \mathcal{O}_n)] &\eqos [Q_0, T_{L,n}(\mathcal{O}_1 \otimes \dots \otimes \mathcal{O}_n)] \\ \label{[QL,O1 ... On]}
& \quad +  R_{L}(\mathcal{O}_1 \otimes \dots \otimes \mathcal{O}_n; \frac{1}{2} (\hat{S}_0 +L,\hat{S}_0 +L)),
\end{align}
where $Q_0$ is the BRST charge of the free theory.
\end{thm}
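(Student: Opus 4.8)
The plan is to compute the graded commutator geometrically, exploiting that the conserved interacting current $\textbf{J}_L$ (Theorem \ref{dJ-int=0}) makes the charge $\textbf{Q}_L = \int_M\gamma\wedge\textbf{J}_L$ independent, modulo $\mathcal{J}_0$, of the closed one-form $\gamma$. First I would fix a causally convex region $\cR\subset M_T$ containing the supports of all the $\mathcal{O}_i$ and choose two representatives $\gamma^{\pm}$ localized on Cauchy surfaces $\Sigma^{\pm}$ lying entirely to the future, respectively the past, of $\cR$; writing $\textbf{Q}_L^{\pm} = \int\gamma^{\pm}\wedge\textbf{J}_L$ we then have $\textbf{Q}_L^{+}\eqos\textbf{Q}_L^{-}\eqos\textbf{Q}_L$.

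Next I would evaluate the two terms of $[\textbf{Q}_L, T_{L,n}(\mathcal{O}_1\otimes\dots\otimes\mathcal{O}_n)]$ with different representatives, using $\textbf{Q}_L^{+}$ on the left-multiplying factor and $\textbf{Q}_L^{-}$ on the right-multiplying one. Since $\Sigma^{+}$ lies to the future and $\Sigma^{-}$ to the past of all the $\mathcal{O}_i$, the causal factorization of interacting time-ordered products \eqref{eq:CausalFactorization_int} turns both products into $T_{L,n+1}$ with the charge adjoined, and graded symmetry \textbf{T5} collapses the commutator to $[\textbf{Q}_L, T_{L,n}(\otimes\mathcal{O}_i)]\eqos T_{L,n+1}((\textbf{Q}^{+}-\textbf{Q}^{-})\otimes\mathcal{O}_1\otimes\dots\otimes\mathcal{O}_n)$, where $\textbf{Q}^{\pm}$ now denote the classical charges. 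By Stokes' theorem and $d\gamma = 0$, the difference $\textbf{Q}^{+}-\textbf{Q}^{-}$ equals the integral of $d\textbf{J}$ over the slab bounded by $\Sigma^{\pm}$, and on $M_T$ the classical identity \eqref{dJinM_T} rewrites this divergence through the anti-bracket $(\hat S_0+L,\Phi^i)(\Phi_i^\ddag,\hat S_0+L)$.

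The central step is then to split the divergence as $d\textbf{J} = d\textbf{J}_0 + d\textbf{J}_{\ia}$ according to \eqref{dJ_0inM_T}--\eqref{dJ_>0inM_T} and to recognize the two resulting contributions. The free piece $d\textbf{J}_0 = (\hat S_0,\Phi^i)(\Phi_i^\ddag,\hat S_0)$ carries a factor proportional to the free field equations; feeding it into $T_{L,n+1}$ and resolving that factor by the free-field-equation axiom \textbf{T9} produces exactly the contact terms generated by the free BRST charge, i.e.\ $[Q_0, T_{L,n}(\otimes\mathcal{O}_i)]$. The interacting remainder $d\textbf{J}_{\ia}$, together with the tail of the $\gamma$-integration outside the slab, is reorganized with the interacting retarded product identity \eqref{eq:R_1_int} and the conservation statement $\big(\tfrac12(\hat S_0+L,\hat S_0+L)\big)_L\eqos 0$ (Lemma \ref{lemma-T(dJ0+int)}) into $R_L(\mathcal{O}_1\otimes\dots\otimes\mathcal{O}_n;\tfrac12(\hat S_0+L,\hat S_0+L))$, whose causal support automatically restores the full functional from its restriction to the slab.

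I expect the main obstacle to be precisely this last identification. Disentangling $[Q_0,-]$ from the retarded correction is delicate because both originate from the single anti-bracket $(\hat S_0+L,\hat S_0+L)$: axiom \textbf{T9} only resolves the $\delta\hat S_0/\delta\Phi$ part, so one must control the contact terms it generates and verify that the interaction-dependent remainder assembles, with the correct Grassmann signs and only modulo $\mathcal{J}_0$, into the claimed retarded product rather than a bulk time-ordered insertion.
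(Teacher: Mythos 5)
Your first half reproduces the paper's strategy faithfully: localizing the $\mathcal{O}_i$ in $\cR$, splitting the charge into future and past representatives $\gamma^\pm$, collapsing the commutator via causal factorization into $T_{L,n+1}(\int h\, d\textbf{J}\otimes\cdots)$, and decomposing $d\textbf{J} = d\textbf{J}_0 + \hat{\hs}_0 L + (L,L)$ on $M_T$. The gap is in the second half, at exactly the point you flag as ``the main obstacle.'' Resolving the $d\textbf{J}_0$ insertion by axiom \textbf{T9} cannot produce $[Q_0, T_{L,n}(\mathcal{O}_1\otimes\dots\otimes\mathcal{O}_n)]$: first, $d\textbf{J}_0(x)=(\hat{S}_0,\Phi^i(x))(\Phi_i^\ddag(x),\hat{S}_0)$ is a \emph{composite} operator --- the field-equation factor multiplies $\hat{\hs}_0\Phi$ at the coincident point --- so \textbf{T9}, which requires $\delta S_0/\delta\Phi(x)$ to sit in a separate tensor slot, does not apply to it (this is precisely why Lemma \ref{lemma-T(dJ0+int)} is a nontrivial statement needing $A(e_\otimes^L)=0$); second, even formally, inside the Bogoliubov formula the functional derivative would also hit the $e_\otimes^{iL/\hbar}$ factors, generating $\hat{\hs}_0 L$-type insertions rather than the star-commutator with $Q_0$. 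The paper's actual mechanism is different: since $\textbf{J}_0$ is conserved everywhere, the surfaces defining $Q_0$ are pushed outside the support of the \emph{entire interaction} (a second cutoff $g$ equal to $1$ on $\supp f$, not merely on $\cR$), and free causal factorization \textbf{T7} then yields $T(e_\otimes^{iL/\hbar})^{-1}\star[Q_0, T(\mathcal{O}_1\otimes\dots\otimes\mathcal{O}_n\otimes e_\otimes^{iL/\hbar})]$. This is \emph{not} yet $[Q_0,T_{L,n}(\cdots)]$; the difference is $-[Q_0,T(e_\otimes^{iL/\hbar})^{-1}]\star T(\cdots\otimes e_\otimes^{iL/\hbar})$, and it is the Ward identity \eqref{WI} for $F=L$ --- the one place the hypothesis $A(e_\otimes^L)=0$ enters the main computation --- that evaluates this term and, combined with \eqref{eq:R_1_int}, assembles the retarded correction $R_L(\mathcal{O}_1\otimes\dots\otimes\mathcal{O}_n;\tfrac12(\hat{S}_0+L,\hat{S}_0+L))$. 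Your proposal never invokes \eqref{WI}, so the decomposition into the two terms of \eqref{[QL,O1 ... On]} cannot be established.

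A secondary error: you quote Lemma \ref{lemma-T(dJ0+int)} as the statement $\bigl(\tfrac12(\hat{S}_0+L,\hat{S}_0+L)\bigr)_L\eqos 0$. The lemma actually asserts $T\bigl((d\textbf{J}_0+\hat{\hs}_0L+(L,L))\otimes e_\otimes^{iL/\hbar}\bigr)\eqos 0$, which differs from your version by the interacting field of $d\textbf{J}_0$ (nonvanishing in general); its role in the paper is to kill the terms proportional to $h-g=\chi_-+\chi_+$ arising from the mismatch between the two localizations, a step absent from your outline because you work with a single cutoff $h$ throughout.
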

\begin{proof}
According to our discussion of algebraic adiabatic limit, the functionals $\mathcal{O}_1 , \dots,  \mathcal{O}_n$ are supported in a causally convex region $\cR \subset M_T$, where $M_T$ \eqref{M_T} is the region where the cutoff function $f=1$. Since $\textbf{J}_L(x)$ is conserved for $x \in M_T$ (Theorem \ref{dJ-int=0}), in the definition of the charge $\textbf{Q}_L = \int \gamma \wedge \textbf{J}_L$, the one-form $\gamma$ must be supported in $M_T$. Keeping in mind that $\textbf{Q}_L$ is independent of $\gamma$, we choose two such one-forms $\gamma_+$ and $\gamma_-$ which are supported in the future and past of $ \cR$. We denote their difference by $ d h = \gamma_+ - \gamma_-$ for some smooth compactly supported function $h$ which is equal $1$ on $\cR$. This is depicted in Figure \ref{fig:Localization}.
  
Then, using the causal factorization of interacting time-ordered products \eqref{eq:CausalFactorization_int} we get
 \begin{align} \nonumber 
 [\textbf{Q}_L,  T_{L,n}(\mathcal{O}_1 \otimes \dots \otimes  \mathcal{O}_n)] & = \textbf{Q}_L \star T_{L,n}(\mathcal{O}_1 \otimes \dots \otimes  \mathcal{O}_n)\\ \nonumber
 & \quad  -  T_{L,n}(\mathcal{O}_1 \otimes \dots \otimes  \mathcal{O}_n) \star \textbf{Q}_L\\ \nonumber 
&=   T_{L,1}( \int \gamma_-(x) \wedge \textbf{J}(x)) \star  T_{L,n}(\mathcal{O}_1 \otimes \dots \otimes  \mathcal{O}_n)  \\ \nonumber 
& \quad - T_{L,n}(\mathcal{O}_1 \otimes \dots \otimes  \mathcal{O}_n)  \star T_{L,1}( \int \gamma_+(x) \wedge \textbf{J}(x))\\ \nonumber
&= T_{L,n+1}( \int (\gamma_-(x) - \gamma_+(x)) \wedge \textbf{J}(x) \otimes \mathcal{O}_1 \otimes \dots \otimes  \mathcal{O}_n) \\ \nonumber
&= T_{L, n+1}( \int h(x) d \textbf{J}(x) \otimes  \mathcal{O}_1 \otimes \dots \otimes  \mathcal{O}_n)\\ \nonumber
&=T_{L, n+1}( \int h(x) d \textbf{J}_0(x) \otimes  \mathcal{O}_1 \otimes \dots \otimes  \mathcal{O}_n)\\ 
&  + T_{L, n+1}( \int h(x)( \hat{\hs}_0 L(x) + (L, L)(x)) \otimes  \mathcal{O}_1 \otimes \dots \otimes  \mathcal{O}_n),
\end{align}
where we have used the causal factorization of interacting time-ordered products \eqref{eq:CausalFactorization_int} and we have used the identity \eqref{dJ_>0inM_T} which states $d \textbf{J}(x) = d \textbf{J}_0(x) + \hat{\hs}_0 L(x) + (L, L)(x)$ for $x \in M_T$. 

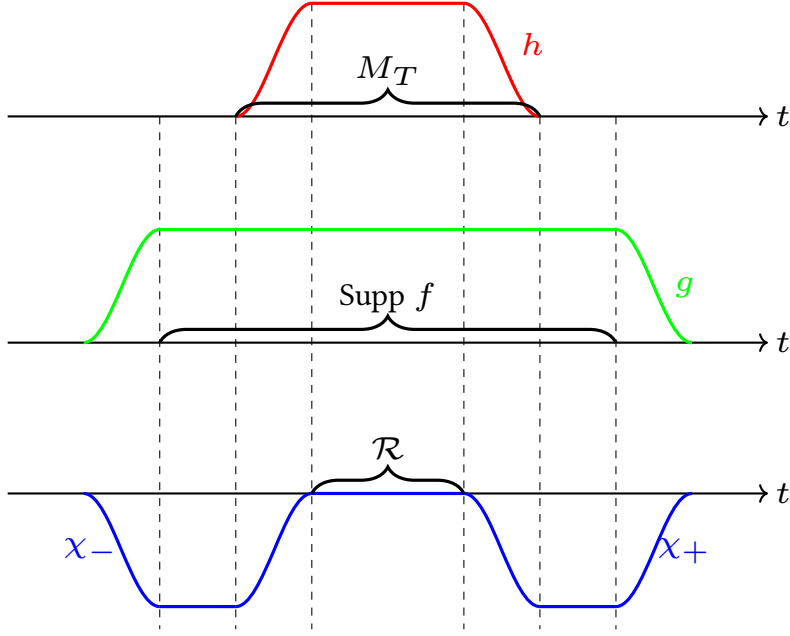
\begin{figure}
\begin{centering}
\begin{tikzpicture}[very thick, scale=2, every node/.style={transform shape}]

\draw[->, thick, color=black ] (3,0) -- (8,0) ;
\filldraw (8.1, 0)  node {\tiny{$t$}}; 
\draw[->, thick, color=black] (3,-1.5) -- (8,-1.5) ;
\filldraw (8.1, -1.5)  node {\tiny{$t$}}; 
\draw[->, thick, color=black] (3,-2.5) -- (8,-2.5) ;
\filldraw (8.1, -2.5)  node {\tiny{$t$}}; 

\draw [thin, black, dashed] (5, 0.75) -- (5, -3.40) ;
\draw [thin, black, dashed] (6, 0.75) -- (6, -3.40) ;
\draw [thin, black, dashed] (4.5, 0) -- (4.5, -3.40) ;
\draw [thin, black, dashed] (6.5, 0) -- (6.5, -3.40) ;
\draw [thin, black, dashed] (4, 0) -- (4, -3.40) ;
\draw [thin, black, dashed] (7, 0) -- (7, -3.40) ;

\filldraw[red] (6.45, 0.48)  node {\tiny{$h$}}; 
\draw[very thick, color=red ] (5, 0.75) -- (6, 0.75) ;
\draw[very thick, color=red ] (4.5,0) cos (4.75, 0.38) sin (5, 0.75)  ;
\draw[very thick, color=red ] (6, 0.75) cos (6.25, 0.38) sin (6.5, 0)  ;

\filldraw[green] (7.45, -1.13 )  node {\tiny{$g$}}; 
\draw[very thick, color=green] (4, -0.75) -- (7, -0.75) ;
\draw[very thick, color=green] (3.5, -1.5 ) cos (3.75, -1.13 ) sin (4,-0.75)  ;
\draw[very thick, color=green] (7, -0.75) cos (7.25, -1.13 ) sin (7.5, -1.5)  ;

\draw[very thick, color=blue] (5,-2.5) -- (6,-2.5) ;
\draw[very thick, color=blue ] (4.5, -3.25) cos (4.75, -2.88) sin (5, -2.5)  ;
\draw[very thick, color=blue ] (6, -2.5) cos (6.25, -2.88) sin (6.5, -3.25)  ;

\filldraw[blue]  (3.55, -2.88)  node {\tiny{$\chi_-$}}; 
\draw[very thick, color=blue] (4,-3.25) -- (4.5,-3.25) ;
\draw[very thick, color=blue ] (3.5, -2.5) cos (3.75, -2.88) sin (4, -3.25)  ;

\filldraw[blue]  (7.45, -2.88)  node {\tiny{$\chi_+$}};
\draw[very thick, color=blue] (6.5,-3.25) -- (7,-3.25) ;
\draw[very thick, color=blue ] (7, -3.25) cos (7.25, -2.88) sin (7.5, -2.5)  ;

\draw [decorate, decoration={brace,amplitude=10pt},xshift=0pt,yshift=0pt] (4.5,0) -- (6.5,0) node [black,midway,yshift=0.3 cm] {\tiny{$M_T$}};
\draw [decorate, decoration={brace,amplitude=10pt},xshift=0pt,yshift=0pt] (5,-2.5) -- (6,-2.5) node [black,midway,yshift=0.3 cm] {\tiny{$\cR$}}; 
\draw [decorate, decoration={brace,amplitude=10pt},xshift=0pt,yshift=0pt] (4,-1.5) -- (7,-1.5)  node [black,midway,yshift=0.3 cm] {\tiny{Supp $f$}}; 

\end{tikzpicture}
\caption{\label{fig:Localization} Different cutoff functions used in the proof of Theorem \eqref{[QL,-]}. }
\end{centering}
\end{figure}

We now want to write the term $T_{L, n+1}( \int h(x) d \textbf{J}_0(x) \otimes  \mathcal{O}_1 \otimes \dots \otimes  \mathcal{O}_n)$ in terms of the commutator with the free BRST charge $Q_0$. Since by Theorem \ref{thm:dJ=0} the free current $\textbf{J}_0(x)$ is conserved for all $x \in M$, in the definition of the free charge $Q_0= \int \eta \wedge \textbf{J}_0$ the one-form $\eta$ can be supported everywhere. We, then, choose another pair of one-forms $\eta_\pm$ supported in the past and future of the support of $L$, and let $d g = \eta_+ - \eta_-$ with $g$ a smooth function which is equal $1$ on support of $f$ (see Figure \ref{fig:Localization}). Thus using the causal factorization of time-ordered products \eqref{CFaxiom}, we get
\begin{align}\nonumber
T_{L, n+1}( \int g d \textbf{J}_0 \otimes \mathcal{O}_1 \otimes \dots \otimes  \mathcal{O}_n ) & =  T(e_\otimes^{iL/\hbar})^{-1} \star  T( \int (\eta_- -\eta_+) \wedge \textbf{J}_0 \otimes \mathcal{O}_1 \otimes \dots \otimes  \mathcal{O}_n \otimes e_\otimes^{iL/\hbar})\\
&= T(e_\otimes^{iL/\hbar})^{-1} \star  [Q_0, T(\mathcal{O}_1 \otimes \dots \otimes  \mathcal{O}_n \otimes e_\otimes^{iL/\hbar})].
\end{align}
Also since $g=1$ on Supp $f$, we get
\begin{align}\nonumber
&T_{L, n+1}( \int g(x)( \hat{\hs}_0 L(x) + (L, L)(x)) \otimes  \mathcal{O}_1 \otimes \dots \otimes  \mathcal{O}_n) \\ 
&= T_{L, n+1}( \frac{1}{2} (\hat{S}_0 +L,\hat{S}_0 +L) \otimes  \mathcal{O}_1 \otimes \dots \otimes  \mathcal{O}_n).
\end{align}

Thus, we obtain
 \begin{align} \nonumber 
 [\textbf{Q}_L,  T_{L,n}(\mathcal{O}_1 \otimes \dots \otimes  \mathcal{O}_n)] & =  T(e_\otimes^{iL/\hbar})^{-1} \star  [Q_0, T(\mathcal{O}_1 \otimes \dots \otimes  \mathcal{O}_n \otimes e_\otimes^{iL/\hbar})] \\ \nonumber
 &  + T_{L, n+1}( \frac{1}{2} (\hat{S}_0 +L,\hat{S}_0 +L) \otimes  \mathcal{O}_1 \otimes \dots \otimes  \mathcal{O}_n)\\ \nonumber
& + T_{L, n+1}( \int (h -g)(x) d \textbf{J}_0(x) \otimes  \mathcal{O}_1 \otimes \dots \otimes  \mathcal{O}_n)\\ \label{[QL,-]-proof}
&  + T_{L, n+1}( \int (h-g)(x)( \hat{\hs}_0 L(x) + (L, L)(x)) \otimes  \mathcal{O}_1 \otimes \dots \otimes  \mathcal{O}_n).
\end{align}
The sum of the first two terms on the right hand side of the above equation becomes 
\begin{align}\nonumber
& T(e_\otimes^{iL/\hbar})^{-1} \star  [Q_0, T(\mathcal{O}_1 \otimes \dots \otimes  \mathcal{O}_n \otimes e_\otimes^{iL/\hbar})] + T_{L, n+1}( \frac{1}{2} (\hat{S}_0 +L,\hat{S}_0 +L) \otimes  \mathcal{O}_1 \otimes \dots \otimes  \mathcal{O}_n)\\ \nonumber
& = [Q_0, T_{L,n}(\mathcal{O}_1 \otimes \dots \otimes  \mathcal{O}_n )] -  [Q_0, T(e_\otimes^{iL/\hbar})^{-1}] \star T(\mathcal{O}_1 \otimes \dots \otimes  \mathcal{O}_n \otimes e_\otimes^{iL/\hbar}) \\ \nonumber
& \quad + T_{L, n+1}( \frac{1}{2} (\hat{S}_0 +L,\hat{S}_0 +L) \otimes  \mathcal{O}_1 \otimes \dots \otimes  \mathcal{O}_n),
\end{align}
which upon using \eqref{WI} for $F=L$ and the relation \eqref{eq:R_1_int}, becomes the r.h.s. of \eqref{[QL,O1 ... On]}. It therefore remains to show that the sum of the third and fourth terms in \eqref{[QL,-]-proof} vanishes. To prove this, we note that $h-g$ can be written as $h-g = \chi_- + \chi_+$ where $\chi_\pm$ are smooth functions supported in the past and future of $\cR$.  Since  $\mathcal{O}_1 , \dots,  \mathcal{O}_n$ are supported in $\cR$, we can use the causal factorization of interacting time-ordered products and write
 \begin{align} \nonumber 
 &T_{L, n+1}( \int (h-g)(x)( d \textbf{J}_0(x) + \hat{\hs}_0 L(x) + (L, L)(x)) \otimes  \mathcal{O}_1 \otimes \dots \otimes  \mathcal{O}_n) \\ \nonumber
 & =  T_{L,1}( \int \chi_-(x)( d \textbf{J}_0(x) + \hat{\hs}_0 L(x) + (L, L)(x)) \star T_{L, n+1}( \mathcal{O}_1 \otimes \dots \otimes  \mathcal{O}_n)\\ \nonumber
 & + T_{L, n+1}( \mathcal{O}_1 \otimes \dots \otimes  \mathcal{O}_n) \star T_{L,1}( \int \chi_+(x)( d \textbf{J}_0(x) + \hat{\hs}_0 L(x) + (L, L)(x)) \\\label{[QL,-]-proof2}
 & =0,
\end{align}
where we have used \eqref{dJ=0-proof}.
\end{proof}

Theorem \ref{[QL,-]} gives an expression for the commutator with the interacting BRST charge in terms of the free BRST charge and an extra term. We would now like to derive an explicit formula for the commutator of $\textbf{Q}_L$ and quantum fields without any reference to $Q_0$, and express it in terms of the classical BRST differential $\hat{\hs}$ plus quantum corrections of order $O(\hbar)$ coming from the anomalies with higher insertions. This is given in the following theorem.

\begin{thm}\label{[QL, OL]}
In a renormalization scheme such that $A(e_\otimes^L)=0$, the following \textbf{interacting anomalous Ward identity} holds for $F = \int f \wedge \mathcal{O}$ for \emph{all} $\mathcal{O} \in \textbf{P}^p(M)$, $f \in \Omega_0^{4-p}(M)$,
\begin{align}\label{int-AWI}
[\textbf{Q}_L, T_L(e_\otimes^{iF/\hbar})] \eqos -  T_L \left( \{ \frac{1}{2} (\hat{S} + F, \hat{S} + F) + \hat{A}(e_\otimes^F) \}\otimes e_\otimes^{iF/\hbar} \right),
\end{align}
where 
\beq\label{hat-A-gen-func}
\hat{A}(e_\otimes^F) = \sum_{n=1} \frac{1}{n!} \hat{A}_n(F^{\otimes n}),
\eeq 
is the generating functional of \textbf{interacting anomalies} $\hat{A}_n$ defined by
\beq \label{hat-A-n}
\hat{A}_n( \mathcal{O}_1 \otimes \dots \otimes \mathcal{O}_n) =  A( \mathcal{O}_1 \otimes \dots \otimes \mathcal{O}_n \otimes e_\otimes^{L}).
\eeq 

\end{thm}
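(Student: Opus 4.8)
The plan is to promote the per-component identity of Theorem~\ref{[QL,-]} to the level of the generating functional and then rewrite the free-charge contribution $[Q_0,T_L(\eti{F})]$ purely in terms of interacting time-ordered products. First I would set all $\mathcal{O}_i=F$ in \eqref{[QL,O1 ... On]}, weight by $i^n/(\hbar^n n!)$, and sum over $n$; since $\sum_n \tfrac{i^n}{\hbar^n n!}R_L(F^{\otimes n};G)=R_L(\eti{F};G)$, this yields
\[
[\textbf{Q}_L,T_L(\eti{F})]\eqos [Q_0,T_L(\eti{F})]+R_L\big(\eti{F};\tfrac12(\hat{S}_0+L,\hat{S}_0+L)\big).
\]
Everything below is understood modulo $\mathcal{J}_0$, which is the algebra \eqref{int-on-shell-alg} in which $\textbf{Q}_L$ naturally lives.

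Next I would compute $[Q_0,T_L(\eti{F})]$ by writing $T_L(\eti{F})=T(\eti{L})^{-1}\star T(\et{i(F+L)/\hbar})$, using the identity $T(\eti{F}\otimes\eti{L})=T(\et{i(F+L)/\hbar})$ (graded symmetry and multilinearity of $T$) and the definition $T_L(\eti{F})=R(\eti{F};\eti{L})$. The engine is the $[Q_0,-]$ form of the anomalous Ward identity,
\[
[Q_0,T(\eti{G})]\eqos -\tfrac12 T\big((\hat{S}_0+G,\hat{S}_0+G)\otimes\eti{G}\big)-T\big(A(\et{G})\otimes\eti{G}\big),
\]
obtained from \eqref{AnWI} together with $[Q_0,-]\eqos i\hbar\,\hat{s}_0$ modulo $\mathcal{J}_0$, which holds because $Q_0$ generates $\hat{s}_0$ on the dynamical fields while both annihilate the anti-fields modulo the free equations of motion (recall $\hat{s}_0\Phi^\ddag=\delta_R\hat{S}_0/\delta\Phi\in\mathcal{J}_0$). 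Applying the graded Leibniz rule for $[Q_0,-]$ to the two factors, I treat the inverse factor via $[Q_0,T(\eti{L})^{-1}]=-T(\eti{L})^{-1}\star[Q_0,T(\eti{L})]\star T(\eti{L})^{-1}$, where the clean Ward identity \eqref{WI} applies since $A(\et{L})=0$; for the exponential factor I use the engine with $G=F+L$. Throughout I would collapse expressions using $T(\eti{L})^{-1}\star T(\cO\otimes\et{i(F+L)/\hbar})=T_L(\cO\otimes\eti{F})$.

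Two reorganizations then finish the identification. The anomaly term carries $A(\et{F+L})$, which I would rewrite as $\hat{A}(\et{F})$: expanding both in powers of $F$ and $L$, they differ only by the $F$-independent terms $\sum_{m\ge1}\tfrac1{m!}A_m(L^{\otimes m})=A(\et{L})$, which vanishes by hypothesis, so this contribution is exactly $-T_L(\hat{A}(\et{F})\otimes\eti{F})$. Expanding the retarded product by \eqref{eq:R_1_int} as $R_L(\eti{F};\tfrac12(\hat{S}_0+L,\hat{S}_0+L))=T_L(\tfrac12(\hat{S}_0+L,\hat{S}_0+L)\otimes\eti{F})-T_{L,1}(\tfrac12(\hat{S}_0+L,\hat{S}_0+L))\star T_L(\eti{F})$, I expect its $\star$-multiplicative piece to cancel exactly against the term $+T_{L,1}(\tfrac12(\hat{S}_0+L,\hat{S}_0+L))\star T_L(\eti{F})$ generated by $[Q_0,T(\eti{L})^{-1}]$.

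What survives is the single insertion $-\tfrac12(\hat{S}_0+F+L,\hat{S}_0+F+L)+\tfrac12(\hat{S}_0+L,\hat{S}_0+L)=-(\hat{S}_0+L,F)-\tfrac12(F,F)$, and the main obstacle is to upgrade $\hat{S}_0+L$ to the full gauge-fixed action $\hat{S}=\hat{S}_0+I$. This is where locality is essential: since $F$ is supported in $\cR\subset M_T$ where the cutoff $f=1$, while $I-L=\int(1-f)\textbf{L}_{\text{int}}$ is supported off $M_T$, the locality of the anti-bracket \eqref{antibracket} forces $(I-L,F)=0$, hence $(\hat{S}_0+L,F)=(\hat{S},F)=\hat{s}F$. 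The master equation $(\hat{S},\hat{S})=0$ then gives $-(\hat{S},F)-\tfrac12(F,F)=-\tfrac12(\hat{S}+F,\hat{S}+F)$, and assembling the surviving pieces yields exactly \eqref{int-AWI}. I expect the delicate points to be the anti-field bookkeeping in the $[Q_0,-]\eqos i\hbar\hat{s}_0$ bridge and making the support/locality argument for the $\hat{S}_0+L\to\hat{S}$ replacement airtight.
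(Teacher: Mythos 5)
Your proof is correct and follows essentially the same route as the paper's: reduce to Theorem \ref{[QL,-]}, then rewrite $[Q_0, T_L(e_\otimes^{iF/\hbar})]$ using the anomalous Ward identity together with the Bogoliubov formula, the hypothesis $A(e_\otimes^L)=0$ (which both cleans up the inverse factor and turns $A(e_\otimes^{F+L})$ into $\hat{A}(e_\otimes^F)$), and the locality argument upgrading $\hat{S}_0+L$ to $\hat{S}$ on the support of $F$. The only difference is presentational: you apply the anomalous Ward identity once at the generating-functional level via $T(e_\otimes^{iF/\hbar}\otimes e_\otimes^{iL/\hbar})=T(e_\otimes^{i(F+L)/\hbar})$, whereas the paper extracts the same cancellations component-wise by differentiating with respect to auxiliary parameters $\tau_i$.
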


\begin{proof}  
From Theorem \ref{[QL,-]}, we obtain
\begin{align}\label{[QL, TL]}
[\textbf{Q}_L, T_L(e_\otimes^{iF/\hbar})] \eqos [{Q}_0, T_L(e_\otimes^{iF/\hbar})] + R_L(e_\otimes^{iF/\hbar}; \frac{1}{2}(\hat{S}_0 + L, \hat{S}_0 + L)). 
\end{align}
It thus remains to show that 
\begin{align}\nonumber
[{Q}_0, T_L(e_\otimes^{iF/\hbar})] & \eqos -  T_L \big( \{ \frac{1}{2} (\hat{S} + F, \hat{S} + F) + \hat{A}(e_\otimes^F) \}\otimes e_\otimes^{iF/\hbar} \big) \\ \label{[Q0, TL]}
& \quad - R_L(e_\otimes^{iF/\hbar}; \frac{1}{2}(\hat{S}_0 + L, \hat{S}_0 + L)),
\end{align}
which upon expanding the generating functional $T_L(e_\otimes^{i F/ \hbar})$ is equivalent to
\begin{align} 
\nonumber
  \frac{1}{i \hbar}[{Q}_0,T_{L,n}(\mathcal{O}_1 \otimes \dots  \otimes \mathcal{O}_n)] = &  \sum_{k=0} T_{L,n}(  \mathcal{O}_1  \otimes \dots \otimes \hat{\hs} \mathcal{O}_i \otimes \dots \otimes \mathcal{O}_n)\\ \nonumber
& + \frac{\hbar}{i} \sum_{I_2}  T_{L,n-1}(  {(\mathcal{O}_i, \mathcal{O}_j)}_{i, j \in I_2} \otimes \bigotimes_{k \in I_2^c}  \mathcal{O}_k)\\ \nonumber
&+ \sum_{I} (\frac{\hbar}{i})^{|I|-1}  \Big( \hat{A}_{|I|}(\bigotimes_{i \in I} \mathcal{O}_i) \otimes \bigotimes_{j \in I^c}  \mathcal{O}_j \Big)\\ \label{[Q0, T(O1-On)]}
& - R_{L}(\mathcal{O}_1 \otimes \dots \otimes \mathcal{O}_n; \frac{1}{2} (\hat{S}_0 +L,\hat{S}_0 +L)),
\end{align}
where $I$ is a non-empty and ordered partition of the set $\{ 1, 2, \dots, n \}$ and $I^c$ is the complement partition and $|I_2|=2$. 
We will argue that the proof of the above identity follows from the anomalous Ward identity \eqref{AnWI}: For $n=1$, we calculate using \eqref{AnWI}
\begin{align}\nonumber
\hat{\hs}_0 T( \mathcal{O} \otimes e_\otimes^{iF/ \hbar}) &= \frac{\hbar}{i} \frac{d}{d \tau} \hat{\hs}_0 T(e_\otimes^{i (F + \tau \mathcal{O})/ \hbar}) \Big|_{\tau=0} \\ \nonumber
&= \frac{\hbar}{i} \frac{d}{d \tau} \left( \frac{i}{2\hbar}T( (\hat{S}_0 + F + \tau \mathcal{O}, \hat{S}_0 + F + \tau \mathcal{O}) \otimes e_\otimes^{i (F + \tau \mathcal{O})/ \hbar}) \right) \Big|_{\tau=0}\\ \nonumber
& \quad + \frac{\hbar}{i} \frac{d}{d \tau} \left( \frac{i}{\hbar}T( A(e_\otimes^{F + \tau \mathcal{O}}) \otimes e_\otimes^{i (F + \tau \mathcal{O})/ \hbar}) \right) \Big|_{\tau=0}\\ \nonumber
& = T(\big( (\hat{S}_0 + F, \mathcal{O}) + A(e_\otimes^F \otimes \mathcal{O}) \big) \otimes e_\otimes^{i F/ \hbar}) \\ \label{s0-T(O)}
& \quad +  \frac{i}{\hbar} T( \mathcal{O} \otimes \big( \frac{1}{2}(\hat{S}_0 + F, \hat{S}_0 + F) + A(e_\otimes^F) \big) \otimes  e_\otimes^{i F/ \hbar}).
\end{align}
We, therefore, have
\begin{align}\nonumber
\hat{\hs}_0 \mathcal{O}_L &=  \hat{\hs}_0 T(e_\otimes^{iL/ \hbar})^{-1} \star  T(\mathcal{O} \otimes e_\otimes^{iL/ \hbar}) + T(e_\otimes^{iL/ \hbar})^{-1} \star \hat{\hs}_0 T( \mathcal{O} \otimes e_\otimes^{iL/ \hbar}) \\ \nonumber
&= - T(e_\otimes^{iL/ \hbar})^{-1} \star \hat{\hs}_0 T(e_\otimes^{iL/ \hbar}) \star \mathcal{O}_L + T(e_\otimes^{iL/ \hbar})^{-1} \star \hat{\hs}_0 T(\mathcal{O} \otimes e_\otimes^{iL/ \hbar} ) \\ \label{s0PsiF}
&= \left( (\hat{S}_0 +L, \mathcal{O}) + A(e_\otimes^L \otimes \mathcal{O} )\right)_L +  \frac{i}{\hbar}  R (\mathcal{O}; \frac{1}{2} (\hat{S}_0 +L,\hat{S}_0+L) \otimes e_\otimes^{iL/\hbar}),
\end{align}
where we have again used the anomalous Ward identity \eqref{AnWI}, and equation \eqref{s0-T(O)} for the case $F=L$, and $A(e_\otimes^L)=0$.
 Going on-shell, that is setting $\hat{\hs}_0 \eqos \frac{1}{i \hbar}[Q_0, -]$, we obtain 
 \begin{align}
[Q_0, \mathcal{O}_L] \eqos i \hbar \left( (\hat{S}, \mathcal{O}) + \hat{A}_1(\mathcal{O} )\right)_L - R (\mathcal{O}; \frac{1}{2} (\hat{S}_0 +L,\hat{S}_0+L) \otimes e_\otimes^{iL/\hbar}),
\end{align}
where in the first term on the right hand side, we have used that $f=1$ on $\cR$ where $\cO$ is localized. This is \eqref{[Q0, T(O1-On)]} for $n=1$. 
For $n>1$, we replace $F$ with $F+ \tau_1 \mathcal{O}_1 + \dots \tau_n \mathcal{O}_n$ in \eqref{AnWI}, differentiate with respect to $\tau_1 \dots \tau_n$ and set $\tau_i =0$. This procedure, together with the following relations
\begin{align}\nonumber 
&\frac{d}{d \tau}\big( \frac{1}{2} (\hat{S}_0+F+ \tau \mathcal{O},\hat{S}_0 +F+ \tau \mathcal{O}) +A(e_\otimes^{F+ \tau \mathcal{O}})  \big) \big|_{\tau=0} =  (\hat{S}_0 + F, \mathcal{O})+ A( \mathcal{O} \otimes e_\otimes^F),\\ \label{tau-O-(1)}
& \frac{d}{d \tau}\big((\hat{S}_0 + F + \tau \mathcal{O}_2, \mathcal{O}_1)+ A( \mathcal{O}_1 \otimes e_\otimes^{F + \tau \mathcal{O}_2})  \big) \big|_{\tau=0} = (\mathcal{O}_1, \mathcal{O}_2)+ A( \mathcal{O}_1 \otimes \mathcal{O}_2 \otimes e_\otimes^F),\\ \label{tau-O-(2)}
& \frac{d}{d \tau} A( \mathcal{O}_1 \otimes \dots \otimes \mathcal{O}_k \otimes e_\otimes^{F+ \tau \mathcal{O}_{k+1}}) \big|_{\tau=0} = A( \mathcal{O}_1 \otimes \dots \otimes \mathcal{O}_{k+1} \otimes e_\otimes^{F}),\\ \label{tau-O-(3)}
\end{align}
leads, after straightforward calculations, to
\begin{align}  \nonumber
&\hat{\hs}_0 T(\mathcal{O}_1 \otimes \dots \otimes  \mathcal{O}_n \otimes e_\otimes^{iF/\hbar}) \\ \nonumber
=& \sum_{i=1}^n T (  \mathcal{O}_1  \otimes \dots \otimes (\hat{S}_0 + F, \mathcal{O}_i)  \otimes \dots \otimes \mathcal{O}_n \otimes e_\otimes^{iF/\hbar})\\  \nonumber
& + \frac{\hbar}{i} \sum_{I_2}^{n} T_{L,n-1}(  {(\mathcal{O}_i, \mathcal{O}_j)}_{i, j \in I_2} \otimes \bigotimes_{k \in I_2^c}  \mathcal{O}_k \otimes e_\otimes^{iF /\hbar}) \\  \nonumber
& +\sum_{I} (\frac{\hbar}{i})^{|I|-1}  T_{n - |I|+1} \Big(  A_{|I|}(\bigotimes_{i \in I} \mathcal{O}_i \otimes e_\otimes^{iF/\hbar}) \otimes \bigotimes_{j \in I^c}  \mathcal{O}_j \otimes e_\otimes^{iF/\hbar} \Big),\\ \label{Q0TF}
&  - T ( \big( \frac{1}{2}(\hat{S}_0 +F,\hat{S}_0 +F) + A(e_\otimes^F) \big)\otimes \mathcal{O}_1 \otimes \dots \otimes \mathcal{O}_n \otimes  e_\otimes^{iF/\hbar})=0.
\end{align}
Now setting $F=L$, $A(e_\otimes^L)=0$ and $\hat{\hs}_0 \eqos \frac{1}{i \hbar}[Q_0, -]$, we arrive at \eqref{[Q0, T(O1-On)]}.
\end{proof}

\begin{rmk}\label{remark-Q}
\begin{enumerate}

\item Equation \eqref{[Q0, TL]} expresses that the commutator of the free BRST charge $Q_0$ and interacting time-ordered products of local functional $F$ (localized in $\cR$) is given by two terms. The first one contains $\frac{1}{2} (\hat{S} + F, \hat{S} + F) + \hat{A}(e_\otimes^F)$  which are localized in $\cR$ where the cutoff $f=1$. However, in the second term, one is not  allowed to set $f=1$ since $(\hat{S}_0 +L,\hat{S}_0+L)$ is not localized in $\cR$ and setting $f=1$ would lead to IR divergences. Nevertheless, our main formula \eqref{[QL,O1 ... On]} (or equivalently \eqref{[QL, TL]}) is essentially stating that the operator $[\textbf{Q}_L, -]$ differs from $[Q_0,-]$ exactly by $+ R_{L}(e_\otimes^{i F/ \hbar}; \frac{1}{2} (\hat{S}_0 +L,\hat{S}_0+L))$, and hence in the expressions in Theorem \ref{[QL, OL]} such terms are absent, i.e. those expressions are IR-finite.

\item If one, nevertheless, formally sets $f=1$ in $L$, in the second term in the right hand side of \eqref{[QL,O1 ... On]}, then $(\hat{S}_0 +L, \hat{S}_0+L)= (S,S)=0$, and formally 
$\frac{1}{2} R_{n,1}(\mathcal{O}_1 \otimes \dots \otimes \mathcal{O}_n; (\hat{S}_0 +L,\hat{S}_0 +L) \otimes e_\otimes^{iL/\hbar}) \rightarrow 0$. Therefore,
\begin{align}
[\textbf{Q}_L, T_{L,n}(\mathcal{O}_1\otimes \dots \otimes \mathcal{O}_n)] =[Q_0,  T_{L,n}(\mathcal{O}_1\otimes \dots \otimes \mathcal{O}_n)]_\star, \quad \text{\emph{(FORMALLY)}}.
\end{align}
That is, the commutator of the interacting BRST charge and interacting fields coincides with the commutator of the free BRST charge with them.
\end{enumerate}
\end{rmk}

\subsubsection*{Formal BRST-invariance of the S-matrix}

In a renormalization scheme with $A(e_\otimes^L)=0$, the Ward identity \eqref{WI} for $F=L$ expresses the commutator of the free BRST charge and the local S-matrix $\mathcal{S}(L)= T(e_\otimes^{i L/\hbar})$:
\beq
[Q_0, \mathcal{S}(L)] \eqos T\big( \frac{1}{2}(\hat{S}_0 +L, \hat{S}_0+L) \otimes e_\otimes^{i L/ \hbar}\big).
\eeq
As we explained in the second point of the Remark \ref{remark-Q}, if one formally sets $f=1$ in $L$, then $(\hat{S}_0 +L, \hat{S}_0+L)= (S,S)=0$ and $[Q_0, -]= [\textbf{Q}_L, -]$ . Therefore, this leads to the conclusion that if $A(e_\otimes^L)=0$, then the local S-matrix formally commutes with the quantum BRST charge:
\begin{align} \label{[QL,T]}
[\textbf{Q}_L, \mathcal{S}(L)] \eqos 0, \quad \text{(FORMALLY)}.
\end{align}


\subsection*{Interacting consistency conditions}

In order to prove the nilpotency of the derivation $[Q_L, -]$ on the interacting algebra, we need to obtain a consistency condition for the interacting anomalies \eqref{hat-A-n}. 
\begin{thm}\label{consistencycond}
In a renormalization scheme such that $A(e_\otimes^L)=0$, the interacting anomalies $A_{L,n}$ defined by \eqref{hat-A-n} satisfy the following \textbf{interacting consistency condition} for $F = \int f \wedge \mathcal{O}$ for \emph{all} $\mathcal{O} \in \textbf{P}^p(M)$, $f \in \Omega_0^{4-p}(M)$
\begin{align}\label{master-cons-cond}
(\hat{S} + F, A_{L}(e_\otimes^F)) +  A_L \big( \{ \frac{1}{2} (\hat{S} + F, \hat{S} + F) +  A_{L}(e_\otimes^F) \} \otimes e_\otimes^F \big) =0.
\end{align}
\end{thm}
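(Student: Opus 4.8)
The plan is to obtain \eqref{master-cons-cond} directly from the classical (non-interacting) consistency condition \eqref{cons} by the substitution $F \mapsto F + L$, exploiting that the interacting anomaly is by definition the free anomaly with an exponential insertion of $L$. Since $e_\otimes^{F+L} = e_\otimes^F \otimes e_\otimes^L$, the defining relation \eqref{hat-A-n} gives $A(X \otimes e_\otimes^{F+L}) = A_L(X \otimes e_\otimes^F)$ for every insertion $X$, and in particular $A(e_\otimes^{F+L}) = A_L(e_\otimes^F)$. Thus, writing \eqref{cons} with $F$ replaced by $F+L$ automatically turns the two anomaly factors into their interacting counterparts, and it then only remains to massage the antibracket terms into the form $\hat{S} + F$ and to dispose of one residual term.

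First I would treat the antibracket terms. Because all arguments of $F$ are localized in the causally convex region $\cR \subset M_T$ on which $f = 1$, the functional derivatives entering the antibracket cannot distinguish $L$ from the true interaction $I$; hence $(\hat{S}_0 + L, \,\cdot\,) = (\hat{S}_0 + I, \,\cdot\,) = (\hat{S}, \,\cdot\,)$ when paired against anything supported in $\cR$. Applying this to $A_L(e_\otimes^F)$, whose $F$-independent part $A(e_\otimes^L)$ vanishes by the scheme choice so that the remainder is supported in $\cR$, converts the first term of \eqref{cons} into $(\hat{S} + F, A_L(e_\otimes^F))$. For the quadratic term, a short computation using $(\hat{S},\hat{S})=(\hat{S}_0,\hat{S}_0)=0$ and $(L - I, F) = 0$ (again by localization) yields $\tfrac{1}{2}(\hat{S}_0 + F + L, \hat{S}_0 + F + L) = \tfrac{1}{2}(\hat{S} + F, \hat{S} + F) + \tfrac{1}{2}(\hat{S}_0 + L, \hat{S}_0 + L)$, so that a single residual insertion $\tfrac{1}{2}(\hat{S}_0 + L, \hat{S}_0+L)$ is produced inside the outer anomaly.

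The crux is then to show that this residual contribution $A_L\big(\tfrac{1}{2}(\hat{S}_0 + L, \hat{S}_0 + L) \otimes e_\otimes^F\big)$ vanishes. I would split it by the power of $F$. For the terms carrying at least one factor of $F$, property A3 forces the anomaly onto the total diagonal, hence into $\cR$; there $f=1$ in a neighbourhood, and the local density $\tfrac{1}{2}(\hat{S}_0 + L, \hat{S}_0 + L)(x)$ becomes a total derivative, since $\hat{s}_0\textbf{L}_{\mathrm{int}} = d\textbf{K}$ makes both $\hat{s}_0 L(x)$ and $\tfrac{1}{2}(L,L)(x)$ proportional to $d\textbf{K}(x)$ where $f=1$. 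Inserting a $d$-exact local form into $A_L$, property A8 lets me pull the $d$ outside, and integrating over the insertion point kills it by Stokes. The remaining $F$-independent term is $A\big(\tfrac{1}{2}(\hat{S}_0 + L, \hat{S}_0 + L) \otimes e_\otimes^L\big)$, whose support need not lie in $\cR$; this one vanishes by the classical consistency condition \eqref{cons} evaluated at $F = L$ together with $A(e_\otimes^L) = 0$. With the residual gone, \eqref{cons} at $F+L$ collapses to exactly \eqref{master-cons-cond}.

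The main obstacle I anticipate is precisely this residual analysis: one must track carefully where $f$ departs from $1$, use the diagonal support (A3) to localize the $F$-dependent insertions into $\cR$, and treat the $F$-independent piece separately, where locality alone is insufficient and the already-established condition at $F=L$ must be invoked. A more self-contained alternative would be to apply the free BRST derivation $[Q_0,-]$ to the interacting anomalous Ward identity of Theorem \ref{[QL, OL]} and use the nilpotency $[Q_0,[Q_0,-]]=0$, which holds by \eqref{Q02=0}; this reproduces the same identity but requires controlling the IR-finite retarded remainder $R_L(\,\cdot\,;\tfrac{1}{2}(\hat{S}_0+L,\hat{S}_0+L))$ under a second commutator, which is why I would favour the substitution argument above.
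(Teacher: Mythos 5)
Your proposal is correct, and it rests on the same core idea as the paper's proof: both derive \eqref{master-cons-cond} from the free consistency condition \eqref{cons} by shifting $F$ by the interaction, so that the two anomaly factors become their interacting counterparts \eqref{hat-A-n}. The difference lies in where the cutoff is placed. The paper replaces $F$ by $\hat{S}_\ia + \sum_i \tau_i \mathcal{O}_i$ (the \emph{true} interaction $I$, differentiated in the $\tau_i$ to produce the $n$-insertion identity \eqref{(S0+F, AO1On)}), so that $(\hat{S}_0+\hat{S}_\ia,\hat{S}_0+\hat{S}_\ia)=(\hat{S},\hat{S})=0$ holds exactly and no residual term ever appears; the price is that one formally feeds the non-compactly-supported $I$ into the anomaly maps, which is legitimized by the diagonal support \textbf{A3} and the compact support of the insertions. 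You instead shift by the cutoff interaction $L$, which keeps everything compactly supported but generates the residual insertion $A_L\big(\tfrac{1}{2}(\hat{S}_0+L,\hat{S}_0+L)\otimes e_\otimes^F\big)$, and the bulk of your argument is devoted to killing it: diagonal support plus \textbf{A8} and Stokes for the $F$-dependent part, and \eqref{cons} at $F=L$ with $A(e_\otimes^L)=0$ for the $F$-independent part. That analysis is sound and is essentially the same localization mechanism the paper uses in the proof of Theorem \ref{[QL,-]}; so the two proofs trade one technical subtlety for another, and yours is arguably more self-contained about the IR bookkeeping.

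One small imprecision: it is not the individual densities $\hat{s}_0 L(x)$ and $\tfrac{1}{2}(L,L)(x)$ that are separately $d$-exact on $M_T$ (at order $\lambda^2$, for instance, $(\mathbf{L}_1,\mathbf{L}_1)$ must be combined with $\hat{s}_0\mathbf{L}_2$); what is true, and what your argument actually needs, is that their \emph{sum} equals $d\,\mathbf{J}_{\text{int}}(x)$ on $M_T$ by \eqref{dJ_>0inM_T}. With that correction the Stokes step goes through unchanged.
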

\begin{proof}
To prove the above this identity, we use the free consistency condition \eqref{cons} which is valid for all bosonic functionals $F$. 
 We replace $F$ with $F+ \tau_1 \mathcal{O}_1 + \dots \tau_n \mathcal{O}_n$ in \eqref{cons}, differentiate with respect to $\tau_1 \dots \tau_n$ and set $\tau_i =0$. This procedure, together with the relations \eqref{tau-O-(1)}, \eqref{tau-O-(2)}, \eqref{tau-O-(3)}, in exactly the same way as in the proof of Theorem \ref{[QL, OL]}, leads to
\begin{align}\nonumber
& (\hat{S}_0 + F, A( \mathcal{O}_1 \otimes \dots \otimes \mathcal{O}_n \otimes e_\otimes^F)) \\ \nonumber
 &+ \sum_{i}(\mathcal{O}_{i}, {A}(\mathcal{O}_{1} \otimes \dots \otimes \mathcal{O}_{i-1} \otimes \mathcal{O}_{i+1} \otimes \dots \otimes \mathcal{O}_{n} \otimes e_\otimes^F )) \\ \nonumber
& +  \sum_{i=1}^n A(\mathcal{O}_1 \otimes \dots \otimes (\hat{S}_0 + F, \mathcal{O}_i) \otimes \dots \otimes \mathcal{O}_n \otimes e_\otimes^F ) \\ \nonumber
& +\sum_{I_2}^{n} A(  {(\mathcal{O}_i, \mathcal{O}_j)}_{i, j \in I_2} \otimes \bigotimes_{k \in I_2^c}  \mathcal{O}_k \otimes e_\otimes^F )\\ \nonumber
& +\sum_{I}  A \Big(  A(\bigotimes_{i \in I} \mathcal{O}_i \otimes e_\otimes^F ) \otimes \bigotimes_{j \in I^c}  \mathcal{O}_j \otimes e_\otimes^F  \Big)\\ \label{(S0+F, AO1On)}
 &+ A(\mathcal{O}_1 \otimes \dots \otimes \mathcal{O}_n \otimes \big( \frac{1}{2} (\hat{S}_0+F,\hat{S}_0 +F) +A(e_\otimes^F)  \big) \otimes e_\otimes^F) =0.
\end{align} 
Now setting $F={\hat{S}_\ia}$, and using $(\hat{S}_0 + {\hat{S}_\ia}, \hat{S}_0 + {\hat{S}_\ia})=0$ and $A(e_\otimes^{\hat{S}_\ia})=0$  in the above equation, we arrive at 
\begin{align} \nonumber
 \hat{\hs} {A}_{L, n}(  \mathcal{O}_1  \otimes \dots \otimes \mathcal{O}_n) & +\sum_{i}\big({A}_{L, n-1}(\mathcal{O}_{1} \otimes \dots \otimes \mathcal{O}_{i-1} \otimes \mathcal{O}_{i+1} \otimes \dots \otimes \mathcal{O}_{n}), \mathcal{O}_{i} \big) \\ \nonumber
  & +  \sum_{i=1}^n {A}_{L, n}(  \mathcal{O}_1  \otimes \dots \otimes \hat{\hs} \mathcal{O}_i \otimes \dots \otimes \mathcal{O}_n)\\ \nonumber
& + \sum_{I_2}^{n} {A}_{L, n-1}(  {(\mathcal{O}_i, \mathcal{O}_j)}_{i, j \in I_2} \otimes \bigotimes_{k \in I_2^c}  \mathcal{O}_k)\\ \label{qAO1On-bosonic}
&+\sum_{I}  {A}_{L, n - |I|+1} \Big(  {A}_{L, |I|}(\bigotimes_{i \in I} \mathcal{O}_i) \otimes \bigotimes_{j \in I^c}  \mathcal{O}_j \Big)=0 ,
\end{align}
which is the expanded identity \eqref{master-cons-cond} at order $n$.

\end{proof}
\subsection{Quantum BRST operator and quantum anti-bracket}

Let us now look at the expansion of the interacting anomalous Ward identity \eqref{int-AWI}, to lowest orders. It turns out that in the first and second orders, one can combine the operators $\hat{\hs}$ and $(-, -)$  of the classical theory with anomalies which are of quantum nature (i.e., vanish as $\hbar \rightarrow 0$).
\begin{dfn}\label{dfn:q-BRST-diff}
\begin{enumerate}
\item The \textbf{quantum BRST operator} $\hat{q}$, is the linear map
\beq
\hat{q}: \textbf{P}(M) \rightarrow  \textbf{P}(M)[[\hbar]],
\eeq
defined by
\beq\label{hat-q}
\hat{q} \cO := \hat{\hs}\cO + \hat{A}_1(\cO). 
\eeq
\item The \textbf{quantum anti-bracket} $(-, -)_\hbar$ is the bi-linear map
\beq
 (-,-)_\hbar:  \textbf{P}(M) \times  \textbf{P}(M) \rightarrow  \textbf{P}(M)[[\hbar]],
\eeq 
defined by 
\begin{equation}\label{q-anti-bracket}
(\cO_1, \cO_2)_\hbar := (\mathcal{O}_1,\mathcal{O}_2) + (-1)^{\eps_1} \hat{A}_2(\mathcal{O}_1\otimes \mathcal{O}_2),
\end{equation}
where $\eps_1$ is the Grassmann parity of $\cO_1$.
\end{enumerate}
\end{dfn}
Using the definitions of $\hat{q}$ and $(-, -)_\hbar$, we may equivalently write \eqref{int-AWI} for all bosonic fields $\cO_1 \dots \cO_n$ in the following form
\begin{align} 
\nonumber
  \frac{1}{i \hbar}[\textbf{Q}_L,T_{L,n}(\mathcal{O}_1 \otimes \dots  \otimes \mathcal{O}_n)] = &  \sum_{i=1}^n T_{L,n}(  \mathcal{O}_1  \otimes \dots \otimes \hat{q} \mathcal{O}_i \otimes \dots \otimes \mathcal{O}_n)\\ \nonumber
& + \frac{\hbar}{i} \sum_{1 \le i <j}^{n} T_{L,n-1}( \mathcal{O}_1 \otimes \dots \otimes (\mathcal{O}_i, \mathcal{O}_j)_\hbar  \otimes \dots \otimes  \mathcal{O}_n)\\  \label{[QL,O1On] -q-(-,-)}
&+\sum_{I} (\frac{\hbar}{i})^{|I|-1}  T_{L, n - |I|+1} \Big( \hat{A}_{|I|}(\bigotimes_{i \in I} \mathcal{O}_i) \otimes \bigotimes_{j \in I^c}  \mathcal{O}_j \Big),
\end{align}
where $|I|>2$.  The similar expression for $\cO_1 \dots \cO_n$ with Grassmann parity $\eps_1 \dots \eps_n$ is given in Appendix \ref{app1} equation \eqref{[QL,O1On]-fermionic}. 
In particular for $n=1,2$, \eqref{[QL,O1On]-fermionic} gives
\begin{align} \label{[QL,OL]}
[\textbf{Q}_L, \mathcal{O}_L(x)] &= i \hbar (\hat{q} \mathcal{O}(x))_L,\\ \nonumber
[\textbf{Q}_L, T_{L,2}(\mathcal{O}_1(x)\otimes \mathcal{O}_2(y))] &= i \hbar T_{L,2}\big( \{\hat{q} \mathcal{O}_1(x) \otimes \mathcal{O}_2(y)\} + (-1)^{\eps_1} \{ \mathcal{O}_1(x)\otimes \hat{q} \mathcal{O}_2(y) \}\big) \\ \label{QL,O1O2LL}
& \quad + (-1)^{\eps_1} \hbar^2 \big({(\mathcal{O}_1(x), \mathcal{O}_2(y))}_\hbar \big)_L.
\end{align}
Therefore, $(-, -)_\hbar$ may be interpreted as the failure of $\hat{q}$ to  be a derivation.
\begin{cor} In a renormalization scheme in which $A(e_\otimes^L)=0$, we have:
\begin{enumerate}
\item The quantum BRST operator \eqref{hat-q} is nilpotent
\begin{align} \label{q2=0}
 \hat{q}^2 =0,
\end{align}
\item The quantum BRST operator is compatible with the quantum anti-bracket \eqref{q-anti-bracket}
\begin{align} \label{q(O1,O2)}
\hat{q} ( \mathcal{O}_1 , \mathcal{O}_2 )_\hbar = (\hat{q}\mathcal{O}_1, \mathcal{O}_2 )_\hbar - (-1)^{\eps_1}(\mathcal{O}_1, \hat{q}\mathcal{O}_2 )_\hbar, \\ \nonumber
\end{align}
\item The quantum anti-bracket satisfies the following \textbf{quantum graded Jacobi identity}
 \begin{align} \nonumber
& (-1)^{(\eps_1+1) (\eps_3+1)} {(\mathcal{O}_1, {( \mathcal{O}_2 , \mathcal{O}_3 )}_\hbar)}_\hbar  + (-1)^{(\eps_2+1) (\eps_1+1)}  {(\mathcal{O}_2, {( \mathcal{O}_3 , \mathcal{O}_1 )}_\hbar)}_\hbar \\ \nonumber
& +  (-1)^{(\eps_3+1) (\eps_2+1)}   {(\mathcal{O}_3, {( \mathcal{O}_1 , \mathcal{O}_2 )}_\hbar)}_\hbar = (-1)^{\eps_1 + \eps_2 + \eps_3 + \eps_1 \eps_3}\Big\{q \hat{A}_3( \mathcal{O}_1 \otimes \mathcal{O}_2 \otimes \mathcal{O}_3) \\ \nonumber
& +   \hat{A}_3( q \mathcal{O}_1 \otimes \mathcal{O}_2 \otimes  \mathcal{O}_3)  + (-1)^{\eps_1} \hat{A}_3(\mathcal{O}_1 \otimes q \mathcal{O}_2 \otimes \mathcal{O}_3)\\ \nonumber 
&  + (-1)^{\eps_1 + \eps_2}\hat{A}_3(\mathcal{O}_1 \otimes \mathcal{O}_2 \otimes q \mathcal{O}_3) \Big\}.\\ \label{An-Jacobi-ferm}
\end{align}
\end{enumerate}
\end{cor}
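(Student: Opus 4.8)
The plan is to recognize all three identities as the expanded interacting consistency condition \eqref{qAO1On-bosonic} — equivalently the master identity \eqref{master-cons-cond} of Theorem \ref{consistencycond} — evaluated at orders $n=1$, $n=2$ and $n=3$, and then rewritten through the definitions \eqref{hat-q} of $\hat{q}$ and \eqref{q-anti-bracket} of $(-,-)_\hbar$. Since \eqref{master-cons-cond} is already established as an identity among the interacting anomalies $\hat{A}_n$, the whole argument is purely algebraic: beyond $\hat{\hs}^2=0$ and the classical relations \eqref{graded-symm}, \eqref{class-Jacobi}, \eqref{s-antibracket}, no further input is needed. I would therefore organize the proof as three order-by-order computations.

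First, for \eqref{q2=0}, I would specialize \eqref{qAO1On-bosonic} to $n=1$, where only three terms survive (the anti-bracket and $I_2$ sums being empty), yielding $\hat{\hs}\hat{A}_1(\cO)+\hat{A}_1(\hat{\hs}\cO)+\hat{A}_1(\hat{A}_1(\cO))=0$. Expanding $\hat{q}^2\cO=(\hat{\hs}+\hat{A}_1)(\hat{\hs}+\hat{A}_1)\cO$ and discarding $\hat{\hs}^2\cO=0$ reproduces exactly this combination, so $\hat{q}^2=0$ is immediate. For the compatibility \eqref{q(O1,O2)} I would take $n=2$ in the fermionic analogue of \eqref{qAO1On-bosonic} from Appendix \ref{app1}, substitute the definitions of $\hat{q}$ and $(-,-)_\hbar$ into both sides, and expand. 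The $\hbar^0$ contributions collapse onto the classical compatibility \eqref{s-antibracket} (hence cancel), while the terms carrying one or two factors of $\hat{A}_1,\hat{A}_2$ reassemble term-by-term into the order-two consistency condition.

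The substantial case is the quantum Jacobi identity \eqref{An-Jacobi-ferm}, which I would obtain at $n=3$. The hard part will be the combinatorial regrouping: I would expand each double bracket $(\cO_i,(\cO_j,\cO_k)_\hbar)_\hbar$ twice via \eqref{q-anti-bracket}, sorting the result by the number of anomaly insertions. The zero-insertion pieces form the classical double anti-brackets whose cyclic sum vanishes by \eqref{class-Jacobi}; the one- and two-insertion pieces (those with a single $\hat{A}_2$ and with $\hat{A}_2\circ\hat{A}_2$) are precisely the matching terms of the $n=3$ consistency condition. What remains are the $\hat{A}_3$-terms of that condition, namely $\hat{\hs}\hat{A}_3$, $\hat{A}_3(\hat{\hs}\cO_i\otimes\cdots)$, $\hat{A}_1(\hat{A}_3(\cdots))$ and $\hat{A}_3(\hat{A}_1(\cO_i)\otimes\cdots)$; regrouping $\hat{\hs}+\hat{A}_1=\hat{q}$ turns these into the four-term combination $\hat{q}\hat{A}_3+\hat{A}_3\circ\hat{q}$ displayed on the right-hand side of \eqref{An-Jacobi-ferm}.

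The main obstacle throughout is sign bookkeeping rather than conceptual difficulty. I would first need the fermionic promotion of \eqref{qAO1On-bosonic} (Appendix \ref{app1}), and then carefully track the Koszul signs that appear when $\hat{q}$ — which raises ghost number by one and is therefore odd — is moved through the arguments of $(-,-)_\hbar$ and through the partition sums, using the graded symmetry \eqref{graded-symm} of the anti-bracket together with the graded symmetry of the $\hat{A}_n$ (property \textbf{A5}). The delicate point is to verify that the three cyclic double-bracket terms in \eqref{An-Jacobi-ferm} recombine with the correct relative signs and that the residual anomaly terms assemble under the single global prefactor $(-1)^{\eps_1+\eps_2+\eps_3+\eps_1\eps_3}$; this is fixed by matching the conventions already built into \eqref{class-Jacobi}.
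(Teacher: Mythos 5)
Your proposal is correct and follows essentially the same route as the paper: all three identities are obtained from the interacting consistency condition \eqref{master-cons-cond} (in its expanded form \eqref{qAO1On-ferm}) at orders $n=1,2,3$, combined with $\hat{\hs}^2=0$, the classical compatibility \eqref{s-antibracket} and the classical Jacobi identity \eqref{class-Jacobi}, exactly as in the paper's proof. Your $n=1$ relation $\hat{\hs}\hat{A}_1(\cO)+\hat{A}_1(\hat{\hs}\cO)+\hat{A}_1(\hat{A}_1(\cO))=0$ is just the paper's $\hat{q}\hat{A}_1(\cO)+\hat{A}_1(\hat{\hs}\cO)=0$ with $\hat{q}=\hat{\hs}+\hat{A}_1$ expanded, so the two arguments coincide.
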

\begin{proof}
To prove \eqref{q2=0}, we use the consistency condition \eqref{qAO1On-ferm} for anomalies which in the case of $n=1$ gives
\begin{align}
\hat{q} \hat{A}_1(\mathcal{O}) + \hat{A}_1 (\hat{\hs} \mathcal{O}) =0.
\end{align}
 and calculate
\begin{align*}
\hat{q}^2 \mathcal{O} &= \hat{q} \hat{\hs} \mathcal{O} +  \hat{q} \hat{A}_1( \mathcal{O}) \\ \nonumber
& = \hat{\hs}^2 \mathcal{O} +  \hat{A}_1(\hat{\hs} \mathcal{O}) -  \hat{A}_1(\hat{\hs} \mathcal{O}) \\
& = 0.
\end{align*}

To prove the identity \eqref{q(O1,O2)}, we need \eqref{qAO1On-ferm} for $n=2$ which reads
\begin{align}\nonumber
 \hat{\hs} \hat{A}_2(\mathcal{O}_1 \otimes \mathcal{O}_2) &- ( \mathcal{O}_1 , \hat{A}_1(\mathcal{O}_2) )  - (-1)^{\eps_1}  ( \hat{A}_1(\mathcal{O}_1) , \mathcal{O}_2 ) \\
&+ \hat{A}_2 (\hat{q} \mathcal{O}_1 \otimes \mathcal{O}_2) - (-1)^{\eps_1} \hat{A}_2 (\mathcal{O}_1 \otimes \hat{q} \mathcal{O}_2) + (-1)^{\eps_1} \hat{A}_1((\mathcal{O}_1 , \mathcal{O}_2)_\hbar) =0.
\end{align}
Now adding $(-1)^{\eps_1} \big( \hat{\hs} ( \mathcal{O}_1 , \mathcal{O}_2 ) - (\hat{\hs}\mathcal{O}_1, \mathcal{O}_2 ) + (-1)^{\eps_1}(\mathcal{O}_1, \hat{\hs}\mathcal{O}_2 ) =0 \big)$ to the above equation, we arrive at \eqref{q(O1,O2)}. 

To prove the quantum Jacobi identity \eqref{An-Jacobi-ferm}, consider \eqref{qAO1On-ferm} for the particular case of $n=3$:
\begin{align}\nonumber
&\hat{q} \hat{A}_3( \mathcal{O}_1 \otimes \mathcal{O}_2 \otimes \mathcal{O}_3) \\ \nonumber
 & + \hat{A}_3(\hat{q}\mathcal{O}_1 \otimes \mathcal{O}_2 \otimes \mathcal{O}_3) + (-1)^{\eps_1} \hat{A}_3(\mathcal{O}_1 \otimes \hat{q}\mathcal{O}_2 \otimes \mathcal{O}_3)  + (-1)^{\eps_1 + \eps_2}\hat{A}_3(\mathcal{O}_1 \otimes \mathcal{O}_2 \otimes \hat{q} \mathcal{O}_3) \\ \nonumber
 & + (\cO_1, \hat{A}_2 ( \mathcal{O}_2 \otimes \mathcal{O}_3)) \\ \nonumber
& + (-1)^{\eps_1 + \eps_2} \hat{A}_2(\cO_1 \otimes (\cO_2, \cO_3))  + (-1)^{\eps_1} \hat{A}_2(\cO_1 \otimes \hat{A}_2(\cO_2 \otimes \cO_3)) \\ \nonumber
& +  (-1)^{\eps_{1} \eps_2 + \eps_1 \eps_3 } \Big\{ (\cO_2, \hat{A}_2 ( \mathcal{O}_3 \otimes \mathcal{O}_1)) + (-1)^{ \eps_2 + \eps_3} \hat{A}_2(\cO_2 \otimes (\cO_3, \cO_1))\\ \nonumber
&  +(-1)^{ \eps_2}  \hat{A}_2(\cO_2 \otimes \hat{A}_2(\cO_3 \otimes \cO_1)) \Big\} + (-1)^{\eps_{1} \eps_3 + \eps_2 \eps_3 } \Big\{(\cO_3, \hat{A}_1 ( \mathcal{O}_2 \otimes \mathcal{O}_2)) \\ \nonumber
& + (-1)^{\eps_1 + \eps_3 }\hat{A}_2(\cO_3 \otimes (\cO_1, \cO_2)) + (-1)^{\eps_3 } \hat{A}_2(\cO_3 \otimes \hat{A}_2(\cO_1 \otimes \cO_2)) \Big\}  =0.
\end{align}
Adding the classical Jacobi identity
\beq \nonumber
(-1)^{\eps_2}{(\mathcal{O}_1, {( \mathcal{O}_2 , \mathcal{O}_3 )})}  + (-1)^{\eps_1\eps_2 + \eps_1 \eps_3 + \eps_3}  {(\mathcal{O}_2, {( \mathcal{O}_3 , \mathcal{O}_1 )})} + (-1)^{\eps_1\eps_3 + \eps_2 \eps_3 + \eps_1}  {(\mathcal{O}_3, {( \mathcal{O}_1 , \mathcal{O}_2 )})} =0
\eeq
 to the above and factoring out the sign factor $ (-1)^{\eps_1 + \eps_2 + \eps_3 + \eps_1 \eps_3}$, we obtain \eqref{An-Jacobi-ferm}.

\end{proof}

\begin{rmk}
\begin{enumerate}
\item By property \textbf{A10} of the anomaly, on basic fields and anti-fields, the quantum BRST operator coincides with the classical one, i.e.,
\begin{align}
\hat{q} \Phi = \hat{\hs} \Phi, \quad \hat{q} \Phi^\ddag = \hat{\hs} \Phi^\ddag.
\end{align}

\item The interacting consistency condition \eqref{qAO1On-ferm} for $n=1$ takes the form:
\beq
\hat{q} \hat{A}_1(\cO) + \hat{A}_1( \hat{\hs} \cO)=0.
\eeq 
Now, using that by property \textbf{A10} of the anomaly $\hat{A}_1(\Phi)=0 = \hat{A}_1(\Phi^\ddag)$, we obtain
\beq
 \hat{A}_1( \hat{\hs} \Phi)=0=\hat{A}_1( \hat{\hs} \Phi^\ddag).
\eeq

\item The identity \eqref{An-Jacobi-ferm} in the limit of $\hbar \rightarrow 0$ gives the (classical) graded Jacobi identity \eqref{class-Jacobi} of the classical anti-bracket. However, the quantum corrections prevent the quantum anti-bracket to satisfy the classical Jacobi identity. Similar violations have been observed in \cite{Alfaro-Damgaard} when analyzing the Hamiltonian Batalin-Vilkovisky formalism for the non-abelian group of field reparametrization transformations.
\end{enumerate}
\end{rmk}

\subsection{Nilpotency of $[\textbf{Q}_L, -]$} \label{sectionQ2=0}
 
We have so far derived how the quantum field $\textbf{Q}_L$, associated with the classical Noether charge $Q$ of the BV-BRST symmetry, acts on arbitrary quantum fields via the $\star$-commutator. It necessarily has to give the correct classical limit as $\hbar$ goes to zero. Indeed, since $\hat{A}_1(\mathcal{O}(x))$ is of order $O(\hbar)$, from the expression \eqref{[QL,OL]} it follows that
\begin{align}
\lim_{\hbar \rightarrow 0} \frac{i}{\hbar}[\textbf{Q}_L, - ] \eqos \hat{\hs}.
\end{align}
However, giving the correct classical limit is not a sufficient condition for $[\textbf{Q}_L, -]$ to define the action of the BRST symmetry on interacting quantum fields; in addition, it has to be nilpotent.

Using \eqref{q2=0}, it is now easy to see that $[\textbf{Q}_L, -]$, when acting on $\mathcal{O}_L$ is nilpotent, 
\begin{align}\label{[QL[QL-]]=0}
[\textbf{Q}_L, [\textbf{Q}_L, \mathcal{O}_L]] \eqos  (\hat{q}^2 \mathcal{O})_L =0.
\end{align}
One can then verify that it is also nilpotent when acting on the product of $n$ interacting fields. To see this, note that from the graded Jacobi identity \eqref{Jacobi-commutator} we obtain
\begin{align}\label{[Q,O1-star-On]-fermionic}
[\textbf{Q}_L, {\mathcal{O}_1}_L \star\dots \star {\mathcal{O}_n}_L] &\eqos \sum_{i} (-1)^{\sum_{i<k} \eps_l}  {\mathcal{O}_1}_L \star\dots \star (\hat{q}\mathcal{O}_{i})_L \star \dots  \star {\mathcal{O}_n}_L.
\end{align}
which gives \eqref{s0-Leibnizrule} in the free theory, i.e., in the limit where the coupling constant $\lambda$ is set to zero.
Now applying once again $[\textbf{Q}_L, -]$ on both sides of \eqref{[Q,O1-star-On]-fermionic} and using that $\hat{q}^2=0$, that $[\textbf{Q}_L, -]$ is a graded derivation and that the Grassmann parity of $\hat{q}\mathcal{O}_{i}$ are opposite to that of $\cO_i$, we find
\begin{align}\label{[QL[QL]]}
\big[\textbf{Q}_L, [\textbf{Q}_L, {\mathcal{O}_1}_L \star\dots \star {\mathcal{O}_n}_L] \big] \eqos 0.
\end{align}
We next show that $[\textbf{Q}_L, -]$ acting on $T_{L,n}(\mathcal{O}_1 \otimes \dots \otimes \mathcal{O}_n)$ is also nilpotent. 
\begin{thm} \label{QL2=0}
In a renormalization scheme such that $A(e_\otimes^L)=0$, we have  
\begin{align}
\big[ \textbf{Q}_L, [\textbf{Q}_L, T_L(e_\otimes^{iF/\hbar})] \big] \eqos 0.
\end{align}
 \end{thm}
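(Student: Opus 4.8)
The plan is to apply the interacting anomalous Ward identity \eqref{int-AWI} twice and to show that the leftover obstruction is annihilated by the interacting consistency condition \eqref{master-cons-cond}, the classical graded Jacobi identity \eqref{class-Jacobi}, and the graded symmetry \textbf{T5} of the time-ordered products. To keep the bookkeeping transparent I abbreviate the Grassmann-\emph{odd} functional
\begin{equation}
B(F) := \tfrac{1}{2}(\hat{S}+F, \hat{S}+F) + \hat{A}(e_\otimes^F),
\end{equation}
so that Theorem \ref{[QL, OL]} reads $[\textbf{Q}_L, T_L(e_\otimes^{iF/\hbar})] \eqos - T_L(B(F) \otimes e_\otimes^{iF/\hbar})$ for all $F$. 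Here $B(F)$ is Grassmann-odd because both the anti-bracket of two bosonic arguments and the anomaly $\hat{A}$ carry the odd parity of the BRST variation.

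First I would apply $[\textbf{Q}_L,-]$ to the Ward identity, giving
\begin{equation}
[\textbf{Q}_L, [\textbf{Q}_L, T_L(e_\otimes^{iF/\hbar})]] \eqos -[\textbf{Q}_L, T_L(B(F) \otimes e_\otimes^{iF/\hbar})].
\end{equation}
To evaluate the right-hand side I need the Ward identity with one extra insertion. This I would derive exactly as \eqref{int-AWI} is expanded in the proof of Theorem \ref{[QL, OL]}: replacing $F \to F+\tau G$ in \eqref{int-AWI}, differentiating at $\tau=0$, and using $\frac{d}{d\tau}B(F+\tau G)|_{\tau=0} = (\hat{S}+F, G) + \hat{A}(G \otimes e_\otimes^F)$ (the analogue of \eqref{tau-O-(1)}). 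The outcome is
\begin{equation}
[\textbf{Q}_L, T_L(G \otimes e_\otimes^{iF/\hbar})] \eqos -\tfrac{\hbar}{i}\, T_L\big(\{(\hat{S}+F, G) + \hat{A}(G \otimes e_\otimes^F)\} \otimes e_\otimes^{iF/\hbar}\big) - T_L(B(F) \otimes G \otimes e_\otimes^{iF/\hbar}).
\end{equation}

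Next I would set $G = B(F)$. The first term then vanishes: the interacting consistency condition \eqref{master-cons-cond} (with $A_L = \hat{A}$) gives $\hat{A}(B(F)\otimes e_\otimes^F) = -(\hat{S}+F, \hat{A}(e_\otimes^F))$, whence
\begin{align*}
(\hat{S}+F, B(F)) + \hat{A}(B(F)\otimes e_\otimes^F) &= (\hat{S}+F, B(F) - \hat{A}(e_\otimes^F)) \\
&= \tfrac{1}{2}(\hat{S}+F, (\hat{S}+F, \hat{S}+F)) = 0,
\end{align*}
the last equality being the classical graded Jacobi identity \eqref{class-Jacobi} for the bosonic functional $\hat{S}+F$. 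Consequently $[\textbf{Q}_L, [\textbf{Q}_L, T_L(e_\otimes^{iF/\hbar})]] \eqos T_L(B(F) \otimes B(F) \otimes e_\otimes^{iF/\hbar})$, and since $B(F)$ is Grassmann-odd the graded symmetry \textbf{T5} forces this expression to equal its own negative, hence to vanish.

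The hard part will be the Grassmann-sign bookkeeping. The expansions \eqref{int-AWI} and \eqref{tau-O-(1)} are stated for bosonic insertions, whereas the functional $G=B(F)$ substituted above is odd, so the differentiation and substitution must be carried through with the sign conventions of Appendix \ref{app1}; in particular one must confirm that the exchange factor in the last step is genuinely $-1$ (and not $+1$), since this is precisely what produces the cancellation. A secondary point is to justify differentiating the algebraic-adiabatic-limit expressions in $\tau$ and commuting this with the limit, which is harmless because for fixed support only finitely many orders contribute.
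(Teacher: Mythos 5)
Your proposal is correct and follows essentially the same route as the paper: apply the interacting anomalous Ward identity, re-express the second commutator via a $\tau$-derivative of the Ward identity with shifted interaction, kill the $B(F)\otimes B(F)$ term by graded symmetry \textbf{T5} and the odd parity of $B(F)$, and kill the remaining term by combining the interacting consistency condition \eqref{master-cons-cond} with the classical graded Jacobi identity. The only (immaterial) difference is that you invoke the consistency condition before the Jacobi identity, whereas the paper groups $(\hat{S}+F,\hat{A}(e_\otimes^F)) + A_L(B(F)\otimes e_\otimes^F)$ and recognizes it directly as the consistency condition after the Jacobi identity has removed $(\hat{S}+F,(\hat{S}+F,\hat{S}+F))$.
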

\begin{proof}

We have already proven the statement for $n=1$. Before proving the claim for all $n$, let us explicitly verify it for $n=2$. Using $\hat{q}^2=0$ and \eqref{q(O1,O2)}, we have 
\begin{align}\nonumber
\frac{1}{(i \hbar)^2}[\textbf{Q}_L, [\textbf{Q}_L, T_{L,2}(\mathcal{O}_1 \otimes  \mathcal{O}_2)]]  & \eqos  (-1)^{\eps_1}  \frac{\hbar}{i} \big( (-1)^{\eps_1} (\mathcal{O}_1, \hat{q}\mathcal{O}_2 )_\hbar -   (\hat{q}\mathcal{O}_1, \mathcal{O}_2 )_\hbar +\hat{q} ( \mathcal{O}_1 , \mathcal{O}_2 )_\hbar \big)_L\\ \label{[QL[QL,T2]]}
& =  0.
\end{align} 
We prove the claim for all $n$, by applying $[\textbf{Q}_L, -]$ on both sides of the interacting anomalous Ward identity \eqref{int-AWI}. We obtain 
\begin{align} \nonumber
 & \big[ \textbf{Q}_L, [\textbf{Q}_L, T_L(e_\otimes^{iF/\hbar})] \big] \\ \nonumber
 & \eqos -\big[ \textbf{Q}_L, T_L \big( ( \frac{1}{2} (\hat{S} + F, \hat{S} + F) +  \hat{A}(e_\otimes^F) ) \otimes e_\otimes^{iF/\hbar} \big) \big] \\ \nonumber
 & = \frac{\hbar}{i} \frac{d}{d \tau} \big[ \textbf{Q}_L, T_L \big( e_\otimes^{i\big( F + \tau ( \frac{1}{2} (\hat{S} + F, \hat{S} + F) +  \hat{A}(e_\otimes^F) ) \big) /\hbar} \big) \big] \big|_{\tau =0} \\ \nonumber
 & =  T_L \big( \big((\hat{S} + F, ( \frac{1}{2} (\hat{S} + F, \hat{S} + F) +  \hat{A}(e_\otimes^F) ) \big) e_\otimes^{iF /\hbar} \big)\\ \nonumber
 & \quad +  T_L \big(  {A}_L \big( \big( \frac{1}{2} (\hat{S} + F, \hat{S} + F) +  \hat{A}(e_\otimes^F)  \big) \otimes e_\otimes^F \big) \otimes  e_\otimes^{iF /\hbar} \big)\\ \nonumber
 & \quad +  T_L \big(  \big( \frac{1}{2} (\hat{S} + F, \hat{S} + F) +  \hat{A}(e_\otimes^F) \big) \otimes \big( \frac{1}{2} (\hat{S} + F, \hat{S} + F) +  \hat{A}(e_\otimes^F) \big)  e_\otimes^{iF /\hbar} \big)\\ \nonumber
 & =  T_L \big(  (\hat{S} + F, \hat{A}(e_\otimes^F)) +  {A}_L \big( \big( \frac{1}{2} (\hat{S} + F, \hat{S} + F) +  \hat{A}(e_\otimes^F)  \big) \otimes e_\otimes^F \big) \otimes e_\otimes^{iF /\hbar} \big)\\ 
 & =0,
\end{align}
where in the forth line, we used that $(\hat{S} +F, (\hat{S} +F, \hat{S} +F))=0$ by Jacobi identity, the sixth line vanishes using the graded symmetry of $T_{L,n}$ and the fact that $\frac{1}{2} (\hat{S} + F, \hat{S} + F) +  \hat{A}(e_\otimes^F)$ is fermionic, and the last line vanishes by the consistency condition \eqref{master-cons-cond}  for ${A}_L(e_\otimes^F)$ which holds due to $A(e_\otimes^L)=0$. 
\end{proof}

\subsection{Quantum gauge invariant observables}\label{Q-g-inv-scheme-indep}

The algebra $\hat{\bF}_L = \hat{\bW}_L / \mathcal{J}_0$ that we constructed above does not correspond to the renormalized physical, gauge invariant observables of the Yang-Mills theory as it is a quantization of the classical enlarged theory defined by $\hat{S}$ and thus it includes gauge-variant and un-physical interacting fields, such as the vector potential and ghosts. We now want to define within $\hat{\bF}_L$ a subalgebra $\bF_L$ of gauge invariant observables.  
As discussed in Section \ref{Classical}, at the classical level, one can recover the gauge invariant observables as elements of the $\hat{\hs}$-cohomology class at ghost number zero. Motivated by this fact, we make the following definition.
\begin{dfn}
The \textbf{on-shell algebra of gauge invariant observables} $\bF_L \subset \hat{\bF}_L$ is defined by
\begin{align}\label{def-onshell-phys-alg}
\textbf{F}_L := \frac{\ker [\textbf{Q}_L, -] }{\Im [\textbf{Q}_L, -]}, \quad \text{at ghost number }0.
\end{align}
\end{dfn}

We now want to understand what interacting fields belong to the algebra of gauge invariant observables $\bF_L$, defined in \eqref{def-onshell-phys-alg}.

\begin{prop}\label{[Q, Psi]=0}
Let \emph{$\mathcal{O} \in \textbf{P}(M)$} be a classical gauge invariant operator, i.e., 
\beq
\hat{\hs} \mathcal{O}=0,
\eeq
with ghost number $0$. If the cohomology ring $H_1(\hat{\hs}, M)$ is trivial and there exists a renormalization scheme in which $A(e_\otimes^L)=0$, then there exists another scheme such that 
\begin{align}
\hat{q} \cO=0.
\end{align}
\end{prop}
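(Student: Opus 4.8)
The plan is to reduce the statement to the vanishing of the single interacting anomaly $\hat{A}_1(\cO)$ and then to remove it order by order in $\hbar$ by a finite change of renormalization scheme, mimicking Theorem \ref{anomaly-removal} but performing the redefinition in the external insertion $\cO$ rather than in the interaction. First I would note that, since $\hat{\hs}\cO=0$, the definition \eqref{hat-q} gives $\hat{q}\cO = \hat{A}_1(\cO)$; by property \textbf{A1} this is of order $O(\hbar)$, and by \textbf{A4} together with $\cO$ having ghost number $0$ it carries ghost number $1$. Hence it suffices to exhibit a scheme, still obeying $A(e_\otimes^L)=0$, in which $\hat{A}_1(\cO)$ vanishes identically.

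Next I would extract the cohomological content. The $n=1$ case of the interacting consistency condition of Theorem \ref{consistencycond} reads $\hat{q}\hat{A}_1(\cO) + \hat{A}_1(\hat{\hs}\cO)=0$, which with $\hat{\hs}\cO=0$ collapses to $\hat{\hs}\hat{A}_1(\cO) + \hat{A}_1(\hat{A}_1(\cO)) = 0$. Writing $\hat{A}_1(\cO) = \hbar^m a_m + O(\hbar^{m+1})$ with $m\ge 1$ and using that $\hat{A}_1$ raises the $\hbar$-order by at least one, the lowest order of this identity gives $\hat{\hs}a_m = 0$. Since $a_m\in\textbf{P}(M)$ has ghost number $1$ and $H_1(\hat{\hs},M)$ is trivial, there is a $b_m\in\textbf{P}(M)$ of ghost number $0$ with $a_m = \hat{\hs}b_m$.

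Then I would remove $a_m$ by a finite renormalization tailored to $\cO$. Expanding $b_m = \sum_n \frac{\lambda^n}{n!}(b_m)_n$, I would define a hierarchy $D=(D_n)$ that acts nontrivially only on arguments containing exactly one factor $\cO$, setting at order $\hbar^m$
\[
D_{n+1}(\cO\otimes\textbf{L}_1^{\otimes n}) = -\hbar^m (b_m)_n\,\delta(x_1,\dots,x_{n+1}),
\]
and $D=0$ on every input built solely from $\textbf{L}_1,\textbf{L}_2$; the ghost-number and dimension grading of $b_m$ make $D$ satisfy \textbf{D1}--\textbf{D7} just as in Theorem \ref{anomaly-removal}. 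Passing to $\tilde{T}$ via \eqref{tildeT-T} and differentiating the anomaly transformation law (\cite{Hollands:2007zg}, eq. (387)) in the direction of $\cO$, recalling $\hat{A}_1(\cO)=\tfrac{d}{d\tau}A(e_\otimes^{L+\tau\cO})|_{\tau=0}$, I expect
\[
\tilde{\hat{A}}_1(\cO) = \hat{A}_1(\cO) - \hbar^m \hat{\hs}b_m + O(\hbar^{m+1}) = O(\hbar^{m+1}),
\]
so the lowest term is cancelled. Crucially, because $D$ vanishes on all purely interacting inputs, neither $T(e_\otimes^{iL/\hbar})$ nor $T((\hat{S}_0+L,\hat{S}_0+L)\otimes e_\otimes^{iL/\hbar})$ changes, so the Ward identity for $F=L$ still holds with vanishing anomaly and $A(e_\otimes^L)=0$ is preserved.

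Finally I would iterate: in the new scheme $\hat{A}_1(\cO)$ begins at order $\hbar^{m+1}$, its leading coefficient is again $\hat{\hs}$-closed by the same consistency argument, hence $\hat{\hs}$-exact by triviality of $H_1(\hat{\hs},M)$, and is removed by a further counterterm of the same type; since each step strictly raises the starting order while leaving all lower orders and $A(e_\otimes^L)$ untouched, the procedure converges as a formal power series in $\hbar$ to a scheme with $\hat{A}_1(\cO)=0$, i.e.\ $\hat{q}\cO=0$. The hard part will be the third step—certifying that the redefinition truly preserves $A(e_\otimes^L)=0$; this rests on the observation that the necessary counterterms live entirely in the sector carrying the external insertion $\cO$ and therefore decouple from the pure-interaction anomaly, while the accompanying check that they respect \textbf{D1}--\textbf{D7} is routine given the $O(\hbar)$-nature and ghost/dimension grading of $\hat{A}_1(\cO)$.
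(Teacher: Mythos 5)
Your proposal is correct and follows essentially the same route as the paper: reduce $\hat{q}\cO$ to $\hat{A}_{1}(\cO)$, use the $n=1$ consistency condition (equivalently $\hat q^2\cO=0$) to show the leading $\hbar$-coefficient is $\hat{\hs}$-closed, invoke triviality of $H_1(\hat{\hs},M)$ to write it as $\hat{\hs}b^m$, cancel it with the finite counterterm $D^m(\cO\otimes e_\otimes^L)=-b^m$ while keeping $D(e_\otimes^L)=0$ so that $A(e_\otimes^L)=0$ is preserved, and iterate in $\hbar$. Your explicit emphasis on the counterterms living only in the sector with one external $\cO$ insertion is exactly the (tacit) mechanism by which the paper's redefinition leaves the Ward identity for $F=L$ intact.
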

\begin{proof}

We first note that since $\hat{\hs} \mathcal{O}=0$, we have $\hat{q}\cO = A_{L,1}(\mathcal{O})$.
We proceed by showing that the ``obstruction'' , $A_{L,1}(\cO)$, for $\hat{q} \cO$ to vanish, can be removed by passing to a new renormalization scheme.

Let us consider the expansion of the anomaly ${A}_{L,1}(\cO)= \sum_{n=1} \hbar^n {A}^n_{L,1}(\cO)$ in powers of $\hbar$. Then, the anomalies in different schemes turn out to be related by \cite{Hollands:2007zg}
\begin{align} \nn
\tilde{A}^m_{L,1}(\cO)&= {A}^m_{L,1}(\cO) + \hat{\hs} D^{m}(\mathcal{O} \otimes e_\otimes^L) - D^{m}(\hat{\hs} \mathcal{O} \otimes e_\otimes^L)+ A^k_{L,1}({D}^l_{L,1}(\cO) )  - D^k_{L,1}({A}^l_{L,1}(\cO) ), \\ \label{A-tilde-A-(m)}
\end{align}
where we have set $D^n(e_\otimes^L)$ for all $n=1,2, \dots.$, and where $l+k=m$. We now want to choose suitable finite counter terms ${D}^n_{L,1}(\cO)$ such that in the new scheme the anomaly $\tilde{A}^n_{L,1}(\cO)$ vanishes for all $n$. 

From the nilpotency of $\hat{q}$, we have for all $\cO \in \textbf{P}(M)$
\begin{align}\nn
0 &= \hat{q}^2 \cO \\ \nn
&= (\hat{\hs} + A_{L,1}(-))(\hat{\hs} \cO + A_{L,1}(\cO))\\ \label{qA(O)+A(sO)=0}
& = \hat{q} A_{L,1}(\cO) + A_{L,1}(\hat{\hs} \cO).
\end{align}
Therefore, for those $\cO$ with $\hat{\hs}\cO=0$ it follows that
\beq
\hat{q} A_{L,1}(\cO) =0.
\eeq
Now assume that ${A}^n_{L,1}(\cO)=0$ for all $n <m$. Then the above equation at order $\hbar^m$ gives
\begin{equation}\label{sAm=0}
\hat{\hs} {A}^m_{L,1}(\cO) =0.
\end{equation}
Since $\mathcal{O}$ has ghost number $0$, ${A}^m_{L,1}(\cO)$ has ghost number 1 and thus it belongs to $ H_1(\hat{\hs}, M)$ which is trivial by assumption. Therefore,
\begin{equation}\label{A(O)=sb}
{A}^m_{L,1}(\cO(x)) = \hat{\hs} b^m(x),
\end{equation}
for some $b^m(x) \in \textbf{P}_0(M).$ We now use this $b^m$ to redefine the time-ordered products by setting the following finite counter terms:
\begin{align}\label{D^m=-b^m}
D^{m}(\mathcal{O}(x) \otimes e_\otimes^L) = - b^m(x),
\end{align}
which from \eqref{A-tilde-A-(m)} results in
\begin{align}
\tilde{A}^m_{L,1}(\cO(x)) &= {A}^m_{L,1}(\cO) -  \hat{\hs}b^{m}(x) =0.
\end{align}
That is, in the new scheme the anomaly vanishes at order $\hbar^m$. 
Iterating the argument, we can fully remove the anomaly to all orders in $\hbar$.
\end{proof}

\begin{rmk}
For the case of the pure Yang-Mills theory, when $G$ is semi-simple with no abelian factors, $H(\hat{s}, M)$ is generated by elements of the form \cite{Barnich:2000zw}
\begin{equation}
{\displaystyle \prod_{k}} r_{t_k}(g, R, \nabla R, \dots, \nabla^k R) {\displaystyle \prod_{i}} p_{r_i}(C) {\displaystyle \prod_{j}} \Theta_{r_j}(F, \mathcal{D}F, \dots \mathcal{D}^l F)
\end{equation}
where $p_r$ and $\Theta_s$ are invariant polynomials of the Lie-algebra of $G$ and $r_t$ is a local functional of the metric $g$, the Riemann tensor $R$ and its derivatives. However, at ghost number $1$ the above expression vanishes as there is no invariant monomial $p_r$, and thus $H^p_1(\hat{s},M)$ is trivial. Therefore by the above Proposition \ref{[Q, Psi]=0}, $[\textbf{Q}_L, \mathcal{O}_L]=0$, for $\hat{s}\mathcal{O}=0$.
\end{rmk}

\subsection*{Hilbert space representation}

We have now collected all the tools which are required to represent the algebra of observables as linear operators with a dense, invariant domain on a Hilbert space $\mathcal{H}_L$. This space is constructed using a \emph{deformation process} \cite{dutsch1999local} from a Hilbert space $\mathcal{H}_0$ on which the free algebra
\begin{align}\label{F0}
\textbf{F}_0 := \frac{\Ker [Q_0, -] }{\Im [Q_0, -]}, \quad \text{at ghost number }0.
\end{align}
 is represented, as we review below. 

For theories without local gauge symmetry, such as the scalar field theory, given a quasi free, Hadamard state $\omega$ on the algebra $\bF_0$, via the celebrated GNS construction, one obtains a representation $\pi_\omega: \bF_0 \rightarrow \text{End}(\mathcal{H}_\omega)$ of the algebra of observables as linear operators with the so-called \emph{microlocal domain of smoothness} \cite{Brunetti:1999jn} $\mathcal{D}_\omega$ on a Hilbert space $\mathcal{H}_\omega$. In theories with local gauge symmetry, the perturbative quantization of the classically gauge-fixed theory, led to the free on-shell algebra $\hat{\bF}_0$ \eqref{hat-F-0} which contains gauge-variant and unphysical elements. In fact, $\hat{\bF}_0$ can only be represented on an indefinite inner product space.

In order to obtain a positive definite inner product, one in addition has to impose a \emph{positivity condition} \cite{dutsch1999local} on the representation, as we describe below. 

\begin{dfn}\label{def:pos-cond}
Let $\omega$ be a  quasi free, Hadamard state on the algebra ${\bF}_0$, and let $\pi^\omega_0: {\bF}_0 \rightarrow \text{End}(\mathcal{K}^\omega_0)$ be a faithful representation of $\bF_0$ on a space $ (\mathcal{K}^\omega_0, \langle -, - \rangle)$ with indefinite inner product and such that all the anti-fields are represented by the $0$ operator, and let $\mathcal{D}_\omega \subset \mathcal{H}_\omega$ the dense and invariant microlocal domain of smoothness \cite{Brunetti:1999jn}.
Furthermore, let $Q_0 \in \hat{\bF}_0$ be the free BRST charge \eqref{:Q0:}. This representation is called to satisfy the \textbf{positivity conditions}, if
\begin{align}\label{positivity-cond-1}
&\langle \phi, \phi \rangle \ge 0, \quad \forall \phi \in \ker {\pi}^\omega_0(Q_0) \cap \mathcal{D}_\omega,\\ \label{positivity-cond-2}
 &\phi \in \ker {\pi}^\omega_0(Q_0) \cap \mathcal{D}_\omega \text{ and } \langle \phi, \phi \rangle=0 \implies \phi \in \text{Im } {\pi}^\omega_0(Q_0) \cap \mathcal{D}_\omega.
\end{align}
\end{dfn}

\begin{thm}[\cite{dutsch1999local}] \label{Hilbert-H_0}
Let $\pi^\omega_0$ be a representation of the algebra ${\bF}_0$ \eqref{F0} as in Definition \ref{def:pos-cond} which for all $\phi, \psi \in \mathcal{D}_\omega$ and for all $ \cO \in \bF_0$ satisfies
\begin{align}
\langle {\pi}^\omega_0(\cO^\dag) \phi, \psi \rangle &= \langle \phi, {\pi}^\omega_0(\cO) \psi \rangle, \\
\langle {\pi}^\omega_0(Q_0) \phi, \psi \rangle &= \langle \phi, {\pi}^\omega_0(Q_0) \psi \rangle, \\
{\pi}^\omega_0(Q_0)^2& =0, \quad \text{on } \mathcal{D}_\omega,
\end{align}
and the positivity conditions of Definition \ref{def:pos-cond}. Then 
\beq
\mathcal{H}_0 := \frac{\ker {\pi}^\omega_0(Q_0)}{\Im {\pi}^\omega_0(Q_0)}
\eeq
is a pre Hilbert space.
\end{thm}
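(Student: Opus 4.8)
The plan is to show that the indefinite inner product $\langle-,-\rangle$ of $(\mathcal{K}^\omega_0,\langle-,-\rangle)$ descends to a positive definite inner product on the BRST cohomology $\mathcal{H}_0 = \ker Q/\Im Q$, where I abbreviate $Q := \pi^\omega_0(Q_0)$ and work throughout on the invariant domain $\mathcal{D}_\omega$ (so that $\ker Q$ and $\Im Q$ are understood as subspaces of $\mathcal{D}_\omega$, and ``pre-Hilbert'' means an inner product space whose completion is the Hilbert space). Three things must be checked: that the form is well defined on the quotient, that it is positive semidefinite, and that it is nondegenerate on the quotient.

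First I would verify well-definedness. A class in $\ker Q/\Im Q$ is represented by some $\phi \in \ker Q$, and two representatives differ by an element of $\Im Q$; note that $\Im Q \subseteq \ker Q$ by the nilpotency hypothesis $Q^2 = 0$, so the quotient makes sense. It therefore suffices to show $\langle Q\chi,\psi\rangle = 0$ whenever $\psi \in \ker Q$, which follows immediately from the assumed hermiticity $\langle Q\chi,\psi\rangle = \langle\chi,Q\psi\rangle$ together with $Q\psi = 0$. The same computation, applied in the second slot and using sesquilinearity, shows independence of the chosen representative in both arguments, so $\langle[\phi],[\psi]\rangle := \langle\phi,\psi\rangle$ is a well-defined sesquilinear form on $\mathcal{H}_0$.

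Next I would establish positivity. Positive semidefiniteness $\langle[\phi],[\phi]\rangle = \langle\phi,\phi\rangle \ge 0$ on $\mathcal{H}_0$ is exactly the first positivity condition \eqref{positivity-cond-1}. To upgrade this to positive definiteness on the quotient I would identify the radical of the form on $\ker Q$ with $\Im Q$. One inclusion $\Im Q \subseteq \mathrm{rad}$ is the computation above (every exact vector is orthogonal to all of $\ker Q$). For the reverse inclusion I would invoke the Cauchy--Schwarz inequality, which is valid for any positive semidefinite Hermitian form: if $\phi \in \ker Q$ has $\langle\phi,\phi\rangle = 0$, then $|\langle\phi,\psi\rangle|^2 \le \langle\phi,\phi\rangle\langle\psi,\psi\rangle = 0$ for all $\psi \in \ker Q$, so $\phi$ lies in the radical, and the second positivity condition \eqref{positivity-cond-2} then forces $\phi \in \Im Q$. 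Thus $\mathrm{rad} = \Im Q$, the induced form on $\ker Q/\Im Q$ is positive definite, and $\mathcal{H}_0$ is a pre-Hilbert space.

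The computations here are short; the real content sits in the hypotheses. The step I would treat most carefully is the Cauchy--Schwarz argument, since it is the only place where positive semidefiniteness is leveraged nontrivially to convert the pointwise vanishing $\langle\phi,\phi\rangle=0$ into the orthogonality needed to apply \eqref{positivity-cond-2}; without it the two positivity conditions would not combine to yield definiteness. The only other subtlety worth flagging is domain bookkeeping: $\ker Q$, $\Im Q$, and the pairing must all be interpreted on the dense invariant domain $\mathcal{D}_\omega$, and the conclusion is that the quotient carries a genuine inner product, with the honest Hilbert space obtained by completion.
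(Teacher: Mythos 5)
The paper does not prove this statement: Theorem \ref{Hilbert-H_0} is quoted verbatim from \cite{dutsch1999local} and used as an imported result, so there is no in-text proof to compare against. Your argument is the standard one and is correct: hermiticity of $\pi^\omega_0(Q_0)$ plus nilpotency gives well-definedness of the induced form on $\ker/\Im$, condition \eqref{positivity-cond-1} gives positive semidefiniteness, and condition \eqref{positivity-cond-2} kills the null classes. One small remark: the Cauchy--Schwarz detour identifying the radical with $\Im \pi^\omega_0(Q_0)$ is harmless but not needed for the stated conclusion, since \eqref{positivity-cond-2} applies directly to any representative $\phi$ with $\langle\phi,\phi\rangle=0$ and immediately forces $[\phi]=0$; the radical computation is only required if one wants the stronger statement that the null vectors exhaust the degeneracy subspace. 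Your domain bookkeeping (everything interpreted on $\mathcal{D}_\omega$, completion taken at the end) matches the intended reading, with the one caveat that in the perturbative setting ``positive'' is ultimately meant in the formal-power-series sense indicated in the paper's footnote.
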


It is shown in \cite{dutsch1999local}, Section 4.3,  that the construction of $\mathcal{H}_0$ is ``stable under deformations''. This means that once the positivity conditions \eqref{positivity-cond-1} and \eqref{positivity-cond-2} for the representation $\pi_0$ are satisfied and the interacting BRST charge $Q_L$ is nilpotent, then the algebra $\bF_L $ can be represented on a Hilbert space $\mathcal{H}_L$ with a positive-definite\footnote{In that reference, a formal power series $A=\sum_{n} \lambda^n A_n \in \mathbb{C}[[\lambda]]$ is called positive if there exists another formal power series $B=\sum_{n} \lambda^n B_n$, such that $B^* B =A$, i.e. $A_n = \sum_{k=0}^n \bar{B}_k B_{n-k}$.} inner product which is induced from the inner product on $\mathcal{H}_0$.

We have shown in Section \ref{WardIdentities} that the free BRST charge is nilpotent. As a corollary of the nilpotency of $[Q_L, -]$, we now show that the interacting BRST charge is also nilpotent which is required for the algebra $\bF_L $ of interacting gauge invariant observables to admit a Hilbert space representation.

\begin{cor}[\textbf{of Theorem} \ref{QL2=0}]\label{cor-Q2=0} If $A(e_\otimes^L)=0$, the quantum BRST charge is nilpotent modulo an element in $\mathcal{J}_0$, i.e. 
\begin{align}
Q_L^2 \eqos 0.
\end{align}
 
\end{cor}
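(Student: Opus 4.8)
The plan is to deduce the statement directly from the nilpotency of the derivation $[\textbf{Q}_L, -]$ established in Theorem \ref{QL2=0}, without ever computing $\hat{q}\textbf{Q}$ explicitly. The guiding observation is that $\textbf{Q}_L$ carries ghost number $1$ and is therefore Grassmann-odd, so that $\textbf{Q}_L^2 = \tfrac{1}{2}[\textbf{Q}_L, \textbf{Q}_L]$ in terms of the graded commutator. The task is then to relate the \emph{single} object $\textbf{Q}_L^2$ to the \emph{iterated} derivation $[\textbf{Q}_L, [\textbf{Q}_L, -]]$ that Theorem \ref{QL2=0} controls.

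First I would feed the graded Jacobi identity \eqref{Jacobi-commutator} with $\mathcal{O}_1 = \mathcal{O}_2 = \textbf{Q}_L$ and $\mathcal{O}_3 = X$ an arbitrary element of $\hat{\bF}_L$. Using $\eps_{\textbf{Q}_L}=1$, the graded antisymmetry $[X,\textbf{Q}_L] = -(-1)^{\eps_X}[\textbf{Q}_L,X]$, and the fact that $[\textbf{Q}_L,\textbf{Q}_L]=2\textbf{Q}_L^2$ is Grassmann-even, the three terms collapse to the clean identity
\begin{equation}
[\textbf{Q}_L, [\textbf{Q}_L, X]] \eqos [\textbf{Q}_L^2, X],
\end{equation}
valid for every $X$ (the same relation also drops out of a direct expansion of the two nested graded commutators). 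By Theorem \ref{QL2=0} the left-hand side vanishes modulo $\mathcal{J}_0$ on all generators $T_{L,n}(\mathcal{O}_1\otimes\dots\otimes\mathcal{O}_n)$ of $\hat{\bF}_L$; since $[\textbf{Q}_L^2,-]$ is itself a graded derivation, its vanishing on generators propagates to the whole algebra. Hence $\textbf{Q}_L^2$ lies in the centre of $\hat{\bF}_L$ modulo $\mathcal{J}_0$.

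The remaining step — upgrading centrality to genuine vanishing $\textbf{Q}_L^2 \eqos 0$ — is where I expect the real content to sit, and it is the main obstacle. I would argue it via the commutation structure of the algebra: an element commuting (modulo $\mathcal{J}_0$) with every basic interacting field $\Phi^i_L$ is, by the canonical relations \eqref{Comm-rel-1} and axiom \textbf{T8} together with the non-degeneracy of the causal propagator $\Delta^{ij}$, necessarily field-independent on-shell, i.e. a multiple of $\textbf{1}$. Since $\textbf{Q}_L^2$ carries ghost number $2$ while $\textbf{1}$ carries ghost number $0$, the $\Z$-grading by ghost number forces that multiple to be zero, so $\textbf{Q}_L^2 \eqos 0$. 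The delicate point here is the on-shell handling of the anti-field dependence, for which one genuinely needs to pass to the quotient $\hat{\bF}_L = \hat{\bW}_L/\mathcal{J}_0$; this is exactly where the hypothesis that we work modulo $\mathcal{J}_0$ enters.

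It is worth contrasting this route with the naive one. From the master formula \eqref{[QL,OL]} with $\mathcal{O}=\textbf{Q}$ one has immediately $\textbf{Q}_L^2 = \tfrac{i\hbar}{2}\big(\hat{q}\textbf{Q}\big)_L$, so one could instead try to prove $(\hat{q}\textbf{Q})_L \eqos 0$ head-on. That, however, would require detailed, case-dependent information about the BRST current $\textbf{J}$ (and a higher cohomology input), which is precisely the input the present argument is designed to avoid. The Jacobi/centrality route keeps the proof model-independent, relying only on the nilpotency of the derivation — itself a consequence of the cohomological triviality of $H^4_1(\hat{s}|d,M)$ through the consistency condition \eqref{master-cons-cond} — and on the elementary algebraic fact that a central element of nonzero ghost number vanishes.
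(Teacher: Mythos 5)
Your proposal is correct and follows essentially the same route as the paper: reduce $[\textbf{Q}_L,[\textbf{Q}_L,-]]$ to $[\textbf{Q}_L^2,-]$ via the graded Jacobi identity and the oddness of $\textbf{Q}_L$, invoke Theorem \ref{QL2=0} to conclude that $\textbf{Q}_L^2$ is central modulo $\mathcal{J}_0$, and then kill the resulting multiple of $\textbf{1}$ by the ghost-number grading. The only cosmetic difference is that the paper cites Proposition 2.1 of \cite{Hollands:2001nf} for the step ``central $\Rightarrow$ multiple of the identity,'' where you sketch the underlying non-degeneracy argument yourself.
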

\begin{proof}
Since $Q_L$ is odd, we have 
\begin{align}
[Q_L, Q_L] \eqos 2 Q_L \star Q_L = 2 Q_L^2.
\end{align}
From the nilpotency of $[Q_L, -] $ and the graded Jacobi identity,  we have for all $T_{L,n}(\mathcal{O}_1 \otimes \dots \otimes \cO_n) \in \hat{\bF}_L$
\begin{align} \nn
0 & \eqos [Q_L, [Q_L, T_{L,n}(\mathcal{O}_1 \otimes \dots \otimes \cO_n)]] \\ \nn
& = \frac{1}{2} \big[[Q_L, Q_L], T_{L,n}(\mathcal{O}_1 \otimes \dots \otimes \cO_n)\big]\\
  &= [Q_L^2, T_{L,n}(\mathcal{O}_1 \otimes \dots \otimes \cO_n)].
\end{align}
By Proposition 2.1 in \cite{Hollands:2001nf}, it follows that $Q_L^2$ must be a multiple of the identity element \textbf{1} in $\hat{\bF}_L$
\begin{align}
Q_L^2 \eqos  k \textbf{1},
\end{align}
for some constant scalar $k$ with ghost number 2 made out of background fields. However, there is no such a constant in the theory, thus $k=0$. 
\end{proof}
 
Therefore, the only requirements for a positive-definite Hilbert space representation of $\bF_L$ which must be fulfilled are the positivity conditions. 
In \cite{Hollands:2007zg}, it is shown that the positivity condition is satisfied in the following setting. Let $U \subset \Sigma$ be an open domain in a Cauchy surface $\Sigma$, with a smooth boundary $\partial U$ and compact closure and vanishing first deRahm cohomology $H^1(U, d)$ \footnote{These conditions on $U$ are imposed in order to exclude the existence of ``zero-modes''.} and let $D(U) \subset M$ be its domain of dependence. Then, within $D(U)$, Hadamard two point functions $\omega$ and $\omega^{\mu \nu}$ satisfying the consistency relations \eqref{omega-consistency} are explicitly constructed.

\section{Comparison with other approaches}\label{comparison}

The local and covariant approach to gauge theories in curved space-times that we developed in the body of the thesis, differs from the other approaches in the literature in many respects that are explained in the text. In this chapter, we compare our approach with three different formulations of gauge theories and outline a number of (formal) similarities between those approaches and ours. 

\subsection{Path integral formalism}\label{BVPI}
There are obvious differences between our approach and the Path integral approach. In the latter, one defines the generating functional for the correlation functions of the Euclidean theory via an integral over the infinite dimensional manifold of all field configurations. Such a path integral is a priori only formal in many respects: the measure on the infinite dimensional space does not exists, and if one wants to make sense of it as a formal power series in the coupling constant each individual term suffers from both IR and UV divergences. Even ignoring these difficulties, the path integral is a state-dependent quantity as it generates correlation functions in a specific (globally defined) state. This makes it difficult to appreciate the local and covariant nature of the renormalization ambiguities \citep{Hollands:2001fb} in arbitrary curved space-times with no preferred state. 
Despite those differences, there are formal similarities between the two approaches in studying gauge theories. They both lead to a kind of algebraic structure which is called the Batalin-Vilkovisky (BV) algebra. Let us briefly outline the types of arguments in the path integral formalism which result in the emergence of the BV-structure, and then compare that with our formalism.

\subsubsection*{Batalin-Vilkovisky formalism}
In the path integral approach to gauge theories (see e.g. \cite{henneaux1992quantization}, \cite{Weinberg:1996kr}), one modifies the ``measure'' by higher order terms in $\hbar$ to obtain a ``gauge invariant measure $\mathcal{D} \phi$''. This corrections can be equivalently seen as quantum correction to the classical action $S$ and classical operators $\mathcal{O}$, i.e.
\begin{equation} \label{<O>}
\langle \mathcal{O} \rangle = \int D \phi \mathcal{O}_\hbar e^{- i S_\hbar/\hbar},
\end{equation}
where $\mathcal{O}_\hbar = \mathcal{O} + \hbar (\mathcal{O}_\hbar)_1 + \hbar^2 (\mathcal{O}_\hbar)_2 + \dots$ and where $S_\hbar= S + \hbar (S_\hbar)_1 + \hbar^2 (S_\hbar)_2 + \dots$ is called the \emph{quantum action}. The precise form of $W$ is determined be requiring it to be a solution to the quantum master equation (QME):
\begin{equation}
(S_\hbar, S_\hbar)= - 2 i \hbar \Delta S_\hbar,
\end{equation}
which ensures that $\langle \mathcal{O} \rangle$ is independent of the gauge-fixing. In the above equation, $\Delta= (-1)^{\eps}\frac{\delta}{\delta \Phi \delta \Phi^\ddag}$ is the so-called \emph{BV laplacian}. Using $\Delta$, one defines the \emph{quantum differential} 
\begin{align}\label{sigma}
\sigma = (S_\hbar, -) - i \hbar \Delta,
\end{align}
which satisfies the following properties:
\begin{itemize}
\item[(i)] $\sigma^2=0,$
\item[(ii)] $\sigma (\mathcal{O}_1 \mathcal{O}_2) = (\sigma \mathcal{O}_1) \mathcal{O}_2 + \mathcal{O}_1 (\sigma \mathcal{O}_2) + (\mathcal{O}_1 , \mathcal{O}_2)$,
\item[(iii)] $\sigma(\mathcal{O}_1, \mathcal{O}_2)= (\sigma \mathcal{O}_1, \mathcal{O}_2) + (\mathcal{O}_1, \sigma \mathcal{O}_2)$.
\end{itemize}
Once the quantum Master equation is satisfied, one can prove the following \emph{Ward identity for correlation functions} 
\begin{equation}\label{SDsigmapsi}
\sum_{i=1}^n \langle \mathcal{O}_1 \dots \sigma \mathcal{O}_i \dots \mathcal{O}_n \rangle = 0,
\end{equation}
which for $n=1$ is reduced to 
\beq\label{<sigmaO>=0}
\langle \sigma \mathcal{O}\rangle =0.
\eeq 

We now point out the following analogies and differences between the path integral approach and ours.

\begin{itemize}
\item[(1)] The definitions of our quantum BRST differentials $\hat{q}$ defined in equation \eqref{hat-q} and $\sigma$ given in equation \eqref{sigma} are similar in that they both are given by classical BRST $\hat{\hs}$ plus higher $\hbar$ corrections. However, they are different in that the quantum corrections for $\sigma$ is given by $\hbar\big(((S_\hbar)_1, -) + i \Delta) + \sum_{n \ge 2}\hbar^n ((S_\hbar)_n, -)$ which are ill-defined since $\Delta$ is a singular operator and $W_n$ are in general IR divergent, whereas the quantum corrections for $\hat{q}$ are given by $ {A}_L( - )$ defined in equation \eqref{hat-A-n} which is a well-defined local operator. Indeed, as first noted by authors of \cite{Fredenhagen:2011mq} (however, see section \ref{BVpAQFT} below), ${A}_L( -)$ may be seen as the ``renormalized BV laplacian''.

\item[(2)] The properties (i), (ii), (iii) of $\sigma$ define the BV algebra. Evidently, (i) and (iii) are similar to properties \eqref{q2=0} and \eqref{q(O1,O2)} of $\hat{q}$, and property (ii) is similar to \eqref{QL,O1O2}.  The difference between them is the presence of quantum anti-bracket $(-, -)_\hbar$ which differs from the classical anti-bracket by terms of order $O(\hbar^2)$ (which are given by \eqref{q-anti-bracket}). Therefore, one may see our BV data \eqref{q2=0}, \eqref{q(O1,O2)}, and \eqref{QL,O1O2} as defining the ``renormalized BV algebra''.

\item[(3)] The QME is in general violated by potential anomalies, and as it turns out in the path integral approach, such anomalies belong to the same cohomological class as $A(e_\otimes^L)$ (i.e. $H_1^4(\hat{\hs}| d, M)$). Therefore, our proof that $A(e_\otimes^L)=0$, and hence the Ward identity \eqref{WI} holds, may be taken as the counterpart for the proof that the QME holds in the path integral framework. 

\item[(4)] As we have proved in the present work, from our Ward identity \eqref{WI} it ultimately follows that $[{Q}_L, -]$ is a nilpotent derivation and hence one can define the algebra of physical observables as the cohomology of $[{Q}_L, -]$. Of course observables in the image of $\hat{q}$ are quotiented out and their expectation value in a physical state $|\Psi \rangle \in \mathcal{H}_{L}$ (see Section \ref{Q-g-inv-scheme-indep}) vanishes:
\beq
\langle \hat{q} \mathcal{O} \rangle_\Psi =0.
\eeq 
This fact is clearly comparable with \eqref{<sigmaO>=0} which states that the expectation value of observables in the image of $\sigma$ vanishes if the quantum action satisfies the QME. For $n$ operator insertions, we obtain from \eqref{[QL,O1On] -q-(-,-)}
\begin{align}\nn 
& \sum_{i=1}^n T \big\langle  \mathcal{O}_1 \dots \hat{q} \mathcal{O}_i \dots \mathcal{O}_n  \big \rangle_\Psi   +  \sum_{1 \le i <j}^{n} T\langle \mathcal{O}_1 \dots (\mathcal{O}_i, \mathcal{O}_j)_\hbar  \dots   \mathcal{O}_n   \rangle_\Psi \\ \label{<WI>}
& +  \sum_{I} (\frac{\hbar}{i})^{|I|-1}  T \big\langle  {A}_{L,|I|}(\bigotimes_{i \in I} \mathcal{O}_i) \otimes \prod_{j \in I^c}  \mathcal{O}_j  \big\rangle_\Psi =0, 
\end{align}
where $T\langle  \mathcal{O}_1 \dots  \mathcal{O}_n  \rangle_\Psi := \langle  \Psi | T_{L, n}(\mathcal{O}_1\otimes \dots \otimes \mathcal{O}_n) | \Psi \rangle $ are renormalized time-ordered $n$ point functions of the theory in the state $|\Psi \rangle \in \mathcal{H}_{L}$. Comparison with \eqref{SDsigmapsi} reveals that the identity \eqref{<WI>} involves quantum corrections to  \eqref{SDsigmapsi} and may be interpreted as the  \emph{renormalized Ward identity for correlation functions}. 

\end{itemize}

\subsection{Batalin-Vilkovisky formalism in the pAQFT approach}\label{BVpAQFT}

The closest approach to ours is the BV formalism in the framework of perturbative algebraic quantum field theory (pAQFT) developed in \cite{Fredenhagen:2011mq}. While this approach is in the same sprit as ours, there are still notable differences which we point out here.

\begin{itemize}
\item[(1)] In the pAQFT approach, contrary to \eqref{hatS0+I},  one makes a different split of the action $\hat{S}$ into free $\tilde{S}_0$ and interacting $\tilde{S}_\ia$ parts by putting all terms depending on the anti-fields into $\tilde{S}_\ia$. Although this does not affect the classical BRST differential, i.e. $\hat{\hs} = (\tilde{S}_0 + \tilde{S}_1, -) = (S_0 + S_1, -)$, the free action $\tilde{S}_0$ only acts on anti-fields, i.e. $(\tilde{S}_0, \Phi^\ddag(x))= \frac{\delta \tilde{S}_0}{\delta \Phi(x)}$ and $(\tilde{S}_0, \Phi)=0$.
One then formulates a similar anomalous Ward identity in the form
\begin{equation}\label{AnWIBV}
(\tilde{S}_0, T(e_\otimes^{iF/ \hbar})) = \frac{i}{\hbar}\left(T( \frac{1}{2}(\tilde{S}_0 + F, \tilde{S}_0 + F)\otimes e_\otimes^{iF/ \hbar}) - i\hbar T( \Delta(F)\otimes e_\otimes^{iF/ \hbar}) \right),
\end{equation}
where $\Delta(F)$ defines the anomaly. Despite the obvious similarity to \eqref{AnWI}, a key difference is that, the left hand side of the above identity vanishes on-shell, contrary to \eqref{AnWI}. To clarify the issue let us elaborate on the proof of our anomalous Ward identity  given in \cite{Hollands:2007zg}. One first decomposes $\hat{\hs}= s + \sigma$ where $s$ is the BRST differential which only acts on fields, and $\sigma$ is the Koszul-Tate differential which acts only on anti-fields. The anomalous Ward identity is then obtained by adding two different identities: (1) an identity for $ s_0 T(e_\otimes^{iF/ \hbar})$ which gives an anomaly term $\delta(e_\otimes^F)$ and (2) an identity, originally derived in \cite{Brennecke:2007uj}, for $ \sigma_0 T(e_\otimes^{iF/ \hbar})$ which gives an anomaly term $\Delta(e_\otimes^F)$ . One then defines $A(e_\otimes^F)= \delta(e_\otimes^F) + \Delta(e_\otimes^F)$ and obtains \eqref{AnWI}. It seems that the identity \eqref{AnWIBV} is the second identity mentioned above which realizes the free Koszul-Tate differential on the local S-matrix.

\item[(2)] The quantum BV operator $\tilde{s}$ in the pAQFT approach is defined by
\begin{equation}\label{tildes}
 \tilde{s}  := R_{V}^{-1} \circ  (\tilde{S}_0, - ) \circ R_V,
\end{equation}
where $R_V = T(e_\otimes^{i V/ \hbar})^{-1} \star T(e_\otimes^{i V/ \hbar} \otimes -)$ is viewed here as an operator (the quantum M{\"o}ller operator) which takes a functional $\mathcal{O}$ and gives $\mathcal{O}_V$.  This definition differs from $\hat{q}$ for the following reason. From the nilpotency of $(\tilde{S}_0, -)$, it follows that $\tilde{s}^2=0$, which means that $\tilde{s} $ is always nilpotent by construction, irrespective of the presence or absence of an anomaly. Evidently, this is different from our quantum BRST differential $\hat{q}$ which is nilpotent only if $A(e_\otimes^L)=0$. 
Nevertheless, using the QME, one can show that $\tilde{s}$ takes the following form: 
\begin{equation}
\tilde{s} \mathcal{O} := \hat{\hs}\mathcal{O} - \Delta_V(\mathcal{O}), 
\end{equation}
which is analogous to our definition of $\hat{q}$, except for the difference between $\Delta_V(\mathcal{O})$ and ${A}_{L,1}(\mathcal{O})$ which was explained in point 1.

\item[(3)] In \cite{Rejzner:2013ak}, it is shown that the quantum BV operator, on-shell, can be written as the commutator with an interacting charge $Q$, i.e.
\begin{align}\label{[RVO, RVQ]}
[R_V(Q), R_V(\mathcal{O})] = i \hbar R_V(\tilde{s} \mathcal{O}), \quad \text{mod } \mathcal{J}_0. 
\end{align}
As pointed out above, on-shell, where the above formula is valid, the right hand side vanishes. Therefore, the formula \eqref{[RVO, RVQ]} seems to express that the charge $R_V(Q)$ commutes with \emph{all} interacting fields on-shell. 
This is of course a plausible statement for $Q$ being the generator of the Koszul-Tate differential. However, this is obviously different from identity \eqref{[QL,O]} which expresses that the interacting BRST charge ${Q}_L$ only commutes with interacting fields $\mathcal{O}_L$ for which $\hat{q}\mathcal{O}=0$.
Consequently, being in the cohomology of $[R_V(Q), -]$ does not seem to provide a criterion for selecting the physical observables and for selecting the physical states of the theory in a Hilbert space representation. 

\end{itemize}

\subsection{Renormalization group flow equation framework and Ward identities}
The renormalization group Flow equation framework \cite{Polchinski:1983gv, Wetterich:1992yh, Muller:2002he, Kopper:2005jq} is a mathematically rigorous framework to the renormalization of quantum fields in flat Euclidean field theories. The application of this approach to gauge theories is worked out in \cite{frob2016all} where a proof of perturbative renormalizability is given in the sense that all correlation functions of arbitrary composite local operators fulfil suitable Ward identities. 

In \cite{frob2016all}, similar operators to our $\hat{q}$ and $(-, -)_\hbar$ appears  in analysing the gauge invariance of the Euclidean theory. These are given by $\hat{q}^{\text{E}} \cO= \hat{\hs} \cO + \hat{A}_1(\cO)$ and $(\cO_1, \cO_2)^{\text{E}}_\hbar = (\cO_1, \cO_2) + \hat{A}_2(\cO_1 \otimes \cO_2)$. Here $\hat{A}_1$ and $\hat{A}_2$ are of order $O(\hbar)$ and supported on the diagonal (contact terms) and hence are analogous to our ${A}_{L,1}$ and ${A}_{L,2}$. 

However, one major difference with our approach is that in the flow equation framework no analogue of our anomalies with more than two insertions appear at all in the approach of \cite{frob2016all}. This fact has two consequences. First, contrary to \eqref{An-Jacobi-ferm}, the quantum anti-bracket satisfies the usual (classical) Jacobi identity \eqref{An-Jacobi-ferm} without ${A}_{L,3}$ terms. The second consequence can be seen by looking at the Ward identity which expresses gauge invariance in this framework. This is an identity for the vacuum expectation values $\langle  \mathcal{O}_1 \dots \mathcal{O}_n\rangle^{\text{E}}_0$ of $n$ operators $\mathcal{O}_1, \dots, \mathcal{O}_n$ (Schwinger functions), and takes the form
\begin{align}
 \sum_{i=1}^n \langle  \mathcal{O}_1 \dots \hat{q}^{\text{E}} \mathcal{O}_i \dots \mathcal{O}_n\rangle^{\text{E}}_0 +  \sum_{1 \le i <j}^{n} \langle \mathcal{O}_1 \dots (\mathcal{O}_i, \mathcal{O}_j)^{\text{E}}_\hbar  \dots   \mathcal{O}_n\rangle^{\text{E}}_0= 0.
\end{align}
This is obviously similar to the identity \eqref{<WI>} with the difference that the terms containing ${A}_{L,n}$ with $n \ge 3$ are absent.

This difference might be a consequence of different renormalization conditions that are imposed in the two approaches. In fact in our approach, besides the specific renormalization scheme that we choose in which the anomaly is absent and the BRST current is conserved, we do not impose any further renormalization condition. However, in the flow equation approach, in deriving the Ward identities one imposes specific boundary conditions for the flow equation which amounts to choosing a specific renormalization scheme. In this respect, our approach seems more general in not restricting to a specific renormalization scheme. 
Nevertheless, if ${A}_{L,n}$ with $n \ge 3$ can be made not to appear at all by a choice of renormalization condition in one approach, presumably one has to be able to pass to a renormalization scheme in our approach in which ${A}_{L,n}=0$ for $n \ge 3$. However, to our knowledge, this does not seem to be possible. 

\section{Outlook}

In this paper, we have developed new algebraic structures in quantum gauge theories which enables one to construct the algebra of renormalized gauge-invariant observables in a model-independent fashion. Such structures, namely quantum BRST differential $\hat{q}=\hat{s} + O(\hbar)$ and quantum anti-bracket $(-,-)_\hbar = (-,-) + O(\hbar)$ are indeed analogous to the classical ones modified with certain quantum corrections. The new structures seem to provide sufficient tools to investigate further open issues in gauge theories such as the issue of Gauge-fixing independence as we explain in the following.

In section \ref{Classical}, we pointed out that for perturbative quantization of gauge theories one has to choose a particular way to fix the gauge in order to render the equations of motion hyperbolic. The natural question is, then, whether and in which sense different quantum field theories defined with different (in general, non-linear) gauge-fixings are equivalent?

To be more specific, different gauge-fixings may arise, for instance, from a family gauge-fixing fermions $\psi(\xi)$, for $\xi \in \mathbb{R}$ with
\begin{equation}
\psi(\xi)= \int_M \bar{C}_I(\nabla^\mu A_\mu^I + \frac{\xi}{2} B^I),
\end{equation}
which gives rise to the family of linear covariant gauges. $\xi=1$ corresponds to the Feynman gauge (which was considered in this work) and the limit $\xi \rightarrow 0$ corresponds to the Landau gauge. In this case, the question of the equivalence of quantum field theories defined with $\psi(\xi)$ and $\psi(\xi')$ may be stated as follows. At the classical level, there exists an isomorphism $\mathcal{O} \mapsto e^{(-,\delta\psi)} \mathcal{O}$, with $\delta \psi = \psi(\xi') - \psi(\xi)$ between the cohomologies of the BRST differentials $\hat{s}_\xi$ and $\hat{s}_{\xi'}$ which ensures that the observables of the two theories are in one-to-one correspondence.  Based on the analogy between classical and quantum structures worked out in the paper, one can then formulate \cite{Taslimi-gauge-ind} the gauge-fixing independence at the quantum level as the existence of an isomorphism $\mathcal{O}_{L(\xi)} \mapsto e^{(-,\delta \psi)_\hbar} \mathcal{O}_{L(\xi')}$ between $\hat{q}_\xi$ and $\hat{q}_{\xi'}$ cohomologies.

\section*{Acknowledgement}
This work is part of the author's PhD dissertation. I gratefully acknowledge financial
support by the Max Planck Institute for Mathematics in the Sciences and its International
Max Planck Research School (IMPRS). I am very grateful to my supervisor Stefan Hollands for suggesting me the subject and for his guidance and many fruitful discussions. I would like to thank Jochen Zahn for a careful reading of an earlier version of this work, many stimulating discussions, important suggestions for improving the work and pointing out an error in the first preprint version. I would also like to thank Markus Fr{\"o}b for many fruitful discussions specially for bringing to my knowledge the definition of the quantum anti-bracket in the flow equation framework. Furthermore, discussions with Kasia Rejzner about the BV formalism is gratefully acknowledged.    

\appendix
\section{Graded symmetries and derivations}\label{app1}
 In this appendix, we derive the identities \eqref{AnWI}, \eqref{int-AWI} and \eqref{master-cons-cond} when $F_i$ are either bosonic or fermionic. The starting point for that is the anomalous Ward identity. However, the anomalous Ward identity, as stated in Theorem \ref{AnWIthrm}, only applies to bosonic $F$'s, as for fermionic ones $(\hat{S}_0 +F, \hat{S}_0 + F) =(\hat{S}_0 , F) + (F,F) + (F, \hat{S}_0)=0$. Let $\eps_i$ be the Grassmann parity of $F_i$. This means $\eps_i = 0 $ mod $2$ if $F_i$ is bosonic (even), and $\eps_i = 1$ mod $2$ if $F_i$ is fermionic (odd). It turns out that the anomalous Ward identity for local functionals $F_i$ with Grassmann parity $\eps_i$ takes the form
\begin{align} \nonumber
\hat{s}_0   T_n(F_1 \otimes \dots \otimes F_n) &= \sum_{k=0} (-1)^{\sum_{l <k} \eps_l } T_n(F_1 \otimes \dots \otimes \hat{s}_0 F_k \otimes \dots \otimes F_n)\\ \nonumber
& + \frac{\hbar}{i} \sum_{I_2}  (-1)^{\eps_{I_2}+ \eps_i} T_{n-1}  \big( {(F_i, F_{j})}_{i,j \in I_2} \otimes    \bigotimes_{k \in I_2^c}  F_k \big)\\ \label{AnWI-order-n-fermionic}
& + \sum_{I} (\frac{\hbar}{i})^{|I|-1} (-1)^{\eps_{I}} T_{n-|I|+1} \big(A_{|I|}(\bigotimes_{i \in I} F_i) \otimes \bigotimes_{j \in I^c}  F_j \big),
\end{align}
where $I$ is a non-empty and ordered partition of the set $\{ 1, 2, \dots, n \}$ and $I^c$ is the complement partition and $|I_2|=2$, and  where $\eps_{I_2}$ and $\eps_I$ are sings that are obtained by reordering $F_i$'s into $F_1  \dots  F_n$, i.e. 
\begin{equation}
\prod_{j \in I_2} F_j \prod_{k \in I_2^c} F_k = (-1)^{\eps_{I_2} } F_1 \dots F_n,
\end{equation} 
\begin{equation}
\prod_{j \in I} F_j \prod_{k \in I^c} F_k = (-1)^{\eps_{I}} F_1 \dots F_n.
\end{equation} 
The above identity is derived by (1) starting from \eqref{AnWI} for bosonic $g_1 F_1, \dots ,g_n F_n$, where $g_i$ are anti-commuting numbers, $g_i F_j = (-1)^{\eps_i \eps_j} F_j g_i$, with the same Grassmann parity as $F_i$, and (2) using the following identities 
 \begin{align}\label{(g1O1,g2O2)}
(g_1 F_1, g_2 F_2) &= (-1)^{(\eps_1 +1)\eps_2} g_1 g_2 (F_1, F_2),\\ \label{An(g1O1-gnOn)}
A_n(g_1 F_1 \otimes \dots \otimes g_n F_n) & = (-1)^{\sum_i \eps_i + \sum_{i<j} \eps_i \eps_j} g_1 \dots g_n A_n( F_1 \otimes  \dots \otimes F_n),
\end{align}
which are consequences of the symmetry property \eqref{graded-symm} and the graded symmetry of $A_n$.

Similarly the interacting anomalous Ward identity \eqref{int-AWI} for all \emph{$F_i \in \textbf{P}(M)$} with Grassmann parity $\eps_i$ takes the form
\begin{align} 
\nonumber
  \frac{1}{i \hbar}[{Q}_L,T_{L,n}(F_1 \otimes \dots  \otimes F_n)] = &  \sum_{k=0} (-1)^{\sum_{l <k} \eps_l } T_{L,n}(  F_1  \otimes \dots \otimes \hat{\hs} F_i \otimes \dots \otimes F_n)\\ \nonumber
& + \frac{\hbar}{i} \sum_{I_2}  (-1)^{\eps_{I_2}+ \eps_i} T_{L,n-1}(  {(F_i, F_j)}_{i, j \in I_2} \otimes \bigotimes_{k \in I_2^c}  F_k)\\  \label{[QL,O1On]-fermionic}
&+ \sum_{I} (\frac{\hbar}{i})^{|I|-1} (-1)^{\eps_{I}} T_{L, n - |I|+1} \Big(  {A}_{L,|I|}(\bigotimes_{i \in I} F_i) \otimes \bigotimes_{j \in I^c}  F_j \Big).
\end{align}

Finally, the interacting consistency condition \eqref{master-cons-cond} for all \emph{$F_i \in \textbf{P}(M)$} with Grassmann parity $\eps_i$ takes the form
\begin{align} \nonumber
 \hat{\hs} A_{L,n}&(  F_1  \otimes \dots \otimes F_n) \\ \nn
& +\sum_{i} (-1)^{\sum_{j} \eps_j + \eps_i(1+ \eps_{i+1} + \dots + \eps_n)} \big( A_{L,n-1}(F_{1} \otimes \dots \otimes F_{i-1} \otimes F_{i+1} \otimes \dots \otimes F_{n}), F_{i} \big) \\ \nonumber
  & +  \sum_{i=0} (-1)^{\sum_{l <i} \eps_l } A_{L,n}(  F_1  \otimes \dots \otimes \hat{\hs} F_i \otimes \dots \otimes F_n)\\ \nonumber
& + \sum_{I_2}^{n} (-1)^{\eps_{I_2}+ \eps_i} A_{L,n-1}(  {(F_i, F_j)}_{i, j \in I_2} \otimes \bigotimes_{k \in I_2^c}  F_k)\\  \nonumber
&+\sum_{I}  {A}_{L,n - |I|+1} (-1)^{\eps_{I}} \Big(  {A}_{L,|I|}(\bigotimes_{i \in I} F_i) \otimes \bigotimes_{j \in I^c}  F_j \Big)=0,\\ \label{qAO1On-ferm}
\end{align}
where the sum runs over all non-empty subsets $I$ of the set $\{ 1, 2, \dots, n \}$, $I^c$ is the complement subset and $|I_2|=2$.

\bibliographystyle{JHEP}
\bibliography{qBRSTcharge}
\end{document}